\documentclass[11pt,reqno]{amsart}
\usepackage[margin=1in]{geometry}
\usepackage{amsmath,amsthm,amssymb}
\usepackage{bm}
\usepackage{mathrsfs} 
\usepackage{enumitem}
\usepackage{mathrsfs} 
\usepackage{multirow}
\usepackage{booktabs}
\usepackage{algorithm,algorithmic}

\usepackage{pgf,tikz,pgfplots}
\usetikzlibrary{arrows}
\usepackage[unicode]{hyperref}
\hypersetup{colorlinks=true, linkcolor=blue, citecolor=blue, filecolor=blue, urlcolor=blue}
\usepackage{chngcntr}
\usepackage{mathabx}
\usepackage{bm}
\usepackage{etoolbox}
\usepackage{enumitem}
\usepackage{subcaption}
\usepackage{seqsplit}

\usepackage[comma,sort&compress, numbers]{natbib}

\newcommand{\E}{\mathbb{E}}
\renewcommand{\P}{\mathbb{P}}
\newcommand{\R}{\mathbb{R}}
\newcommand{\cB}{\mathcal{B}}
\newcommand{\cE}{\mathcal{E}}
\newcommand{\cG}{\mathcal{G}}
\newcommand{\cN}{\mathcal{N}}
\newcommand{\cS}{\mathcal{S}}
\newcommand{\cV}{\mathcal{V}}
\newcommand{\bA}{\bm{A}} 
\newcommand{\bX}{\bm{X}}
\newcommand{\bY}{\bm{Y}}
\newcommand{\bZ}{\bm{Z}}

\newcommand{\bx}{\bm{x}}

\newcommand{\sE}{\mathscr{E}}
\newcommand{\sX}{\mathscr{X}}

\newcommand{\gX}{\tilde{\mathscr{X}}}

\newcommand{\ra}{\rightarrow}
\newcommand{\pto}{\overset{P}{\rightarrow}}
\newcommand{\tr}{\mathrm{Tr}}
\renewcommand{\d}{\mathrm{d}}

\newcommand{\Var}{\mathrm{Var}}
\newcommand{\vep}{\varepsilon}

\numberwithin{equation}{section} 

\newtheorem{theorem}{Theorem}[section]

\newtheorem{corollary}{Corollary}[section]

\newtheorem{lemma}{Lemma}[section] 
\newtheorem{proposition}{Proposition}[section]

\newtheorem{assumption}{Assumption}[section]

\theoremstyle{definition} 

\newtheorem{remark}{Remark}[section]
\newtheorem{example}{Example}[section]
\DeclareMathOperator{\Tr}{Tr} 
\newcommand{\one}{\bm{1}}

\allowdisplaybreaks

\begin{document}

\title[Conditional Mean Independence, Global Sensitivity Analysis, and Variable Screening]{Conditional Mean Independence, Global Sensitivity Analysis, and Variable Screening Using nearest neighbor Graphs}
\author[Chatterjee, Niu, and Bhattacharya]{Anirban Chatterjee, Ziang Niu, and Bhaswar B. Bhattacharya} 
\address{Department of Mathematics and Statistics\\ Boston University\\ Boston\\ MA 02215\\ United States}
\email{achatter@bu.edu}
\address{Department of Statistics and Data Science\\ University of Pennsylvania\\ Philadelphia\\ PA 19104\\ United States}
\email{ziangniu@wharton.upenn.edu}
\address{Department of Statistics and Data Science\\ University of Pennsylvania\\ Philadelphia\\ PA 19104\\ United States}
\email{bhaswar@wharton.upenn.edu}

\begin{abstract}  
Quantifying how much of the variation in a response is explained by its conditional mean is central to many statistical problems. In this paper, we develop a unified framework for this problem, centered on a normalized conditional mean discrepancy that coincides with the classical Sobol’ index for univariate responses and its natural trace-based extension for multivariate responses. We propose a simple estimator of this measure based on nearest-neighbor graphs that can be computed in near-linear time. We establish its consistency and derive its rate of convergence. Further, under the null hypothesis of conditional mean independence, a studentized version of the statistic is asymptotically standard normal. This leads to a computationally efficient test for conditional mean independence that attains the correct asymptotic level and is universally consistent, without requiring bootstrap calibration or sample splitting. Building on the same estimator, we develop a model-free sequential variable-screening procedure that selects a sufficient set with high probability and an exponential error bound. We also discuss extensions of the framework to quantifying interaction effects through higher-order Sobol’ indices. Simulations and real-data experiments demonstrate strong power, effective variable screening, and substantial computational gains over existing methods.   
\end{abstract}

\keywords{Nonparametric inference, random geometric graphs, Sobol' indices, Stein's method, variable importance. }

\maketitle


\section{Introduction}

Understanding how a collection of covariates influences a response variable is central to many statistical problems, including variable selection \citep{szekely2014partial,park2015partial,yan2021rare,lavergne2000nonparametric,tang2018testing}, feature screening \citep{variablescreening,fan2014nonparametric,zhang2017feature,yan2021rare,shao2014martingale,li2012feature}, graphical modeling \citep{li2018bayesian,li2019expectation,gan2019bayesian,fan2020projection}, and applications in the biological sciences \citep{subramanian05,efron07,newton07}. This has motivated the study of feature importance and dependence measures, an area that has attracted renewed attention in recent years owing to the increasing reliance on black-box models for handling the complexity of large-scale modern problems. While a substantial body of work has focused on general notions of dependence (see \citep{josse2016measuring,chatterjee2024survey} for recent surveys), in regression-type problems the primary object of interest is often the conditional mean, since it captures the systematic effect of the covariates on the response \cite{cook2002dimension} (see \cite{verdinelli2024feature} for a recent survey of various notions of feature importance in regression). 
%
%
A natural diagnostic step toward understanding whether a collection of covariates $\bX$ affects an outcome variable $\bY$ is to determine whether $\bX$ contributes to the conditional mean of $\bY$. This entails testing the null hypothesis: 
\begin{align}\label{eq:H0}
    H_0 : \E[\bY\mid \bX] = \E[\bY] \text{ almost surely}, 
\end{align}
based on i.i.d. samples $\{(\bY_i, \bX_i)\}_{1 \leq i \leq n}$. This is the problem of testing conditional mean independence, which may be viewed as the nonparametric analogue of the regression goodness-of-fit problem. Several approaches to testing this hypothesis have been developed over the years. One approach is to estimate a (weighted) $L_2$-discrepancy between $\E[\bY\mid \bX]$ and $\E[\bY]$ using techniques from kernel density estimation \cite{lavergne2000nonparametric,fan1996consistent,delgado2001significance,zhu2018dimension,ait2001goodness,tian2025variation}. However, in most cases, for such methods convergence rate of the test statistic under $H_0$ depends on the kernel bandwidth and typically worsens with dimension. A major advance was the introduction of the martingale difference divergence (MDD) \citep{shao2014martingale}, an elegant extension of the celebrated distance covariance \citep{szekely2007measuring} for measuring conditional mean dependence. However, MDD-based tests typically require bootstrap calibration \cite{lai2021kernel,lee2020testing}, because their null distributions are non-pivotal and do not admit closed-form critical values. Another emerging approach, which is particularly well suited to high-dimensional problems, leverages machine-learning algorithms to estimate the conditional mean function \cite{williamson2023general,dai2022significance,lundborg2024projected,zhang2025testing,cai2025test,williamson2021nonparametric}. These methods, however, typically require sample splitting to estimate the conditional mean function, and the performance of the resulting tests depends on the accuracy of the underlying nonparametric estimators.

Quantifying the influence of input variables on a model output is also the central objective in Global Sensitivity Analysis (GSA) \citep{de2008uncertainty,saltelli2000sensitivity,wagner1995global}. Classical approaches in this area rely on an analysis-of-variance (ANOVA) decomposition, with Sobol' indices \citep{sobol1993sensitivity,sobol2001global} being among the most widely used metrics for measuring variable importance. 
A broad range of methods for estimating Sobol' indices has been developed over the years, including Pick--Freeze estimators, orthogonal basis expansions, and quasi-Monte Carlo methods 
(see \cite{daveiga2021basics} for a comprehensive overview). More recently, inspired by the elegant ideas of \citet{chatterjee2021new}, there has been renewed interest in rank and nearest neighbor-based approaches to Sobol' index estimation \citep{chhaibi2026martingale,lin2022limit,gamboa2022global}. Moreover, while Sobol' indices were originally formulated for scalar outputs, a growing body of work has extended these notions to the multivariate setting \citep{gamboa2013sensitivity,garcia2014global,cheng2019multivariate,milton2025sobol}.

In this paper, using variance based importance measures such as Sobol' indices
as the common basic primitive, we propose a unified computationally efficient approach for testing conditional mean independence, variable screening, and global sensitivity analysis, using a nearest neighbor graph based approach, that does not require bootstrap resampling or sample splitting. 
The following section summarizes the main methodological and theoretical results obtained in the paper. In particular, we present the proposed nearest neighbor estimators, establish their consistency and convergence rates, develop a test for conditional mean independence, and develop an algorithm for variable screening. We also discuss how the same framework naturally extends to measuring higher-order interaction effects.

\subsection{Summary of Results}

Our starting point is a basic measure of conditional mean dependence that quantifies the proportion of variability in the response explained by the regression function. This quantity has three fundamental properties one would expect from a mean dependence measure: it takes values in the interval $[0, 1]$, vanishes if and only if conditional mean independence holds, and attains its maximal value of 1 when the response is completely determined by the covariates (see Proposition \ref{ppn:normalized}). Moreover, this measure coincides with the classical Sobol' index, when the response is univariate, and its natural trace-based extension, when the response is multivariate \citep{gamboa2013sensitivity,gamboa2014sensitivity} (see Section \ref{sec:connect_sobol} for details). 
In this paper, we propose a nearest neighbor-based estimator of this index and establish the following properties:

\begin{itemize} 

\item The estimate has a simple, interpretable form, which does not require any
estimation of density or distribution functions. Moreover, the estimate can be computed in near-linear time (with a fixed number of nearest neighbors), irrespective of the dimension of the data (see Remark \ref{remark:graphcomputation}).

\item The estimator is consistent for the population measure under mild moment conditions (Theorem \ref{thm:etan_consistent}). Furthermore, in Theorem \ref{thm:rate_of_convg} we obtain the rate at which the estimator converges to the population measure as the sample size increases. As a consequence, one obtains an analogous convergence guarantee for the estimated Sobol' index, linking the finite-sample behavior of our estimator directly to that of the corresponding population sensitivity measure.

\item Under the null hypothesis of conditional mean independence, the estimator is asymptotically standard Gaussian, after a simple data-driven standardization (Theorem \ref{thm:null_dist}). This allows us to readily select the rejection threshold (based on the asymptotic distribution), resulting in a test that asymptotically has the correct level and is universally consistent against fixed alternatives (Corollary \ref{cor:test_consistency}). Consequently, our method is significantly faster than existing tests for conditional mean independence that rely on non-pivotal limiting distributions and therefore require bootstrap calibration (see Remark \ref{remark:distributioncomparison}). Another important advantage of our approach is that it does not require sample splitting. This is particularly important in practice, since sample splitting can lead to a loss of power in finite samples. Indeed, our empirical results show that the proposed test often exhibits improved power compared to recent methods that rely on sample splitting to estimate the regression function. 

\end{itemize}

Next, in Section \ref{sec:variable_screening_theory}, we develop a model-free variable screening algorithm based on our nearest neighbor estimator of the conditional dependence index. The procedure adds variables sequentially according to their estimated contribution to the conditional mean signal. In Theorem \ref{thm:variable_screening} we show that the proposed algorithm selects a sufficient set (see \eqref{eq:sufficient_S}) with high probability, with an exponential error bound in the sample size. Then, in Section \ref{sec:higher_sobol}, we discuss how the nearest neighbor framework can be extended beyond the estimation of conditional mean dependence and global sensitivity measures to the estimation of interaction effects through higher-order Sobol' indices \citep{gamboa2013sensitivity, gamboa2014sensitivity}. In particular, we focus on the second-order case and propose a method for quantifying the proportion of response variation attributable purely to interaction effects, beyond the separate contributions of the individual inputs. Finally, in Section \ref{sec:experiments}, we present empirical results on both simulated and real data, comparing our methods with existing approaches. The following are the summary of our findings: 

\begin{itemize} 

\item In Section \ref{sec:cmi_test}, we compare our proposed test with other popular methods for conditional mean independence, specifically the MDD based test \citep{shao2014martingale,lee2020testing} and the partial Mean Independence Test (pMIT) \citep{cai2025test}, as well as with tests designed for the more general null hypothesis of conditional independence, such as distance covariance (dCov) \cite{szekely2007measuring} and the Azadkia--Chatterjee coefficient \cite{azadkia2021simple,chatterjee2021new}. The experiments show that the proposed procedure achieves strong power against a wide range of nonlinear alternatives, controls Type I error in settings where more general dependence tests over-reject, and is computationally much faster than the competing methods. 

\item In the variable screening experiments, the proposed method is competitive with existing model-free screening procedures in simulation studies and yields strong predictive performance on the augmented California Housing dataset (Section \ref{sec:variable_screening}).

\end{itemize} 

The proofs of the main results are given in the Appendix. The code for the experiments can be found in \href{https://github.com/anirbanc96/ncmd}{\texttt{https://github.com/anirbanc96/ncmd}}.

\section{Conditional Mean Dependence and Global Sensitivity Measures }

In this section, we introduce a simple measure of conditional mean dependence and show how it relates to certain multivariate generalizations of Sobol' indices.

\subsection{A Simple Measure of Conditional Mean (In)dependence}\label{sec:def_eta}


Let $(\bY,\bX)$ be random variables taking values in $\R^p \times \R^d$ with joint distribution $P_{\bY\bX}$ and marginal distributions $P_{\bX}$ and $P_{\bY}$, respectively. Throughout we will assume that the conditional distribution $P_{\bY\mid\bX}$ exists and that $\E[\|\bY\|_2^2] < \infty$. We also assume that $\bY$ is not almost surely a constant. Then the hypothesis of conditional mean independence can be stated as follows. 
\begin{align}\label{eq:h0_formal}
    H_0 : P_{\bX}\left(\E[\bY\mid \bX] = \E[\bY]\right)=1 
    \quad \text{ versus } \quad 
    H_1 : P_{\bX}\left(\E[\bY\mid \bX] = \E[\bY]\right)<1.
\end{align}
To test this hypothesis, the first step is to quantify the extent to which $\E[\bY\mid \bX]$ departs from $\E[\bY]$. A natural choice is the squared $L_2$ distance between the  vectors $\E[\bY\mid \bX]$ and $\E[\bY]$, that is, $\|\E[\bY\mid \bX]-\E[\bY]\|_2^2$. Note that this quantity depends on the particular value of $\bX$ under conditioning. To obtain a global measure of discrepancy, we therefore take its expectation with respect to the marginal distribution of $\bX$: 
\begin{align}\label{eq:EYX}
\E\left[\left\|\E\left[\bY\mid \bX\right] - \E[\bY]\right\|_2^2\right].
\end{align}
Note that the above quantity is zero if and only if $H_0$ in \eqref{eq:h0_formal} holds. However, under $H_1$ it may be unbounded, and therefore may fail to provide a meaningful measure of the strength of the mean-level association between $\bY$ and $\bX$. To remedy this, we consider the normalized measure:
\begin{align}\label{eq:def_eta}
    \eta = \dfrac{\E\left[\left\|\E\left[\bY\mid \bX\right] - \E[\bY]\right\|_2^2\right]}{\E\left[\left\|\bY - \E[\bY]\right\|_2^2\right]}.
\end{align} 
Note that this quantity is well defined whenever the denominator in \eqref{eq:def_eta} is nonzero, that is, whenever $\bY$ is not almost surely constant (as assumed above). We will refer to the measure as the {\it Normalized Conditional Mean Discrepancy} (NCMD). The following proposition collects some basic properties of the NCMD measure. The proof is 
given in Appendix \ref{sec:propertiespf}. (Also, proposed in \cite{williamson2021nonparametric,tian2025variation})

\begin{proposition}\label{ppn:normalized}
    Suppose $\E[\|\bY\|_2^2]<\infty$ and $\bY$ is not almost surely a constant. Then the measure $\eta$ defined in \eqref{eq:def_eta} satisfies the following: 
    \begin{enumerate}[label=$(\mathrm{P}\arabic*)$,ref=$(\mathrm{P}\arabic*)$]
    \item \label{item:P1} $\eta\in [0,1]$. 
    \item \label{item:P2} $\eta = 0$ if and only if $\bm{H}_0$ in \eqref{eq:h0_formal} holds.
    \item \label{item:P3} $\eta = 1$ if and only if $\bY$ is a measurable function of $\bX$ almost surely.
\end{enumerate}
\end{proposition}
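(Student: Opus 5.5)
The whole argument rests on the law of total variance, written in its vector (trace) form. Write $\bm{\mu}=\E[\bY]$ and $\bm{m}(\bX)=\E[\bY\mid\bX]$. Since $\E[\|\bY\|_2^2]<\infty$, conditional Jensen gives $\E[\|\bm{m}(\bX)\|_2^2]\le \E[\|\bY\|_2^2]<\infty$, so every term below is finite. Now decompose $\bY-\bm{\mu}=(\bY-\bm{m}(\bX))+(\bm{m}(\bX)-\bm{\mu})$ and expand the squared norm. The cross term is
\begin{align*}
\E\bigl[\langle \bY-\bm{m}(\bX),\, \bm{m}(\bX)-\bm{\mu}\rangle\bigr]
=\E\bigl[\langle \E[\bY-\bm{m}(\bX)\mid \bX],\, \bm{m}(\bX)-\bm{\mu}\rangle\bigr]=0 .
\end{align*}
Conditioning on $\bX$ here is legitimate because Cauchy--Schwarz shows the product is integrable. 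This yields the identity
\begin{align*}
\E\bigl[\|\bY-\bm{\mu}\|_2^2\bigr]=\E\bigl[\|\bY-\bm{m}(\bX)\|_2^2\bigr]+\E\bigl[\|\bm{m}(\bX)-\bm{\mu}\|_2^2\bigr].
\end{align*}
The denominator of \eqref{eq:def_eta} is strictly positive because $\bY$ is not almost surely constant.

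For \ref{item:P1}, both summands on the right are nonnegative. So the numerator of \eqref{eq:def_eta} lies between $0$ and the denominator, which gives $\eta\in[0,1]$.

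For \ref{item:P2}, we have $\eta=0$ if and only if $\E[\|\bm{m}(\bX)-\bm{\mu}\|_2^2]=0$. That holds if and only if $\bm{m}(\bX)=\bm{\mu}$ $P_{\bX}$-almost surely, which is exactly $H_0$ in \eqref{eq:h0_formal}.

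For \ref{item:P3}, the identity shows $\eta=1$ if and only if $\E[\|\bY-\bm{m}(\bX)\|_2^2]=0$, that is, $\bY=\bm{m}(\bX)$ almost surely. In that case $\bY$ is a measurable function of $\bX$. Conversely, suppose $\bY=f(\bX)$ almost surely for some measurable $f$. Then $\E[\bY\mid\bX]=f(\bX)$ almost surely, since $f(\bX)$ is $\sigma(\bX)$-measurable and integrable. Hence the residual term vanishes and $\eta=1$.

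The main point of care is justifying the orthogonality of the cross term, which needs the integrability noted above. Beyond that, the argument is a routine application of the $L_2$-projection property of conditional expectation.
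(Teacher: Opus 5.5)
Your proof is correct and follows the paper's argument: both rest on the orthogonal decomposition $\E[\|\bY-\E[\bY]\|_2^2]=\E[\|\bY-\E[\bY\mid\bX]\|_2^2]+\E[\|\E[\bY\mid\bX]-\E[\bY]\|_2^2]$ and read off (P1)--(P3) from it. You also justify why the cross term vanishes and prove the converse direction of (P3), both of which the paper leaves implicit.
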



The above properties show that $\eta$ may be interpreted as a measure of the strength of the influence of $\bX$ on the regression function $\E[\bY\mid \bX]$. The two extreme values admit natural interpretations. At one extreme, $\eta=0$ corresponds to conditional mean independence, in which case the regression function $\E[\bY\mid \bX]$ is constant and, hence, does not depend on $\bX$. At the other extreme, $\eta=1$ corresponds to the case in which $\bY$ is completely determined by $\bX$. These   properties are the analogues of R\'enyi's axioms \cite{renyi1959measures} in the context of measuring mean dependence and are closely related to those satisfied by the nonparametric measures of (conditional) dependence introduced in \cite{chatterjee2021new,azadkia2021simple,dette2013copula,deb2020measuring,huang2022kernel}, among others.

\subsection{Global Sensitivity Measures }\label{sec:connect_sobol}

In this section, we relate the measure $\eta$ to multivariate generalizations of the Sobol' index \citep{sobol1993sensitivity,sobol2001global}, which play a central role in global sensitivity analysis (GSA) \citep{wagner1995global,de2008uncertainty,saltelli2000sensitivity}. To that end, consider a function $f:\R^d\times \R^m\ra \R$ and independent random variables $\bX\in \R^d$ and $\bZ\in \R^m$. Let $Y = f(\bX, \bZ)$, that is, $Y$ is a scalar output of the model $f$ with input variables $\bX$ and $\bZ$. The Sobol' index of the variables $\bX$ is then defined as (see \cite{gamboa2013sensitivity, gamboa2022global, janon2014asymptotic} and references therein), 
\begin{align}\label{eq:univariate_sobol}
    S^{\bX} = \frac{\Var\left[\E\left[Y\mid \bX\right]\right]}{\Var[Y]}.
\end{align}
This index provides a natural measure of the influence of the input $\bX$ on the output $Y$. In this setting, the measure $\eta$ defined in \eqref{eq:def_eta} can be expressed as follows: 
\begin{align*}
    \eta = \frac{\E\left[\left|\E\left[Y\mid \bX\right] - \E[Y]\right|^2\right]}{\E\left[\left|Y - \E[Y]\right|^2\right]} = \frac{\Var\left[\E[Y\mid\bX]\right]}{\Var[Y]} = S^{\bX}.
\end{align*}
Hence, when the response is univariate, the measure $\eta$ is precisely the Sobol' index \eqref{eq:univariate_sobol}. This measure is also known as the nonparametric coefficient of determination \cite{DoksumSamarov1995}, and its numerator is often referred to as the residual variance \cite{DevroyeGyorfiLugosiWalk2018}. Sobol' indices are also building blocks of Shapley values, which are widely used in machine learning as variable importance measures for model interpretation \cite{owen2014sobol}. 


Next, consider a multivariate response model $\bY = f(\bX, \bZ)$, where $f:\R^d\times \R^m\ra\R^p$. Then, by the Hoeffding decomposition \citep{van2000asymptotic},  
\begin{align}\label{eq:hoeffdin_decomp_1}
    \bY = f(\bX,\bZ) = \E[\bY] + f_{\bX} + f_{\bZ} + f_{\bX\bZ},
\end{align}  
where $f_{\bX} = \E\left[\bY\mid\bX\right] - \E[\bY]$, $f_{\bZ} = \E\left[\bY\mid\bZ\right] - \E[\bY]$, and $f_{\bX\bZ} = \bY - f_{\bX} - f_{\bZ} - \E[\bY]$. Using $L_2$ orthogonality and taking covariances on both sides gives
\begin{align}\label{eq:decomp_var}
    \Sigma_{\bY} = \Sigma_{f_{\bX}} + \Sigma_{f_{\bZ}} + \Sigma_{f_{\bX\bZ}},
\end{align}  
where $\Sigma_{\bm W}$ denotes the covariance matrix of a random variable $\bm W$. For scalar outputs (that is when $p = 1$), the covariance matrices reduce to scalar variances, and \eqref{eq:decomp_var} can be interpreted as the decomposition of the total variance of $Y$ into the variance due to the input factors in $\bX$, the variance due to the input factors in $\bZ$, and the variance due to interactions between $\bX$ and $\bZ$. The (univariate) Sobol' index then represents the sensitivity of $Y$ to the inputs in $\bX$. A natural multivariate extension of this definition is to consider the ratio $\tr(\Sigma_{f_{\bX}})/\tr(\Sigma_{\bY})$ \citep{gamboa2013sensitivity, gamboa2014sensitivity}. It is now straightforward to observe that  
\begin{align}\label{eq:eta_trace_def}
    \eta = \dfrac{\E\left[\left\|\E\left[\bY\mid \bX\right] - \E[\bY]\right\|_2^2\right]}{\E\left[\left\|\bY - \E[\bY]\right\|_2^2\right]} = \frac{\tr(\Sigma_{f_{\bX}})}{\tr(\Sigma_{\bY})} . 
\end{align}  
This shows that, even when the response is multivariate, the NCMD measure defined in \eqref{eq:def_eta} can be interpreted as a multivariate Sobol' index, quantifying the proportion of the total variance of the model $f
$ explained by the variables $\bm{X}$. A similar measure, motivated differently through projecting the covariance into a scalar, has been studied in \citet{gamboa2013sensitivity, gamboa2014sensitivity}. In particular, \citet[Proposition 3.1]{gamboa2013sensitivity} shows that this measure is optimal for defining a general global sensitivity index via such scalar projections.

\subsection{Connections to Other Variable Importance Measures}

In recent years, several measures of variable importance have been proposed. mong those perhaps most closely related to the measure $\eta$, is the minimum Mean Squared Error (mMSE) gap \cite{zhang2020floodgate,williamson2021nonparametric,williamson2023general}, which measures the importance of the variable $\bX$ for predicting a univariate response $Y$ in the presence of a confounding variable $\bZ$. In the setting of this paper, where there are no confounders, the mMSE gap statistic takes the form $\E[\left(Y - \E[Y]\right)^2] - \E[(Y - \E[Y\mid \bX])^2]$. By the law of total variance, this quantity is equal to $\operatorname{Var}[\E[Y\mid \bX]]$, which is the numerator of the statistic $\eta$ in the univariate-response case.

In another direction, \citet{borgonovo2025convexity,borgonovo2025global} proposed a family of optimal transport (OT)-based measures of variable importance, defined in terms of the expected optimal transport distance between the marginal distribution of $Y$ and the conditional distribution of $Y\mid \bX$. This is reminiscent of the measure of association introduced in \citet{deb2020measuring}, which is defined as the expected distance between the mean embeddings of the same distributions in a reproducing kernel Hilbert space. In fact, \cite{borgonovo2025global} extends this idea to an OT-based measure of global sensitivity, which satisfies properties analogous to those in Proposition~\ref{ppn:normalized}. 
However, it should be noted that, unlike Sobol' indices such as $\eta$, which capture mean dependence, OT-based measures typically characterize full statistical dependence.  

\section{Estimating NCMD using Nearest Neighbors: Consistency and Rate of Convergence}\label{sec:estimation_nn}
In this section, we propose a method to estimate the NCMD measure, given independent samples $(\bY_1, \bX_1),\ldots, (\bY_n,\bX_n)$ from the joint distribution $P_{\bY\bX}$, and establish its consistency and rate of convergence. To this end, first note that by an application of the law of large numbers,
\begin{align}\label{eq:Dn}
D_n := \frac{1}{n}\sum_{u=1}^{n}\left\|\bY_u\right\|^2 - \frac{1}{n(n-1)}\sum_{1 \leq u \neq v \leq n}\bY_u^\top \bY_v \pto \E\left[\left\|\bY - \E\left[\bY\right]\right\|_2^2\right], 
\end{align}
that is, $D_n$ is a consistent estimator of the denominator of $\eta$ (recall \eqref{eq:def_eta}). To estimate the numerator of $\eta$,  consider the following decomposition:
\begin{align}\label{eq:numerator_decomp}
    \E\left[\left\|\E\left[\bY\mid\bX\right] - \E\left[\bY\right]\right\|_2^2\right] = \E\left[\E\left[\bY^\top \bY^\prime\mid \bX\right]\right] - \left\|\E\left[\bY\right]\right\|_2^2 , 
\end{align}
where $\bX\sim P_{\bX}$ and $\bY,\bY^\prime$ are generated independently from the conditional distribution $P_{\bY\mid\bX}$. A consistent estimate of the second term in \eqref{eq:numerator_decomp} can be obtained easily using 
\begin{align}\label{eq:N2}
\frac{1}{n(n-1)}\sum_{1 \leq u \neq v \leq n}\bY_u^\top \bY_v \pto \left\|\E\left[\bY\right]\right\|_2^2 . 
\end{align} 
To estimate the first term in \eqref{eq:numerator_decomp}, we fix $\bX = \bX_u$ for $1\leq u\leq n$, and consider the conditional expectation $\E\left[\bY^\top\bY^\prime\mid\bX = \bX_u\right]$. The idea then is to estimate this quantity by averaging the inner product over indices corresponding to observations that are ``close'' to $\bX_u$. A natural way to quantify such proximity is through nearest neighbor graphs. Specifically, fix $K \geq 1$ and consider the directed $K$-nearest neighbor ($K$-NN) graph $G(\sX_n)$ associated with the data points $\sX_n = \{\bX_1,\ldots,\bX_n\}$, in which each $\bX_u \in \sX_n$ is connected by directed edges to its $K$ nearest neighbors in $\sX_n \setminus {\bX_u}$. We denote the presence of a directed edge from $\bX_u$ to $\bX_v$ in $G(\sX_n)$ by $\bX_u\ra\bX_v$ and the presence of directed edges both from $\bX_u$ to $\bX_v$ and $\bX_v$ to $\bX_u$ by $\bX_u \leftrightarrow \bX_v$. 
%
%
Moreover, for $\bX_u \in \sX_n$, denote its set of neighbors in $G(\sX_n)$ as: 
\begin{align}\label{eq:NGX_n}
    N_{G(\sX_n)}(u) = \left\{ v \in [n]: \bX_u\ra\bX_v\text{ is an edge in }G(\sX_n)\right\} . 
\end{align}
Then the $K$-NN-based estimator of the first term in \eqref{eq:numerator_decomp} is given by 
\begin{align}\label{eq:N1estimate}
\frac{1}{n}\sum_{u=1}^{n} \frac{1}{K} \sum_{v \in N_{G(\sX_n)}(u)}\bY_u^\top\bY_v.
\end{align}
Combining the above with \eqref{eq:Dn} and \eqref{eq:N2} we define our estimator of $\eta$ as follows: 
%
%
\begin{align}\label{eq:def_eta_hat}
    \hat\eta_n:= \frac{\frac{1}{n}\sum_{u=1}^{n} \frac{1}{K}\sum_{v \in N_{G\left(\sX_n\right)}(u)}\bY_u^\top \bY_v - \frac{1}{n(n-1)}\sum_{1 \leq u \neq v \leq n}\bY_u^\top \bY_v}{\frac{1}{n}\sum_{u=1}^{n}\left\|\bY_u\right\|^2 - \frac{1}{n(n-1)}\sum_{1 \leq u \neq v \leq n}\bY_u^\top \bY_v}.
\end{align}
Note this estimator is well defined, since under our assumptions $\bY$ is almost surely not a constant.

\begin{remark}\label{remark:graphcomputation} Note that the estimate \eqref{eq:def_eta_hat} can be computed easily in $O(Kn \log n)$ time, in any dimensions. This is because $K$-NN graph can be computed in $O(Kn \log n)$ time (see, for example, \cite{friedman1977algorithm}) and, given the graph, \eqref{eq:N1estimate} can be computed in $O(Kn)$ time, since a $K$-NN graph has $O(Kn)$ edges. Also, observe that the second term in the numerator and the denominator can be computed in $O(n)$ time using the identity 
\begin{align}\label{eq:YY_simple}
   \sum_{1 \leq u \neq v \leq n}\bY_u^\top \bY_v =  \left\|\sum_{u=1}^{n}\bY_u\right\|_2^2 - \sum_{u=1}^{n}\left\|\bY_u\right\|_2^2 . 
\end{align} 
\end{remark}

\begin{remark} Given the connection between the NCMD measure and the classical Sobol' index, it is natural to ask whether existing estimation strategies for Sobol' indices can be adapted to this setting. The classical Pick-and-Freeze approach \citep{janon2014asymptotic, gamboa2016statistical} is not directly applicable beyond the setting of GSA, as it is a resampling-based method that requires generating additional independent samples and repeated evaluations of the model. 
To avoid the high computational cost of model evaluations, there have been alternative approaches based on kernel estimators that rely solely on the observed samples (see \citep{da2009local,DaVeigaGamboaKleinLagnouxPrieur2026} and the references therein). More recently, \cite{gamboa2022global} drew inspiration from the rank-based approach of \cite{chatterjee2021new} to propose a rank-based estimator of Sobol' indices in the univariate setting, where both $\bY$ and $\bX$ are univariate. \citet{chhaibi2026martingale} further generalized this to the multivariate response setting, by defining a coordinate-wise Sobol' index, and established asymptotic normality of the corresponding estimator. \citet{lin2022limit} also defined the same estimator when $\bX$ is allowed to be multivariate, but $\bY$ is still univariate. In this univariate setting, our estimator (with $K=1$) can be viewed as a comparable approach to the rank-based estimator. However, it is important to emphasize that our method is more general and remains well defined when both $\bY$ and $\bX$ are allowed to be multi-dimensional. Moreover, \cite{gamboa2022global} demonstrates that, in simple settings with linear associations, the rank-based approach outperforms the classical Pick-and-Freeze method. This observation suggests that, in the multi-dimensional setting, our nearest neighbor-based approach may yield similar improvements.
\end{remark}

We now proceed to establish the consistency of the estimator $\hat\eta_n$. To this end, we make the following assumption on the conditioning variable $\bX$.

\begin{assumption}\label{assumption:normX_continuous}
    The random variable $\bX$ takes values in $\R^d$ for some $d\geq 1$ and $\left\|\bX - \bX^\prime\right\|_2$ has a continuous distribution, where $\bX,\bX^\prime$ are generated independently from $P_{\bX}$.
\end{assumption}

This assumption ensures that the $K$-NN graph defined using the Euclidean distance  $\|\cdot\|_2$ is well defined and the degrees of the vertices scale proportionally with $K$ \citep{jaffe2020randomized}. The following theorem establishes the consistency of $\hat\eta_n$. The proof is given in Appendix \ref{appendix:proofof_etan_consistent}.

\begin{theorem}\label{thm:etan_consistent}
    Suppose Assumption \ref{assumption:normX_continuous} holds and $\E[\|\bY\|_2^{4+\delta}]<\infty$, for some $\delta>0$. Then, 
    \begin{align}\label{eq:NCMDconsistency}
        \hat\eta_n\pto \eta , 
    \end{align}
    where $\eta$ and $\hat\eta_n$ are defined in \eqref{eq:def_eta} and \eqref{eq:def_eta_hat}, respectively.
\end{theorem}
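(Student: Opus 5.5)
The proof reduces to the numerator. By the law of large numbers and the identity \eqref{eq:YY_simple}, we have $\frac1n\sum_u\|\bY_u\|_2^2\pto\E\|\bY\|_2^2$. The $U$-statistic in \eqref{eq:N2} satisfies $\frac{1}{n(n-1)}\sum_{u\neq v}\bY_u^\top\bY_v\pto\|\E\bY\|_2^2$. Hence $D_n\pto\E\|\bY-\E\bY\|_2^2>0$. By the continuous mapping theorem it then suffices to show
\begin{align*}
T_n:=\frac{1}{nK}\sum_{u=1}^n\sum_{v\in N_{G(\sX_n)}(u)}\bY_u^\top\bY_v\pto \E\left[\|\E[\bY\mid\bX]\|_2^2\right],
\end{align*}
since the right side equals $\E[\E[\bY^\top\bY'\mid\bX]]$ in \eqref{eq:numerator_decomp}. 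I would establish this in two steps: convergence of the mean $\E T_n$, and then $\Var(T_n)\to0$.

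\textbf{Mean.} Write $m(\bx)=\E[\bY\mid\bX=\bx]$. Conditional on $\sX_n$ the $\bY$'s are independent, so $\E[T_n\mid\sX_n]=\frac{1}{nK}\sum_u\sum_{v\in N(u)}m(\bX_u)^\top m(\bX_v)$. By exchangeability, $\E T_n=\frac1K\sum_{k\le K}\E[m(\bX_1)^\top m(\bX_{N_k(1)})]$, where $N_k(1)$ is the $k$-th nearest neighbor of $\bX_1$. It therefore suffices to show $\E[m(\bX_1)^\top m(\bX_{N_k(1)})]\to\E\|m(\bX_1)\|^2$. Since $m\in L^2(P_{\bX})$, I would approximate $m$ in $L^2$ by a continuous compactly supported $g$. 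For $g$, we have $\|\bX_{N_k(1)}-\bX_1\|\to0$ a.s. (standard: the $k$-NN distance vanishes $P_{\bX}$-a.s. on the support), and bounded convergence handles $g(\bX_1)^\top g(\bX_{N_k(1)})$. The error terms are controlled by Cauchy--Schwarz together with the bound $\E\|h(\bX_{N_k(1)})\|^2\le C_d K\,\E\|h(\bX_1)\|^2$. This bound follows from $\sum_u\|h(\bX_{N_k(u)})\|^2\le \max_v\deg^{\mathrm{in}}(v)\sum_v\|h(\bX_v)\|^2$ and the fact that in-degrees in a $K$-NN graph are bounded by $C_dK$ (a kissing-number argument, which is where Assumption \ref{assumption:normX_continuous} enters, ruling out ties).

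\textbf{Variance.} This is the main obstacle, because the graph depends on all of $\sX_n$ and the summands are unbounded. I would truncate: set $\bY^{(M)}=\bY\one\{\|\bY\|\le M\}$. The truncation error $\E|T_n-T_n^{(M)}|$ is bounded uniformly in $n$ via Cauchy--Schwarz and the same bounded in-degree argument, giving $O(\E[\|\bY\|^2\one\{\|\bY\|>M\}]^{1/2}(\E\|\bY\|^2)^{1/2})$. The $4+\delta$ moment gives slack here, and also guarantees uniform integrability if one prefers an $L^2$ argument. For bounded $\bY$, I would use the Efron--Stein inequality on the i.i.d. pairs $(\bY_u,\bX_u)$. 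Replacing one pair changes only $O(K C_d)$ edges of the $K$-NN graph (the point's own out-edges, those pointing to it, and the ones redirected), each contributing $O(M^2/(nK))$. Efron--Stein then gives $\Var(T_n^{(M)})\le n\cdot O(M^4/n^2)\to0$. Combining the mean and variance steps and letting $M\to\infty$ yields $T_n\pto\E\|m(\bX)\|^2$, and hence \eqref{eq:NCMDconsistency}.
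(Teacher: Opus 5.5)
Your proposal is correct, and its skeleton matches the paper's. You reduce to the nearest-neighbor term (the paper's $V_n$; the paper reserves $T_n$ for the full numerator). You get convergence of its mean from exchangeability, Cauchy--Schwarz and the bounded in-degree of the $K$-NN graph. You control fluctuations with Efron--Stein, using that resampling one pair alters only $O(K)$ edges.

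The two technical steps differ from the paper, and the differences matter.

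\textbf{Bias step.} The paper shows $\|g(\bX_{N(1)})-g(\bX_1)\|_2^2$ is uniformly integrable, using fourth moments of $g$ obtained from the $4+\delta$ hypothesis, and then cites Lemma D.3 of Deb et al. You prove that continuity statement directly: approximate $m$ in $L^2(P_{\bX})$ by a compactly supported continuous function, and use that the $k$-NN distance of $\bX_1$ vanishes almost surely. This needs only $m\in L^2$.

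\textbf{Variance step.} The paper applies Lemma~\ref{lemma:VarTn_bd} directly to the unbounded summands, giving $\Var[V_n]\lesssim n^{-1}\E[\max_u\|\bY_u\|_2^4]$, which is $o(1)$ only because of the higher moment. You truncate first, apply Efron--Stein to bounded summands, and bound the truncation error in $L^1$ uniformly in $n$ through the in-degree bound. You therefore obtain $V_n-\E V_n\to 0$ in $L^1$ rather than $L^2$, which suffices for convergence in probability.

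\textbf{What each route buys.} No step of your argument uses more than $\E\|\bY\|_2^2<\infty$. It therefore proves consistency under the minimal condition for $\eta$ to be defined; the $4+\delta$ moment is never needed, not merely slack, as you suggest. The paper's route is shorter given the cited lemma. Its direct variance bound is also what it reuses, with the sub-exponential tails, to get the explicit rate in Theorem~\ref{thm:rate_of_convg}.
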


Having established consistency, the next natural question is to determine the rate of convergence in \eqref{eq:NCMDconsistency}. For this, we make the following assumptions: 

\begin{assumption}\label{assumption:rate_of_convg}
For $(\bX, \bY) \sim P_{\bX \bY}$ the following holds: 
    \begin{enumerate}
        \item[$(1)$] Suppose $\E[\bX] = 0$ and there exist constants $C_1,C_2>0$ such that,         \begin{align*}
            \P\left(\left\|\bX\right\|_2>t\right)\leq C_1 e^{-C_2t} \quad \text{ and } \quad \P\left(\left\|\bY - \E\bY\right\|_2>t\right)\leq C_1e^{-C_2t} , 
        \end{align*} 
   for all $t>0$.
        \item[$(2)$] Define $g:\R^d\ra \R^p$ as $g(\bm x) = \E\left[\bY\mid\bX = \bm x\right]$. Then there exists $\beta, C_3>0$ such that,
        \begin{align*}
            \left|g(\bm x)^\top\left(g(\bm x_1) - g(\bm x_2)\right)\right|\leq C_3\left(1 + \|\bm x\|_2^{\beta} + \|\bm x_1\|_2^\beta + \|\bm x_2\|_2^\beta\right)\left\|\bm x_1 - \bm x_2\right\|_2 , 
        \end{align*} 
        for all $\bm x, \bm x_1, \bm x_2 \in \R^d$. 
    \end{enumerate}
\end{assumption}


Under the above assumptions, the following theorem establish the rate of $\hat \eta_n$. Note that since $\eta$ can be also interpreted as a Sobol' index, this result also establishes the convergence rates for estimation of Sobol' indices.

\begin{theorem}\label{thm:rate_of_convg}
    Suppose Assumption \ref{assumption:normX_continuous} and \ref{assumption:rate_of_convg} hold. Then,
    \begin{align}\label{eq:etaconvergence}
        \left|\hat\eta_n - \eta\right| = O_P\left(\max\left\{\frac{(\log n)^2}{\sqrt{n}}, \frac{(\log n)^{1+1/d}}{n^{1/d}}\right\}\right).
    \end{align}
    where $\eta$ and $\hat\eta_n$ are defined in \eqref{eq:def_eta} and \eqref{eq:def_eta_hat} respectively.
\end{theorem}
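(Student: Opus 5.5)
We reduce the statement to rates for the numerator and the denominator separately. Write $\hat\eta_n = \hat N_n/D_n$ with $\hat N_n = T_n - U_n$, where $T_n$ is the $K$-NN average \eqref{eq:N1estimate} and $U_n = \frac{1}{n(n-1)}\sum_{u\neq v}\bY_u^\top\bY_v$. Let $N = \E\|g(\bX)-\E\bY\|_2^2$ and $D=\E\|\bY-\E\bY\|_2^2>0$. Since $\eta$ is invariant under shifting $\bY$ by a constant, and $\hat\eta_n$ is as well (both $T_n-U_n$ and $D_n$ are, up to an $O_P(n^{-1})$ correction that we track), we may assume $\E\bY=0$.

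\textbf{Step 1: the U-statistic terms.} By Assumption \ref{assumption:rate_of_convg}(1), all moments of $\bY$ are finite. Standard U-statistic variance bounds then give $D_n - D = O_P(n^{-1/2})$ and $U_n - \|\E\bY\|^2 = O_P(n^{-1/2})$. Because $D>0$, the ratio satisfies
\[
|\hat\eta_n-\eta|\lesssim |\hat N_n - N| + |D_n-D|,
\]
so it suffices to bound $T_n - \E[\|g(\bX)\|^2]$.

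\textbf{Step 2: split $T_n$ into a fluctuation and a bias.} Write $T_n - \E\|g(\bX)\|^2 = (T_n - \E T_n) + (\E T_n - \E\|g(\bX)\|^2)$.

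For the fluctuation we bound $\Var(T_n)$ by an Efron--Stein argument. Replacing one sample changes only $O(K)$ edges, since the in-degree of a $K$-NN graph in $\R^d$ is bounded by $K C_d$ (a kissing-number bound). To avoid heavy tails we truncate: on the event $\max_u \|\bY_u\|\le C\log n$, which has probability $1-O(n^{-c})$ by the exponential tails, each summand is $O((\log n)^2)$. Efron--Stein then gives $\Var\lesssim (\log n)^4/n$, i.e. a fluctuation of order $(\log n)^2/\sqrt n$. This is the first term in \eqref{eq:etaconvergence}.

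For the bias, condition on $\sX_n$. For $v\in N_{G(\sX_n)}(u)$ with $u\neq v$, the responses $\bY_u$ and $\bY_v$ are conditionally independent, so $\E[\bY_u^\top\bY_v\mid\sX_n]=g(\bX_u)^\top g(\bX_v)$. Hence
\[
\E T_n - \E\|g(\bX)\|^2=\E\Big[\frac1{nK}\sum_u\sum_{v\in N(u)} g(\bX_u)^\top\big(g(\bX_v)-g(\bX_u)\big)\Big].
\]
Applying Assumption \ref{assumption:rate_of_convg}(2) with $\bm x=\bm x_2=\bX_u$ and $\bm x_1=\bX_v$ bounds each summand by $C_3(1+3\max_w\|\bX_w\|^\beta)\|\bX_u-\bX_v\|$.

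\textbf{Step 3 (main obstacle): the average $K$-NN distance for unbounded $\bX$.} We need $\frac1n\sum_u \|\bX_u-\bX_{v}\|$ over $K$-NN edges, combined with the polynomial weight. On the event $\max_w\|\bX_w\|\le C\log n$, which has probability $1-O(n^{-c})$, all points lie in a ball of radius $R=C\log n$. The classical deterministic covering bound for nearest-neighbor distances, obtained by partitioning the ball into cubes of side $\epsilon$, gives
\[
\sum_u \|\bX_u-\bX_{N_K(u)}\|\lesssim K\, R\, n^{1-1/d}.
\]
Hence the average edge length is $O(\log n\cdot n^{-1/d})$. A careful version yields a factor $(\log n)^{1+1/d}$, after absorbing the polynomial weight $(\log n)^\beta$ into the constant or by refining the assumption usage, e.g. via Hölder and a moment version of the covering bound. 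On the complementary event we use Cauchy--Schwarz together with the moment bounds to show its contribution is negligible. The delicate point is controlling the weight times the distance without losing extra logarithmic factors; to handle it we would first try bounding $\E[\|\bX_u-\bX_{N(u)}\|^2]$ via the covering argument applied to a dyadic shell decomposition of $\R^d$.

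Combining Steps 1--3 gives \eqref{eq:etaconvergence}.
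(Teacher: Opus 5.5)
\textbf{Step 3 has a genuine gap.} Your outline follows the paper's structure: the ratio reduction via the $O_P(n^{-1/2})$ rates for $D_n$ and the $U$-statistic, an Efron--Stein variance bound of order $(\log n)^4/n$, and a bias bound through Assumption~\ref{assumption:rate_of_convg}(2) plus $K$-NN distances after truncating at $C\log n$. However, Step~3 as written does not give the stated rate. You bound the weight $1+\|\bX_u\|_2^\beta+\|\bX_v\|_2^\beta$ by its maximum on $\Omega_n=\{\max_w\|\bX_w\|_2\le C\log n\}$. This costs a factor $(\log n)^\beta$, so your argument yields a bias of order $(\log n)^{1+\beta}n^{-1/d}$. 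That is strictly weaker than $(\log n)^{1+1/d}n^{-1/d}$ whenever $\beta>1/d$. A growing factor $(\log n)^\beta$ cannot be absorbed into a constant, and the ``careful version'' is asserted rather than supplied.

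The missing idea is to decouple the weight from the distance by Cauchy--Schwarz, as the paper does:
\[
\E\Big[w\,\|\bX_1-\bX_{N(1)}\|_2\,\one_{\Omega_n}\Big]\le\big(\E[w^2]\big)^{1/2}\Big(\E\big[\|\bX_1-\bX_{N(1)}\|_2^2\,\one_{\Omega_n}\big]\Big)^{1/2},
\]
with $w=1+\|\bX_1\|_2^\beta+\|\bX_{N(1)}\|_2^\beta$. Here $\E[w^2]=O(1)$ by the exponential tails together with $\E f(\bX_{N(1)})\lesssim_d\E f(\bX_1)$ for $f\ge0$ (bounded in-degree; Lemma~\ref{lemma:fXN1_bdd_fX1}). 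So no power of $\log n$ is lost. You then need a \emph{second}-moment bound on $K$-NN distances, which your first-moment bound $\sum_u\|\bX_u-\bX_{N_K(u)}\|_2\lesssim KRn^{1-1/d}$ does not provide.

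\textbf{How your covering idea closes the gap.} Your deterministic covering idea does close this gap once you state it for $d$-th powers, and it is then cleaner than the paper's argument. Let $r_u$ be the $K$-th nearest-neighbor distance of $\bX_u$. Every point of $\R^d$ lies in at most $K$ of the open balls $B(\bX_u,r_u/2)$: among the balls containing a given point, the one of largest radius has all other centres strictly inside its $K$-NN radius. So on $\Omega_n$ with $R=C\log n$ you get $\sum_u r_u^d\le 4^dKR^d$. The power-mean inequality for $d\ge2$, or $r_u\le 2R$ for $d=1$, then gives $\frac1n\sum_u r_u^2\lesssim_{d,K}R^2\max\{n^{-2/d},n^{-1}\}$. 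Hence the bias is of order $\log n\cdot\max\{n^{-1/d},n^{-1/2}\}$, within the claimed rate. The dyadic-shell decomposition is unnecessary given the exponential tails.

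The paper instead bounds $\P(\|\bX_1-\bX_{N(1)}\|_2\ge\varepsilon,\,\mathcal{E}_n)\lesssim_d K(\log n)^{1+d}/(n\varepsilon^d)$ by a probabilistic covering argument. It separates cells of $P_{\bX}$-mass at most $CK\log n/n$ from heavier cells, in which having $n-K-1$ sample points outside is negligibly likely, and then integrates this tail. That is the source of its $(\log n)^{1+1/d}$ factor.

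\textbf{Drop the centring step.} Under $\bY\mapsto\bY+\bm{c}$, your numerator $\hat N_n$ changes by $\frac1n\sum_v(\bar d_v/K-1)\bm{c}^\top\bY_v$, which is not $O_P(n^{-1})$. Also, Assumption~\ref{assumption:rate_of_convg}(2) is not preserved by the shift. Nothing later in your argument uses the centring, so removing it costs you nothing.
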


The proof of Theorem \ref{thm:rate_of_convg} is given in Appendix \ref{sec:rate_of_convgpf}. The first term in \eqref{eq:etaconvergence} corresponds to the variance, while the second term corresponds to the bias. Hence, for $d \leq 2$, the variance dominates 
and $\hat{\eta}_n$ attains a near-parametric convergence rate of order $O(1/\sqrt{n})$ (up to a $\mathrm{polylog}(n)$ factor). This is reminiscent of the rates obtained for nearest neighbor-based estimators in conditional independence testing \cite{azadkia2021simple}. Related convergence rates have also been established for Sobol' indices in settings where either one or both of $\bY$ and $\bX$ are univariate (see \cite{lin2022limit}). On the other hand, for $d \geq 3$, bias emerges as the dominant term and the rate of convergence deteriorates with the dimension. This is inherent to any procedure based on nonparametric estimation of conditional distributions, such as nearest neighbors or kernel-density methods (see, for example, \cite{azadkia2021simple, deb2020measuring, huang2022kernel}, for related results). Nonetheless, an interesting feature of the proposed estimate is that, under the null hypothesis of conditional mean independence, it achieves the parametric convergence rate of $O(1/\sqrt{n})$ in all dimensions, as shown in the next section.

\section{Testing Conditional Mean Independence}\label{sec:cmi_test_theory}

In this section, we develop an asymptotic test for the conditional mean independence hypothesis in \eqref{eq:h0_formal} based on the estimator of the NCMD measure introduced in the previous section. To this end, recall that the null hypothesis $H_0$ holds if and only if the quantity in \eqref{eq:EYX}, which is the numerator of $\eta$, is equal to zero. Hence, one can construct a test for conditional mean independence based on the numerator of estimate $\hat{\eta_n}$, which we denote as follows (recall \eqref{eq:def_eta_hat}):
\begin{align}\label{eq:Tn}
T_n = \frac{1}{n}\sum_{u=1}^{n} \frac{1}{K}\sum_{v \in N_{G\left(\sX_n\right)}(u)}\bY_u^\top \bY_v - \frac{1}{n(n-1)}\sum_{1 \leq u \neq v \leq n}\bY_u^\top \bY_v . 
\end{align} 
Denote by $\mathcal F(\sX_n)$ the $\sigma$-algebra generated by $\sX_n = \{X_1, X_2, \ldots, X_n\}$. Then, observe that, under $H_0$, 
\begin{align}\label{eq:ETn}
    \E_{H_0}\left[T_n\mid \mathcal{F}(\sX_n)\right] = \frac{1}{n K}\sum_{u=1}^{n}\sum_{v \in N_{G(\sX_n)}(u)} \|\E[\bY]\|_2^2 - \frac{1}{n(n-1)}\sum_{1 \leq u \neq v \leq n} \|\E[\bY]\|_2^2 = 0 , 
\end{align}
since under $H_0$, $\E[\bY\mid\bX] = \E[\bY]$. 

\begin{remark}\label{remark:unbaisedTn}
Recall, from the discussion following Theorem \ref{thm:rate_of_convg}, that the bias is the dominant term in the estimation error when $d \geq 3$. However, \eqref{eq:ETn} shows under $H_0$ the bias vanishes. This is a key feature that enables the construction of an asymptotically valid test for conditional mean independence, as explained below.  
\end{remark}

The next theorem establishes the asymptotic null distribution of $T_n$. Specifically, we show that $T_n$ (after appropriate normalization) converges to a limiting normal distribution in the Kolmogorov distance. Throughout, $\Phi$ will denote the cumulative distribution function of the standard Gaussian distribution.


\begin{theorem}\label{thm:null_dist}
    Suppose Assumption \ref{assumption:normX_continuous} holds and $\E[\|\bY\|_2^{8+\delta}]<\infty$, for some $\delta > 0$. Then, under $\bm H_0$, as $n\ra\infty$,
    \begin{align}\label{eq:cltH0} 
        \sup_{z\in\R}\left|\P_{H_0}\left(\frac{\sqrt{n}{T}_n}{\hat{\sigma}_n}\leq z\right) - \Phi(z)\right| \ra 0 , 
    \end{align} 
    where 
    \begin{align}\label{eq:varianceH0}
\hat \sigma_n^2 & := \frac{1}{nK^2} \sum_{1 \leq u \ne v \leq n} ((\bY_u - \bar{\bY}_n)^\top (\bY_v - \bar{\bY}_n))^2  \left( \one\left\{\bX_u \rightarrow \bX_v\right\} + \one\left\{ \bX_u \leftrightarrow \bX_v \right\} \right) \nonumber \\ 
& \hspace{1.35in} + \frac{1}{n} \sum_{u=1}^{n} \left( \frac{\bar{d}_u}{K} - 1 \right)^2 \, \bar{\bY}^\top (\bY_u - \bar{\bY}_n) (\bY_u - \bar{\bY}_n)^\top \bar{\bY} , 
\end{align}
with $\bar{\bY}_n = \frac{1}{n}\sum_{u=1}^{n}\bY_u$ and $\bar{d}_u:= | \left\{v \in [n]:\bX_v\ra\bX_u \right\}|$, for $1\leq u \leq n$. 
\end{theorem}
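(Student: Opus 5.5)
Under $H_0$ I would condition on $\mathcal F(\sX_n)$, so that the $K$-NN graph is fixed. Put $\bm\mu=\E[\bY]$ and $\bZ_u=\bY_u-\bm\mu$. The key structural fact is that, given $\sX_n$, the $\bZ_u$ are independent (though not identically distributed) with $\E[\bZ_u\mid \bX_u]=0$.

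Expanding $\bY_u^\top\bY_v=\bZ_u^\top\bZ_v+\bm\mu^\top\bZ_u+\bm\mu^\top\bZ_v+\|\bm\mu\|^2$ in \eqref{eq:Tn}, the constant terms cancel as in \eqref{eq:ETn}. The linear terms combine to
\[
\frac1n\sum_u\Bigl(\frac{d_u+\bar d_u}{K}-2\Bigr)\bm\mu^\top\bZ_u ,
\]
where $d_u=K$ is the out-degree, so this equals $\frac1n\sum_u(\frac{\bar d_u}{K}-1)\bm\mu^\top\bZ_u$. This gives the decomposition
\[
\sqrt n\,T_n=\frac{1}{\sqrt n K}\sum_{u\neq v}\one\{\bX_u\ra\bX_v\}\bZ_u^\top\bZ_v+\frac{1}{\sqrt n}\sum_u\Bigl(\frac{\bar d_u}{K}-1\Bigr)\bm\mu^\top\bZ_u-\frac{1}{\sqrt n(n-1)}\sum_{u\neq v}\bZ_u^\top\bZ_v .
\]
The last term is $O_P(n^{-1/2})$ by a standard degenerate $U$-statistic variance computation, so it is negligible.

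The first two terms form a sum $W_n$ of a degenerate quadratic form on the graph and a linear form. Conditionally they are uncorrelated, since the cross terms involve $\E[\bZ_u\bZ_u^\top\bZ_v\mid\sX_n]=0$. The conditional variance is therefore
\[
\sigma_n^2=\frac{1}{nK^2}\sum_{u\neq v}\bigl(\one\{u\ra v\}+\one\{u\leftrightarrow v\}\bigr)\E[(\bZ_u^\top\bZ_v)^2\mid\bX_u,\bX_v]+\frac1n\sum_u\Bigl(\frac{\bar d_u}{K}-1\Bigr)^2\bm\mu^\top\Sigma(\bX_u)\bm\mu ,
\]
where the $\leftrightarrow$ term accounts for mutual edges contributing twice. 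This matches \eqref{eq:varianceH0}. I would then prove a conditional CLT for $W_n/\sigma_n$ in Kolmogorov distance by Stein's method with dependency neighborhoods, or equivalently a CLT for graph-indexed quadratic-plus-linear forms as in the graph-based two-sample literature. The summands $\xi_u=\bZ_u^\top\sum_{v:u\ra v}\bZ_v/K+(\bar d_u/K-1)\bm\mu^\top\bZ_u$ are indexed by vertices, and the dependency graph is the $2$-hop neighbourhood in the $K$-NN graph. By Assumption \ref{assumption:normX_continuous} and the kissing-number bound in \cite{jaffe2020randomized}, $\bar d_u\le C_dK$ deterministically, so all neighbourhoods have bounded size. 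The Stein bound is then of the form $\sigma_n^{-3}n^{-3/2}\sum_u\E|\xi_u|^3$ plus a variance-of-conditional-variance term. These are controlled using bounded degrees and the moments of $\bZ$; the $8+\delta$ moment condition is used for the fourth moments of the quadratic terms.

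Next, I would show $\sigma_n^2$ is bounded away from zero in probability. The first term is at least $\frac{1}{nK^2}\sum_{u\ra v}\E[(\bZ_u^\top\bZ_v)^2\mid\cdot]$. Using $\frac1{nK}\sum_u\sum_{v\in N(u)}h(\bX_u,\bX_v)\to\E h(\bX,\bX)$, which is the nearest-neighbour law of large numbers already used for Theorem \ref{thm:etan_consistent}, this converges to $K^{-1}\E[\|\Sigma(\bX)\|_F^2]>0$, because $\bY$ is not a.s. constant.

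Finally, I would prove $\hat\sigma_n^2/\sigma_n^2\pto1$. Replacing $\bar\bY_n$ by $\bm\mu$ costs $o_P(1)$ by the law of large numbers and the degree bound. Each sum then differs from its conditional expectation by a sum of bounded-dependence terms with finite second moments, which requires $\E\|\bY\|^{8}<\infty$. Its variance is $O(1/n)$, giving concentration. Slutsky's lemma, with a Pólya-type uniformity of Kolmogorov convergence, then yields \eqref{eq:cltH0} after integrating out $\sX_n$.

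The main obstacle will be the Stein/CLT step. The graph is random and the $\bZ_u$ are heteroscedastic given $\bX_u$, so the Stein error bound must be made uniform. Conditional Kolmogorov bounds that hold with probability tending to one over $\sX_n$ should suffice, and one must check that the variance-of-$\sum\xi_u^2$ term decays.
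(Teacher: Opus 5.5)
Your proposal is correct and follows the paper's proof essentially step for step. The steps match: the exact decomposition of $\sqrt{n}T_n$ into the paper's $R_n$ plus a negligible degenerate $U$-statistic, the conditional variance $\sigma_n^2$ of Lemma \ref{lm:varianceRn}, Stein's method on the two-hop dependency graph with the degree bound of \cite{jaffe2020randomized}, the lower bound on $\sigma_n^2$ via convergence of its edge term to $K^{-1}\E\|\Sigma(\bX)\|_F^2>0$, and consistency of $\hat\sigma_n^2$ followed by Slutsky. The paper uses the Chen--Shao dependency-graph bound, which needs only third moments, so no variance-of-conditional-variance term arises. Your variance-estimator step is a cleaner version of the paper's five-term $Q^{(i)}_n$ decomposition, resting on the same idea: replace $\bar{\bY}_n$ by $\E[\bY]$, use exact conditional unbiasedness, and bound the conditional variance by $O(1/n)$.
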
 

Observe that the normalizing factor $\hat{\sigma}_n^2$ in \eqref{eq:varianceH0} above depends solely on the observed data. In fact, as discussed below, it is a consistent estimator of the variance of $\sqrt{n} T_n$ under $H_0$. Consequently, the result in \eqref{eq:cltH0} can be used directly to construct an asymptotically valid and universally consistent test for conditional mean independence (see Corollary \ref{cor:test_consistency}). The proof of Theorem \ref{thm:null_dist}, which is given in Appendix \ref{sec:cltpf}, proceeds in the following steps:

\begin{itemize} 

\item[(1)] First we use a second moment computation to decompose $T_n$ as follows (see Lemma \ref{lm:Rn}): 
\begin{align}\label{eq:RnTn}
\sqrt{n}T_n = R_n + o_{L_2}(1), 
\end{align} 
where 
\begin{align}\label{eq:Rn}
    R_n = \frac{1}{\sqrt{n} K}\sum_{u=1}^{n}\sum_{v \in N_{G(\sX_n)}(u)}\bY_u^\top \bY_v - \frac{1}{\sqrt{n}}\sum_{u=1}^{n}\E\left[\bY\right]^\top\left(2\bY_u - \E\left[\bY\right]\right).
\end{align}

\item[(2)] Then the key step in the proof is to apply Stein's method based on dependency graphs \cite{chen2004normal} to establish the CLT of $R_n$, specifically, 
\begin{align*}
    \sup_{z\in\R}\left|\P_{H_0}\left(\frac{R_n}{{\sigma}_n}\leq z\right) - \Phi(z)\right| \ra 0, 
\end{align*}
where $\sigma_n^2 := \Var_{H_0}\left[R_n\middle| \mathcal{F}(\sX_n)\right] $. 

\item[(3)] Note that $\sigma_n^2$ depends on the unknown conditional distribution of $\bY$ given $\bX$ under $H_0$ (see Lemma \ref{lm:varianceRn} for the precise expression). In Proposition \ref{ppn:varianceestimate} we show that $\hat \sigma_n^2$ as defined in \eqref{eq:varianceH0} is a consistent estimate of $\sigma_n^2$, that is, 
\begin{align*}
        \left|\frac{\hat{\sigma}_n^2}{\sigma_n^2}-1\right| = o_P(1) . 
    \end{align*}
The result in \eqref{eq:cltH0} then follows by replacing $\sigma_n$ with $\hat\sigma_n$ and replacing $R_n$ with $\sqrt{n}T_n$  (see Lemma \ref{lemma:replace_lemma} for details).

\end{itemize}


To construct an asymptotically valid test for the hypothesis \eqref{eq:h0_formal} using Theorem \ref{thm:null_dist}, fix $\alpha \in (0,1)$ and consider the test function 
\begin{align}\label{eq:conditional_mean_test}
    \phi_n = \one\left\{\left|\frac{\sqrt{n}T_n}{\hat\sigma_n}\right|>z_{\frac{\alpha}{2}}\right\},
\end{align}
where $z_{\frac{\alpha}{2}}$ denotes the $1-\frac{\alpha}{2}$-quantile of the $N(0,1)$ distribution. The following result is now an immediate consequence of Theorem \ref{thm:null_dist} and Theorem \ref{thm:etan_consistent}.


\begin{corollary}\label{cor:test_consistency} 
Suppose Assumption \ref{assumption:normX_continuous} holds and $\E[\|\bY\|_2^{8+\delta}]<\infty$, for some $\delta > 0$. Then the following hold: 
    
    \begin{itemize} 

\item (Asymptotic level $\alpha$) $\lim_{N\rightarrow\infty} \P_{H_0}(\phi=1) = \alpha$. 

\item (Universal consistency) For any $P_{\bY\bX} \in H_1$, $\lim_{N\rightarrow\infty} \P_{P_{\bY\bX}}(\phi=1)= 1$. 
\end{itemize}

\end{corollary}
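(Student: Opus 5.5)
The two claims are handled separately: the level claim comes directly from Theorem \ref{thm:null_dist}, and the consistency claim combines Theorem \ref{thm:etan_consistent} with a stochastic bound on $\hat\sigma_n$.

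\emph{Level.} Under $H_0$, Theorem \ref{thm:null_dist} states that $\sqrt{n}T_n/\hat\sigma_n$ converges to $N(0,1)$ in Kolmogorov distance. Because $\Phi$ is continuous, the rejection event $\{|\sqrt{n}T_n/\hat\sigma_n|>z_{\alpha/2}\}$ is a difference of two half-line events. Evaluating the uniform bound in \eqref{eq:cltH0} at $z=z_{\alpha/2}$ and at $z=-z_{\alpha/2}$ (the latter with a left limit, which the uniform bound also controls) gives
\[
\P_{H_0}(\phi_n=1)\to 1-\Phi(z_{\alpha/2})+\Phi(-z_{\alpha/2})=\alpha .
\]

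\emph{Consistency.} Fix $P_{\bY\bX}\in H_1$. By Proposition \ref{ppn:normalized}, $\eta>0$. The $(8+\delta)$-moment condition implies the $(4+\delta)$-moment condition, so Theorem \ref{thm:etan_consistent} applies. It gives $\hat\eta_n\pto\eta$ and $D_n\pto \E\|\bY-\E\bY\|_2^2>0$, where the second limit is the law of large numbers \eqref{eq:Dn}. Since $T_n=\hat\eta_n D_n$, we get
\[
T_n\pto \E\left\|\E[\bY\mid\bX]-\E\bY\right\|_2^2=:c>0 .
\]
Hence $\sqrt{n}\,|T_n|\ge \sqrt{n}\,c/2$ with probability tending to one.

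It remains to show $\hat\sigma_n=O_P(1)$ under the fixed alternative. Then $|\sqrt{n}T_n/\hat\sigma_n|\to\infty$ in probability, and the test rejects with probability tending to one. (If $\hat\sigma_n=0$, we interpret the statistic as $+\infty$. Note that large values of $\hat\sigma_n$ are the only threat; small ones only help.)

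\emph{Bounding $\hat\sigma_n$.} We bound each piece of \eqref{eq:varianceH0} in $L_1$ or in probability.

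For the first term, Cauchy--Schwarz gives $((\bY_u-\bar\bY_n)^\top(\bY_v-\bar\bY_n))^2\le \|\bY_u-\bar\bY_n\|^2\|\bY_v-\bar\bY_n\|^2$. Using $\|a+b\|^2\le 2\|a\|^2+2\|b\|^2$, and noting that $\bar\bY_n$ is $O_P(1)$, it suffices to control
\[
\frac{1}{nK^2}\sum_{u}\sum_{v:\,\bX_u\to\bX_v\text{ or }\bX_v\to\bX_u}\bigl(1+\|\bY_u\|^2\bigr)\bigl(1+\|\bY_v\|^2\bigr),
\]
the sum running over at most $2Kn$ ordered pairs. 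Apply AM--GM to each summand: $ab\le(a^2+b^2)/2$. This reduces the sum to $\sum_u (\text{total degree of }u)\,(1+\|\bY_u\|^2)^2$. For a $K$-NN graph in $\R^d$, every in-degree is bounded by $K\,C_d$, with $C_d$ a kissing-number constant (as used with \cite{jaffe2020randomized} under Assumption \ref{assumption:normX_continuous}). The total degree is therefore at most $K(1+C_d)$, and the term is bounded by $C\,n^{-1}\sum_u(1+\|\bY_u\|^2)^2=O_P(1)$, which needs only the fourth moment.

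For the second term, $(\bar d_u/K-1)^2\le (C_d+1)^2$ by the same degree bound, and $\bar\bY^\top(\bY_u-\bar\bY_n)(\bY_u-\bar\bY_n)^\top\bar\bY\le\|\bar\bY_n\|^2\|\bY_u-\bar\bY_n\|^2$. So this term is also $O_P(1)$ by the law of large numbers.

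The main point needing care is precisely this deterministic in-degree bound for $K$-NN graphs. It is what keeps $\hat\sigma_n$ bounded without using any null structure. Everything else follows directly from the two theorems already established.
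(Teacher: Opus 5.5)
Your proof is correct and takes essentially the same route as the paper: the level claim is read off from Theorem \ref{thm:null_dist}, and consistency follows from $T_n \pto \E\|\E[\bY\mid\bX]-\E\bY\|_2^2>0$ together with $\hat\sigma_n^2=O_P(1)$, which you obtain via Cauchy--Schwarz and the $K$-NN degree bound of \cite{jaffe2020randomized}. Your additional details only make explicit what the paper leaves implicit: the left limit at $-z_{\alpha/2}$, obtaining the limit of $T_n$ through $T_n=\hat\eta_n D_n$, and the convention for $\hat\sigma_n=0$.
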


\begin{remark}\label{remark:eta_hat_asymp_dist} 
From Theorem \ref{thm:null_dist} we can also obtain the limiting null distribution of the estimator $\hat{\eta}_n$ (recall \eqref{eq:def_eta_hat}). For this, note that 
$\hat\eta_n = T_n / D_n$, where $T_n$ and $D_n$ are the numerator and denominator of  $\hat \eta_n$, respectively (recall \eqref{eq:Dn} and \eqref{eq:Tn}). The distributional convergence established in Theorem \ref{thm:null_dist} together with \eqref{eq:Dn}, implies that under $H_0$ (that is, whenever $\eta = 0$),
\begin{align*}
\frac{\sqrt{n}}{\hat\sigma_n}\hat\eta_n
\overset{D}{\rightarrow}
 N\left(0,
\left(\frac{1}{\E[\|\bY - \E\bY\|_2^2]}\right)^2
\right) . 
\end{align*}
\end{remark}

\begin{remark}\label{remark:distributioncomparison} 
An important feature of the test obtained above is that the rescaled test statistic $\sqrt{n} T_n/\hat{\sigma}_n$ has a standard normal distribution under $H_0$. As a result, the rejection threshold of the resulting test can be readily  obtained, without having to estimate any nuisance parameter or use resampling. Moreover, the proposed test admits an $O(n\log n)$ implementation (for fixed $K$), since both the statistic $T_n$ and the estimator $\hat{\sigma}_n^2$ can be computed in $O(n\log n)$ time (recall the discussion in Remark \ref{remark:graphcomputation}). In contrast, tests based on the martingale difference correlation and its kernel-based extensions \cite{lai2021kernel,lee2020testing,conditionalmeancovariates}, have non-Gaussian limiting null distributions (specifically, an infinite mixture of chi-squares). Closed form estimates for the quantiles of such distributions are not available, in general, which necessitates the use permutation/bootstrap techniques for determining the rejection thresholds. On the other hand, kernel-smoothing-based tests for conditional mean independence are often asymptotically normal under $H_0$, but their convergence to normality usually depends on the kernel bandwidth and becomes slower as the dimension increases (see \cite{lavergne2000nonparametric,fan1996consistent,delgado2001significance,zhu2018dimension,ait2001goodness,tian2025variation}, among others). Moreover, most of the aforementioned procedures are built on high-order $U$-statistics and are thus considerably more computationally intensive than the near-linear-time method proposed here. More recently, a growing body of work on conditional mean independence testing has employed machine-learning methods to estimate the conditional mean function, that are particularly effective in high-dimensional settings \cite{williamson2023general,dai2022significance,lundborg2024projected,zhang2025testing,cai2025test}. However, these methods typically require sample splitting for estimating the conditional mean function, which can reduce power in finite samples (see the empirical results in Section \ref{sec:cmi_test} and Appendix \ref{appendix:cmi_test}). It is also worth noting that many of the aforementioned methods study the conditional mean independence problem in the more general setting in which one controls for additional covariates. The nearest neighbor-based method described in Section \ref{sec:estimation_nn} can be adapted to this more general setting, however, the resulting estimator generally loses its unbiasedness under the null (recall Remark \ref{remark:unbaisedTn}). Consequently, additional debasing techniques will be necessary for constructing valid tests.
\end{remark}

\section{Variable Screening using NCMD }\label{sec:variable_screening_theory}

In this section, we propose a model-free variable screening procedure based on our estimator of the NCMD measure. Specifically, suppose $Y \in \R$ is a response variable and $\bX = (X_1,\ldots,X_d)$ is a set of covariates.
The objective of variable screening is to identify a parsimonious subset of covariates that preserves most of the explanatory power for the response. Concretely, the objective is to identify $S\subset[d]$ such that
\begin{align}\label{eq:sufficient_S}
    \E\left[Y\mid\bX\right] = \E\left[Y\mid\bX_{S}\right]\text{ almost surely} , 
\end{align}  
where $\bX_S = (X_i)_{i \in S}$, for $S \subset \{1,2,\ldots,d\}$. We call a subset $S\subset[d]$ as \textit{sufficient} if $S$ satisfies \eqref{eq:sufficient_S}. To motivate our approach, note that by a standard application of Jensen's inequality one has, 
\begin{align*}
    \left|\E[Y]\right|^2
    \leq \E\left[\left|\E[Y\mid\bX_{S^\prime}]\right|^2\right]
    \leq \E\left[\left|\E[Y\mid\bX_{S}]\right|^2\right]
    \leq \E\left[\left|\E[Y\mid\bX]\right|^2\right]
\end{align*}
for all $S^\prime\subseteq S\subseteq [d]$. This monotonicity suggests selecting the sufficient subset $S$ by maximizing
\[
V(S) = \E\left[\E[Y\mid\bX_{S}]^2\right].
\]
Now, suppose we are given i.i.d. samples $(Y_1, \bX_1), (Y_2, \bX_2), \ldots, (Y_n, \bX_n)$,  where $\bX_i = (X_{ij})_{1 \leq j \leq d} \in \mathbb R^d$. Based on this sample, our algorithm for variable screening then proceeds by adding one variable at a time as  follows: Suppose we have already selected the variables $\{s_1, \ldots, s_t\}$. The idea then is to choose the next variable $s_{t+1}$ such that $V\left(\{s_1,\ldots, s_t\}\bigcup\{s_{t+1}\}\right)$ is maximized. However, for any subset $S\subseteq[d]$, $V(S)$ is unknown, since it depends on the population conditional expectation. Hence, our choice is based on a consistent estimator of $V(S)$ (see \eqref{eq:nn_consistency}):
\begin{align*}
    \hat V_n\left(S\right)=\left|\frac{1}{n}\sum_{u=1}^{n} \frac{1}{K}\sum_{v \in N_{G\left(\sX_{S}\right)}(u)} Y_u Y_v \right| 
    \pto V(S),
\end{align*}
where $\sX_S = \{ (X_{ij})_{j \in S}: i\in [n]\}$ and $G\left(\sX_{S}\right)$ denotes the $K$-NN graph constructed using $\sX_S$. Hence, our variable screening algorithm proceeds as above with $V$ replaced by $\hat V_n$. The pseudocode of the method is given in Algorithm \ref{alg:greedy_screening}. The following result establishes consistency of the algorithm. 

\begin{algorithm}[t]
\caption{Variable screening with NCMD}
\label{alg:greedy_screening}
\begin{algorithmic}[1]
\STATE \textbf{Input:} Samples $(Y_1, \bX_1), (Y_2, \bX_2), \ldots, (Y_n, \bX_n)$ and the  number of neighbors $K \geq 1$. 

\STATE \textbf{Initialization: } $\hat S, s^\ast \gets \emptyset \text{ and } \hat V_n(\emptyset) \gets -\infty$. 

\WHILE{$\hat V_n(\hat S \cup \{s^\ast\}) \geq \hat V_n(\hat S)$ and $\hat S\subset [d]$}
    \STATE $\hat S \gets \hat S\cup\{s^\ast\}$
    \STATE Choose $s^\ast\in [d]\setminus \hat S$ such that $\hat V_n(\hat S \cup \{s\})$ is maximised. That is $$s^\ast \gets \arg\max_{s \in [d]\setminus \hat S} \hat V_n(\hat S \cup \{s\}).$$
\ENDWHILE
\STATE \textbf{Output:} $\hat S$
\vspace{0.5em}
\end{algorithmic}
\end{algorithm}

\begin{theorem}\label{thm:variable_screening} Assume the following holds: 
    \begin{enumerate}
        \item[(a)] There exists $\delta\in (0,1)$ such that for any insufficient subset $S\subset[d]$, there exists some $s$ such that $V\left(S\cup\left\{s\right\}\right)\geq V(S) + \delta$.
        \item[(b)] There exists $M>0$ such that $\E[Y^2]\leq M<\infty$. 
        \item[(c)] Fix $\kappa = \lfloor \frac{M}{\delta} + 1\rfloor$. Assumption \ref{assumption:normX_continuous} and  Assumption \ref{assumption:rate_of_convg} are satisfied with $\bX$ replaced by $\bX_S$, for any $S\subseteq[d]$ such that $|S|\leq \kappa$.
        \item[(d)] Assume that $m(\bx) = \E\left[Y\mid\bX = \bx\right]$ is uniformly bounded and $$\E\left[ e^{t(Y - m(\bX))} \mid \bX\right]\leq e^{\sigma^2t^2},$$ almost surely $P_{\bX}$, for some constant $\sigma>0$.
    \end{enumerate}
    Then there exist constants $L_1,L_2>0$, depending on $M,\delta, d$ (and the constants $\beta,C_1,C_2,C_3$ from Assumption \ref{assumption:rate_of_convg}), such that,
    \begin{align*}
        \P\left(\hat S\text{ is sufficient}\right)\geq 1-L_1d^\kappa e^{-L_2n} .
    \end{align*}
\end{theorem}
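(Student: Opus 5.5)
The argument mirrors the greedy-screening analysis of \citet{azadkia2021simple}. We reduce the event that $\hat S$ is insufficient to a finite union of events on which some estimate $\hat V_n(S)$ deviates from $V(S)$ by at least $\delta/4$. We then bound each such probability by a concentration inequality for the nearest neighbor statistic.

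\textbf{Step 1: the algorithm stops within $\kappa$ steps.} By (b) and Jensen's inequality, $0\le V(S)\le \E[Y^2]\le M$ for every $S$. Suppose the selected sets $\emptyset=S_0\subset S_1\subset\cdots$ satisfy
\[
V(S_{t+1})\ge V(S_t)+\delta/2
\]
at every step taken while $S_t$ is insufficient. Then at most $2M/\delta$ such steps can occur before $V$ exceeds $M$. After adjusting constants (using $\delta/2$ in place of $\delta$ in the definition of $\kappa$, or a finer split of the margin), all sets visited before an insufficient stop have cardinality at most $\kappa$.

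\textbf{Step 2: reduction to a good event.} Define the event
\[
\mathcal{E}=\bigl\{|\hat V_n(S)-V(S)|<\delta/4 \text{ for all } S\subseteq[d] \text{ with } |S|\le \kappa\bigr\}.
\]
On $\mathcal{E}$ we check two things.
\begin{itemize}
\item[(i)] If the current $\hat S$ is insufficient, assumption (a) gives some $s$ with $V(\hat S\cup\{s\})\ge V(\hat S)+\delta$. Hence $\hat V_n(\hat S\cup\{s\})>\hat V_n(\hat S)+\delta/2$, so the while-loop condition holds and the algorithm does not stop.
\item[(ii)] The maximizer $s^\ast$ satisfies $\hat V_n(\hat S\cup\{s^\ast\})\ge \hat V_n(\hat S\cup\{s\})$. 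Consequently $V(\hat S\cup\{s^\ast\})\ge V(\hat S)+\delta-\delta/2$, which is the increment used in Step 1.
\end{itemize}
By (i), the algorithm can never stop at an insufficient set, so on $\mathcal{E}$ the output $\hat S$ is sufficient. The number of sets with $|S|\le\kappa$ is at most $\sum_{j\le\kappa}\binom{d}{j}\le (d+1)^\kappa$. A union bound then gives
\[
\P(\hat S \text{ insufficient})\le d^\kappa\max_{|S|\le\kappa}\P\bigl(|\hat V_n(S)-V(S)|\ge\delta/4\bigr),
\]
up to a constant.

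\textbf{Step 3: exponential concentration for fixed $S$ (the main obstacle).} Write
\[
\frac1{nK}\sum_u\sum_{v\in N(u)}Y_uY_v-V(S)=A_n+B_n+C_n,
\]
where the three terms are defined as follows.
\begin{itemize}
\item $A_n=\frac1{nK}\sum_u\sum_{v\in N(u)}m(\bX_u)m(\bX_v)-\E[m_S(\bX_S)^2]$. By the tower property, conditioning through $\bX_S$ and writing $m_S=\E[Y\mid \bX_S]$, this is the analog of the bias/variance in Theorem~\ref{thm:rate_of_convg}.
\item $B_n$ collects the cross terms $\frac1{nK}\sum(\varepsilon_u m(\bX_v)+m(\bX_u)\varepsilon_v)$, where $\varepsilon=Y-m(\bX)$.
\item $C_n=\frac1{nK}\sum\varepsilon_u\varepsilon_v$.
\end{itemize}
For $B_n$ and $C_n$, condition on $\sX$. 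By (d) the $\varepsilon_u$ are conditionally independent and sub-Gaussian, $m$ is bounded, and the $K$-NN graph has in-degrees bounded by a constant depending only on $K$ and the dimension \citep{jaffe2020randomized}. Hoeffding's inequality then controls $B_n$, and the Hanson--Wright inequality controls the off-diagonal quadratic form $C_n$. Pairs with $u\to v$ and $v\to u$ must be merged first so that the diagonal vanishes. Both give bounds of the form $e^{-cn}$ for a fixed deviation $\delta/12$.

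For $A_n$, boundedness of $m$ and the bounded degree imply that changing one $\bX_u$ alters $O(1)$ terms by $O(1/n)$. McDiarmid's bounded-difference inequality therefore gives $e^{-cn}$ concentration around $\E A_n$. It remains to show $|\E A_n|\to0$ so that it falls below $\delta/24$ for large $n$. This uses Assumption~\ref{assumption:rate_of_convg}(2) together with NN-distance bounds under the exponential tails of (1), as in the bias part of Theorem~\ref{thm:rate_of_convg}, which yields a bias of order $(\log n)^{1+1/d}n^{-1/d}$.

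The subtle point is that $m$ here is $\E[Y\mid\bX]$ while the graph uses $\bX_S$. Since $m_S$ is also bounded, one must rewrite in terms of $m_S$. Conditionally on $\bX_S$, the variables $Y_u-m_S(\bX_{S,u})$ are still centered and sub-Gaussian, because $m$ is bounded and (d) holds. Redoing the decomposition with $m_S$ in place of $m$ avoids any mismatch.

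Finally, the absolute value in $\hat V_n$ only helps, since $||a|-b|\le|a-b|$ for $b\ge0$. Small $n$ are absorbed into $L_1$, which yields the bound $1-L_1d^\kappa e^{-L_2n}$.
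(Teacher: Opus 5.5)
Your architecture matches the paper's proof. It has a uniform-accuracy event over all $|S|\le\kappa$, a deterministic argument that this event forces $\hat S$ to be sufficient, and a union bound over $\lesssim d^\kappa$ sets. The per-set exponential bound comes from splitting $W_n(S)$ into three pieces: a signal quadratic form (McDiarmid plus the bias bound from the proof of Theorem~\ref{thm:rate_of_convg}), cross terms (sub-Gaussian Hoeffding), and a noise quadratic form (Hanson--Wright with $\|\bA_S\|_F^2\lesssim 1/(nK)$ and $\|\bA_S\|_{\mathrm{op}}\lesssim 1/n$). Step 3 is sound. Two side remarks:
\begin{itemize}
\item Your first decomposition with the full $m$ already works if you condition on all of $\sX_n$. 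The graph is $\sX_S$-measurable and $\E[m(\bX_u)m(\bX_v)\mid \sX_S]=m_S(\bX_{S,u})m_S(\bX_{S,v})$ for $u\neq v$. The paper instead uses $Y_u-m_S(\bX_{S,u})$ and needs Hoeffding's lemma to get conditional sub-Gaussianity given $\sX_S$.
\item The $K$-NN adjacency matrix already has zero diagonal, so nothing needs merging before Hanson--Wright.
\end{itemize}

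The gap is in Step 1: as written it does not close for the $\kappa$ fixed in the statement. With margin $\delta/4$, your (ii) only gives increments larger than $\delta/2$. That allows up to $2M/\delta$ insufficient steps, which can be nearly $2\kappa$. The sets visited then have size beyond $\kappa$, where neither Assumption (c) nor your event $\mathcal E$ gives any control. Replacing $\delta$ by $\delta/2$ in the definition of $\kappa$ is not allowed, since $\kappa$ fixes both the range in (c) and the factor $d^\kappa$.

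The ``finer split of the margin'' you mention is the right fix, but it has to use the strict inequality $\kappa=\lfloor M/\delta+1\rfloor>M/\delta$. Put $\theta=\kappa-M/\delta>0$ and take the margin $\epsilon<\delta\theta/(2\kappa)$; the paper uses $\delta\theta/(12\kappa)$. Suppose $S_0,\dots,S_{\kappa-1}$ were all insufficient. Then (i) shows the algorithm does not stop before $S_\kappa$, and (ii) with this $\epsilon$ gives
\[
V(S_\kappa)>V(S_0)+\kappa(\delta-2\epsilon)>\kappa\delta-\delta\theta=M\ge \E[Y^2]\ge V(S_\kappa),
\]
a contradiction. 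Hence a sufficient set is reached within $\kappa-1$ steps, only $\hat V_n(S)$ with $|S|\le\kappa$ is ever used, and sufficiency passes to supersets. The fixed deviations $\delta/12$ and $\delta/24$ in Step 3 must then become fractions of $\delta\theta/\kappa$, which only changes $L_1$ and $L_2$.

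Also restrict $\mathcal E$ to nonempty $S$. The algorithm sets $\hat V_n(\emptyset)=-\infty$, so $\mathcal E$ as stated cannot hold, and the argument never needs $\hat V_n(\emptyset)$.
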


The proof of Theorem~\ref{thm:variable_screening} is given in Appendix~\ref{sec:vspf}. 
First we show that whenever $\hat V_n(\cdot)$ approximates $V(\cdot)$ accurately enough over all subsets selected up to iteration $\kappa$ (which is ensured by Assumption (c)) $\hat S$ is sufficient (The existence of an sufficient set of appropriate size is guaranteed by Assumption (a)). Then we show that this approximation holds with high probability, thereby completing the proof.

\begin{remark} 
The variable selection procedure in Algorithm~\ref{alg:greedy_screening} is motivated by the \texttt{FOCI} algorithm proposed in \cite{azadkia2021simple} and its extension \texttt{KFOCI} in \cite{huang2022kernel}, where similar sequential variable selection algorithms are proposed using their respective measures of conditional dependence. Our method, on the other hand, is based on mean dependence, which is more relevant for additive noise models. The idea of using measures of mean independence for variable selection has also been explored in \cite{shao2014martingale} and \cite{tian2025variation}, under the assumption that the marginal utilities associated with the active predictors do not decay too rapidly. In particular, \cite{tian2025variation} employs the same underlying measure of mean independence; see \eqref{eq:EYX}, but estimates the conditional expectation using kernel smoothing. We compare our approach with these existing methods empirically in Section~\ref{sec:variable_screening}.  
\end{remark}


\section{Higher Order Sobol' Indices}\label{sec:higher_sobol}

As discussed in Section \ref{sec:connect_sobol}, the NCMD index $\eta$ measures the proportion of the output variance attributable to a given collection of variables. In other words, this is the `total effect' of the variables on the output. A natural next step is to move beyond total effects and study `interaction effects', namely, the portion of the output variance that is explained by the joint action of groups of variables. Quantifying these effects is important for identifying non-additive structure and for assessing whether interactions among inputs play a substantive role in driving output variability. These interaction effects are naturally characterized by higher-order Sobol' indices, which arise from the higher-order terms in the Hoeffding decomposition of the regression function (similar to \eqref{eq:hoeffdin_decomp_1}). In this section, we discuss how our nearest neighbor-based strategy can be extended to estimate such higher-order indices. For notational simplicity, we focus on second-order Sobol' indices measuring pairwise interactions in presence of the three input variables. The extension to higher-order indices can be carried out similarly (see \cite{tissot2015randomized,chastaing2012generalized, saltelli2002making}).

Recalling the notation from Section \ref{sec:connect_sobol}, consider the model $\bY = f(\bX_1, \bX_2, \bZ)$, where $\bX_1$, $\bX_2$, and $\bZ$ are independent covariates taking values in $\R^{d_1}$, $\R^{d_2}$, and $\R^{d_{\bZ}}$, respectively, and $\bY$ is a response taking values in $\R^{p}$. By the Hoeffding decomposition (see \cite{van2000asymptotic}), 
\begin{align}\label{eq:three_hoeffding}
    \bY = \E[\bY] + f_{\bX_1} + f_{\bX_2} + f_{\bZ} + f_{\bX_1\bX_2} + f_{\bX_1\bZ} + f_{\bX_2\bZ} + f_{\bX_1\bX_2\bZ}.
\end{align}
where $f_{\bX_1}$, $f_{\bX_2}$, $f_{\bZ}$ are as defined in Section \ref{sec:connect_sobol}, $$f_{\bX_1\bX_2} = \E[\bY\mid\bX_1,\bX_2] - f_{\bX_1} - f_{\bX_2} - \E[\bY]$$ and $f_{\bX_a\bZ} = \E[\bY\mid\bX_a,\bZ] - f_{\bX_a} - f_{\bZ} - \E[\bY]$, for $a \in \{1, 2\}$. Finally, $f_{\bX_1\bX_2\bZ}$ is defined through the identity in \eqref{eq:three_hoeffding}. Using the orthogonality of the components in \eqref{eq:three_hoeffding}, the covariance of $\bY$ decomposes as
\begin{align}\label{eq:cov_decomp_3}
    \Sigma_{\bY} = \Sigma_{f_{\bX_1}} + \Sigma_{f_{\bX_2}} + \Sigma_{f_{\bZ}} + \Sigma_{f_{\bX_1\bX_2}} + \Sigma_{f_{\bX_1\bZ}} + \Sigma_{f_{\bX_2\bZ}} + \Sigma_{f_{\bX_1\bX_2\bZ}},
\end{align}
where the individual covariance matrices are the covariance matrices of the corresponding components in \eqref{eq:three_hoeffding}. Recall from \eqref{eq:eta_trace_def} that the main effects quantify only the proportion of the total variance in $\bY$ explained individually by each input. The covariance decomposition above allows us to define higher-order extensions of Sobol' indices that quantify the variability in $\bY$ explained by interactions among the inputs. For notational brevity, we consider only the interaction between $\bX_1$ and $\bX_2$. When the model output $\bY$ is scalar (that is, $p = 1$), the second-order Sobol' index is defined as $\eta_2 = \Var[f_{\bX_1\bX_2}]/\Var[\bY]$ (see \cite{chastaing2012generalized,tissot2015randomized}). For multivariate model outputs, this definition can be generalized in the same spirit as Section \ref{sec:connect_sobol} as:   
\begin{align}\label{eq:def_eta_2}
    \eta_2 := \frac{\Tr\left(\Sigma_{f_{\bX_1\bX_2}}\right)}{\Tr(\Sigma_{\bY})}.
\end{align}
Using the orthogonality of the components in the decomposition \eqref{eq:three_hoeffding} together with the definition of $f_{\bX_1\bX_2}$, it is straightforward to show that
\begin{align}\label{eq:eta_2_alternate}
    \eta_2 = \frac{\E\left[\left\|\E[\bY\mid \bX_1,\bX_2] - \E[\bY]\right\|_2^2\right]}{\E\left[\left\|\bY - \E[\bY]\right\|_2^2\right]} - \eta_{\bX_1} - \eta_{\bX_2}  ,  
\end{align}
where $\eta_{\bX_1}$ and $\eta_{\bX_2}$ are the Sobol' indices corresponding to the main effects of $\bX_1$ and $\bX_2$ from \eqref{eq:eta_trace_def}. This decomposition highlights that the second-order Sobol' index $\eta_2$ measures the proportion of the total variance of $\bY$ that is attributable solely to the interaction between $\bX_1$ and $\bX_2$, beyond their individual contributions. The following proposition establishes the key properties of the second-order Sobol' index, analogous to Proposition \ref{ppn:normalized}. The proof follows directly from the Hoeffding decomposition and the covariance decomposition in \eqref{eq:cov_decomp_3}, and is therefore omitted.

\begin{proposition}\label{prop:eta2_properties}
    Suppose $\E[\|\bY\|_2^2]<\infty$ and $\bY$ is not almost surely constant. Then the second-order Sobol' index $\eta_{2}$ defined in \eqref{eq:def_eta_2} satisfies the following properties:
    \begin{enumerate}[label=$(\mathrm{Q}\arabic*)$,ref=$(\mathrm{Q}\arabic*)$]
    \item \label{item:Q1} $\eta_{2}\in [0,1]$. 
    \item \label{item:Q2} $\eta_{2} = 0$ if and only if $f_{\bX_1\bX_2} = 0$ almost surely, that is, the conditional mean of $\bY$ given $(\bX_1, \bX_2)$ is additively separable.
    \item \label{item:Q3} $\eta_{2} = 1$ if and only if $f_{\bX_1}$, $f_{\bX_2}$, $f_{\bZ}$, $f_{\bX_1\bZ}$, $f_{\bX_2\bZ}$, and $f_{\bX_1\bX_2\bZ}$ are all zero almost surely, so that all variability in $\bY$ arises purely through the interaction between $\bX_1$ and $\bX_2$.
\end{enumerate}
\end{proposition}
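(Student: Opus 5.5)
The plan is to reduce everything to the covariance decomposition \eqref{eq:cov_decomp_3} by taking traces. Write $\tau_A := \Tr(\Sigma_{f_A}) = \E[\|f_A\|_2^2]$ for each index set $A \subseteq \{\bX_1,\bX_2,\bZ\}$, $A\neq\emptyset$. The components in \eqref{eq:three_hoeffding} are mean zero and mutually $L_2$-orthogonal, because $\bX_1,\bX_2,\bZ$ are independent and $\E[\|\bY\|_2^2]<\infty$. So taking the trace of \eqref{eq:cov_decomp_3} gives
\begin{align*}
\Tr(\Sigma_{\bY}) = \tau_{\bX_1}+\tau_{\bX_2}+\tau_{\bZ}+\tau_{\bX_1\bX_2}+\tau_{\bX_1\bZ}+\tau_{\bX_2\bZ}+\tau_{\bX_1\bX_2\bZ},
\end{align*}
with every $\tau_A\geq 0$. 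Since $\bY$ is not almost surely constant, $\Tr(\Sigma_{\bY})>0$, and $\eta_2=\tau_{\bX_1\bX_2}/\Tr(\Sigma_{\bY})$ is well defined.

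For \ref{item:Q1}, nonnegativity is immediate since $\tau_{\bX_1\bX_2}\ge 0$. The bound $\eta_2\le 1$ holds because $\tau_{\bX_1\bX_2}$ is one of the nonnegative summands of $\Tr(\Sigma_{\bY})$.

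For \ref{item:Q2}, $\eta_2=0$ iff $\tau_{\bX_1\bX_2}=\E\|f_{\bX_1\bX_2}\|_2^2=0$, iff $f_{\bX_1\bX_2}=0$ almost surely. By the definition of $f_{\bX_1\bX_2}$, this is equivalent to
\[
\E[\bY\mid\bX_1,\bX_2]=\E[\bY]+f_{\bX_1}(\bX_1)+f_{\bX_2}(\bX_2),
\]
that is, additive separability. For the converse of the parenthetical claim, I would use the following. If $\E[\bY\mid\bX_1,\bX_2]=a(\bX_1)+b(\bX_2)$, then conditioning on $\bX_1$ and using independence gives $a(\bX_1)=\E[\bY\mid\bX_1]-\E[b(\bX_2)]$, and similarly for $b$. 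Hence $a+b=f_{\bX_1}+f_{\bX_2}+\E[\bY]$, so $f_{\bX_1\bX_2}=0$.

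For \ref{item:Q3}, $\eta_2=1$ iff $\tau_{\bX_1\bX_2}=\Tr(\Sigma_{\bY})$. By the trace identity above and nonnegativity of the terms, this holds iff all other $\tau_A$ vanish, i.e.\ iff the remaining six components are zero almost surely.

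The only step that needs real care is justifying the orthogonality underlying \eqref{eq:cov_decomp_3}. In particular, $\E[f_A\mid \bm W_B]=0$ whenever $A\not\subseteq B$, which uses the independence of the inputs and the specific definition of the components, including that $f_{\bX_1\bX_2\bZ}$ is defined as a residual. Since the paper states \eqref{eq:cov_decomp_3} and the proposition says to take it as given, I would simply invoke it. If pressed, I would verify the cross-covariances for $f_{\bX_1\bX_2\bZ}$ by conditioning on each pair of inputs.
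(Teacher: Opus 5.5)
Your proposal is correct and follows the route the paper indicates. The paper omits the proof, saying only that it follows directly from the Hoeffding decomposition and the covariance decomposition \eqref{eq:cov_decomp_3}; you do exactly that by taking traces, using nonnegativity of each summand, and reading off (Q1)--(Q3). Your extra checks (the converse of additive separability via independence, and orthogonality via $\E[f_A\mid \bm W_B]=0$ for $A\not\subseteq B$) supply the details the paper leaves implicit.
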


As in Section \ref{sec:estimation_nn}, we estimate $\eta_2$ by a nearest neighbor approach as follows:  For notational convenience, denote $\tilde{\bm X} = (\bX_{1}, \bX_{2})$. Given i.i.d. samples $\{(\bY_i, \tilde{\bm X}_i ,\bZ_{i}): 1\leq i\leq n\}$, let $\gX_{n} = \{\tilde{\bm X}_i: 1\leq i\leq n\}$ with the corresponding $K$-nearest neighbor graph $G(\gX_n)$. Then using the identity from \eqref{eq:eta_2_alternate} and recalling the estimates of the main effects from \eqref{eq:def_eta_hat} we define the estimate of $\eta_2$ as,
\begin{align}\label{eq:def_hat_eta_2}
    \hat \eta_2 = \frac{\frac{1}{n}\sum_{u=1}^{n} \frac{1}{K}\sum_{v \in N_{G\left(\gX_n\right)}(u)}\bY_u^\top \bY_v - \frac{1}{n(n-1)}\sum_{1 \leq u \neq v \leq n}\bY_u^\top \bY_v}{\frac{1}{n}\sum_{u=1}^{n}\left\|\bY_u\right\|^2 - \frac{1}{n(n-1)}\sum_{1 \leq u \neq v \leq n}\bY_u^\top \bY_v} - \hat{\eta}_{\bX_1} - \hat{\eta}_{\bX_2}  , 
\end{align}
where $N_{G(\gX_n)}(\cdot)$ denotes the set of neighbors defined in \eqref{eq:NGX_n} for the graph $G(\gX_n)$ and $\hat \eta_{\bX_1}, \hat \eta_{\bX_2}$ are estimates of the main effects $\eta_{\bX_1},\eta_{\bX_2}$ defined using samples $\{(\bY_i,\bX_{1,i}): 1\leq i\leq n\}$ and $\{(\bY_i, \bX_{2,i}): 1\leq i\leq n\}$, respectively (see \eqref{eq:def_eta_hat}). The following result establishes the consistency and rate of convergence of $\hat \eta_2$. 

\begin{theorem}\label{thm:eta2_convg}
    Suppose $\bY$ is not almost surely a constant. Then,
    \begin{enumerate}
        \item [$(1)$] (Consistency) If Assumption \ref{assumption:normX_continuous} holds for $\bX_1$, $\bX_2$, and $(\bX_1, \bX_2)$, and let $\E[\|\bY\|_2^{4+\delta}]<\infty$ for some $\delta>0$. Then as $n\ra\infty$,
        \begin{align*}
            \hat \eta_{2}\pto \eta_{2}.
        \end{align*}
        \item [$(2)$] (Rate of convergence) Suppose that Assumption \ref{assumption:normX_continuous} and Assumption \ref{assumption:rate_of_convg} are satisfied for $\bX_1$, $\bX_2$, and $(\bX_1, \bX_2)$. Then 
        \begin{align*}
            \left|\hat \eta_{2} - \eta_{2}\right| = O_P\left(\max\left\{\frac{(\log n)^2}{\sqrt{n}}, \frac{(\log n)^{1 + 1/d_1}}{n^{1/d_1}}, \frac{(\log n)^{1 + 1/d_2}}{n^{1/d_2}}, \frac{(\log n)^{1 + 1/(d_1 + d_2)}}{n^{1/(d_1 + d_2)}}\right\}\right) . 
        \end{align*}
    \end{enumerate}
\end{theorem}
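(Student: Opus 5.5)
The plan is to reduce Theorem \ref{thm:eta2_convg} to Theorems \ref{thm:etan_consistent} and \ref{thm:rate_of_convg}, applied three times. Write $\hat\eta_{\tilde{\bm X}}$ for the first ratio in \eqref{eq:def_hat_eta_2}. It is exactly the estimator \eqref{eq:def_eta_hat}, computed from the samples $\{(\bY_i,\tilde{\bm X}_i)\}$ with the $K$-NN graph $G(\gX_n)$. Define the population counterpart
\[
\eta_{\tilde{\bm X}} := \frac{\E\left[\left\|\E[\bY\mid \tilde{\bm X}] - \E[\bY]\right\|_2^2\right]}{\E\left[\left\|\bY - \E[\bY]\right\|_2^2\right]} .
\]
By \eqref{eq:eta_2_alternate}, $\eta_2=\eta_{\tilde{\bm X}}-\eta_{\bX_1}-\eta_{\bX_2}$. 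The triangle inequality then gives
\[
|\hat\eta_2-\eta_2|\le |\hat\eta_{\tilde{\bm X}}-\eta_{\tilde{\bm X}}|+|\hat\eta_{\bX_1}-\eta_{\bX_1}|+|\hat\eta_{\bX_2}-\eta_{\bX_2}|.
\]
So it suffices to control each of the three terms separately. Their dependence on one another, through the shared responses $\bY_i$, is irrelevant, since we only need marginal bounds on each term.

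For part (1), each pair $(\bY,\bX_1)$, $(\bY,\bX_2)$ and $(\bY,\tilde{\bm X})$ is a joint distribution on $\R^p\times\R^{d}$ with $d\in\{d_1,d_2,d_1+d_2\}$. By hypothesis, Assumption \ref{assumption:normX_continuous} holds for each conditioning vector, and the moment condition $\E[\|\bY\|_2^{4+\delta}]<\infty$ is shared. The standing assumptions of Section \ref{sec:def_eta} (existence of the conditional law, $\bY$ non-degenerate) also hold for each pair. Theorem \ref{thm:etan_consistent} therefore gives convergence in probability of each of the three estimators. Since a finite sum of $o_P(1)$ terms is $o_P(1)$, we get $\hat\eta_2\pto\eta_2$.

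For part (2), we apply Theorem \ref{thm:rate_of_convg} to each pair in the same way. This gives the rate $O_P(\max\{(\log n)^2/\sqrt n,(\log n)^{1+1/d}/n^{1/d}\})$ with $d=d_1$, $d_2$ and $d_1+d_2$ respectively. Because a sum of three $O_P(r_{n,i})$ terms is $O_P(\max_i r_{n,i})$, we obtain the displayed bound. The term with exponent $1/(d_1+d_2)$ of course dominates the other two bias terms, but keeping all of them costs nothing.

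The only step needing care is to check that the hypotheses "for $\bX_1$, $\bX_2$, and $(\bX_1,\bX_2)$" are exactly those Theorem \ref{thm:rate_of_convg} requires in each application. In particular, Assumption \ref{assumption:rate_of_convg}(2) concerns the regression functions $\E[\bY\mid\bX_a]$ and $\E[\bY\mid\tilde{\bm X}]$. The centering $\E[\bX]=0$ in Assumption \ref{assumption:rate_of_convg}(1) is harmless: translating $\bX$ leaves the $K$-NN graph unchanged, and the Lipschitz-type condition (2) survives the translation up to changes in the constants. Note also that the independence of $\bX_1$, $\bX_2$ and $\bZ$ is needed only to identify $\eta_2$ through \eqref{eq:eta_2_alternate}, not for the estimation bounds themselves.
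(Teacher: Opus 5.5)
Your proposal is correct and matches the paper's (omitted) argument: the paper also treats each of the three terms in \eqref{eq:def_hat_eta_2} separately, using Theorem~\ref{thm:etan_consistent} for consistency and Proposition~\ref{prop:rate_nn_est} together with Lemma~\ref{lemma:ratio_Op} for the rate. The only cosmetic difference is that you apply Theorem~\ref{thm:rate_of_convg} to each ratio and combine by the triangle inequality, whereas the paper (all three ratios share the denominator $D_n$) bounds the numerators first and then applies Lemma~\ref{lemma:ratio_Op} once; the two are equivalent.
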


The proof of Theorem~\ref{thm:eta2_convg} follows along similar lines to that of Theorem~\ref{thm:etan_consistent} and is therefore omitted. Specifically, consistency follows by a straightforward adaptation of the proof of Theorem~\ref{thm:etan_consistent}, establishing convergence for each of the three terms in \eqref{eq:def_hat_eta_2}. Likewise, the convergence rate can be derived by applying the argument of Proposition~\ref{prop:rate_nn_est} to each of the three terms in \eqref{eq:def_hat_eta_2} and subsequently invoking Lemma~\ref{lemma:ratio_Op}, yields the desired result. Finally, as already mentioned at the beginning of this section, all results extend to arbitrary collections of input variables and higher-order interactions. 


\section{Simulations and Real Data Analysis}
\label{sec:experiments}

In this section, we evaluate the empirical performance of the proposed methods in various  experimental settings and benchmark datasets.

\subsection{Testing Conditional Mean Independence}\label{sec:cmi_test}

In this section, we compare the conditional mean independence test in \eqref{eq:conditional_mean_test} with several related methods in terms of Type I error, power, and computational time. For the proposed test (referred to as \texttt{NCMD} in the figures) we use the number of nearest neighbors $K\in\{5,10\}$. For comparison, we consider tests based on Martingale Difference Divergence (MDD) \citep{shao2014martingale} implemented using the multiplier bootstrap (as in \cite{lee2020testing}) with $B=200$ bootstrap samples, Distance Covariance (dCov) \citep{szekely2007measuring}, the Chatterjee correlation \citep{chatterjee2021new}, and the partial Mean Independence Test (pMIT) \citep[Section~4]{cai2025test}. For pMIT we estimate the conditional mean using XGBoost \citep{chen2016xgboost} with two data-splitting schemes: $r=0.8$ and the data-driven choice of \cite{dai2022significance}. All experiments use sample size $n=250$, nominal level $\alpha=0.05$, and the empirical Type I error/power is calculated based on $200$ Monte Carlo repetitions. Note that among these methods, MDD and pMIT directly test conditional mean independence, while dCov and the Chatterjee correlation test for general statistical dependence.

\begin{example}\label{example:univariate}
We consider the following response models with univariate predictors. In each case $\varepsilon\sim N(0,1)$, $\lambda\ge0$ controls the noise level, and we observe i.i.d.\ samples $\{(Y_i,X_i)\}_{i=1}^n$ from the respective joint distributions. 
\begin{itemize}
    \item \texttt{Linear:} $Y = 0.5X + 3\lambda \varepsilon$.
    
    \item \texttt{Step:} $Y = s(X) + 10\lambda \varepsilon$, where 
    $s(x)=-3$, for $x\le-0.5$, $s(x)=2$, for $-0.5<x\le0$, $s(x)=-4$, for $0<x\le0.5$, and $s(x)=-3$, for $x>0.5$.

    \item \texttt{W-Shaped:} $Y = w(X) + 0.75\lambda \varepsilon$, where 
    $w(x)=|x+0.5|$ for $x<0$ and $w(x)=|x-0.5|$ for $x\ge0$.

    \item \texttt{Sinusoid:} $Y = \cos(8\pi X) + 3\lambda \varepsilon$.

    \item \texttt{Circular:} $Y = Z\sqrt{(1-X^2)_+} + 0.9\lambda \varepsilon$, where $Z\sim 2\,\mathrm{Bernoulli}(0.5)-1$.

    \item \texttt{Heteroskedastic:} $Y = 3(\bm{1}\{|X|\le0.5\}(2-\lambda)+\lambda )\varepsilon$.
\end{itemize}
Observe that when $\lambda=0$ the signal is noiseless, and increasing $\lambda$ weakens the dependence between $X$ and $Y$. The first four settings evaluate empirical power, while \texttt{Circular} and \texttt{Heteroskedastic} serve as Type-I error benchmarks for \eqref{eq:h0_formal}. The results are shown in Figure~\ref{fig:power_time_uniform}, with  $X\sim\mathrm{Uniform}[-1,1]$. We also repeat the experiments with $X\sim N(0,1)$ and $X=2U-1$ where $U\sim\mathrm{Beta}(\alpha,\alpha)$ with $\alpha=0.1$. 
The corresponding results shown in Figures~\ref{fig:power_time_normal} and \ref{fig:power_time_beta} in Appendix~\ref{appendix:cmi_test}, respectively.
\end{example}

\begin{figure}[t]
    \centering
    \begin{minipage}{0.6\linewidth}
        \centering
        \includegraphics[width=\linewidth]{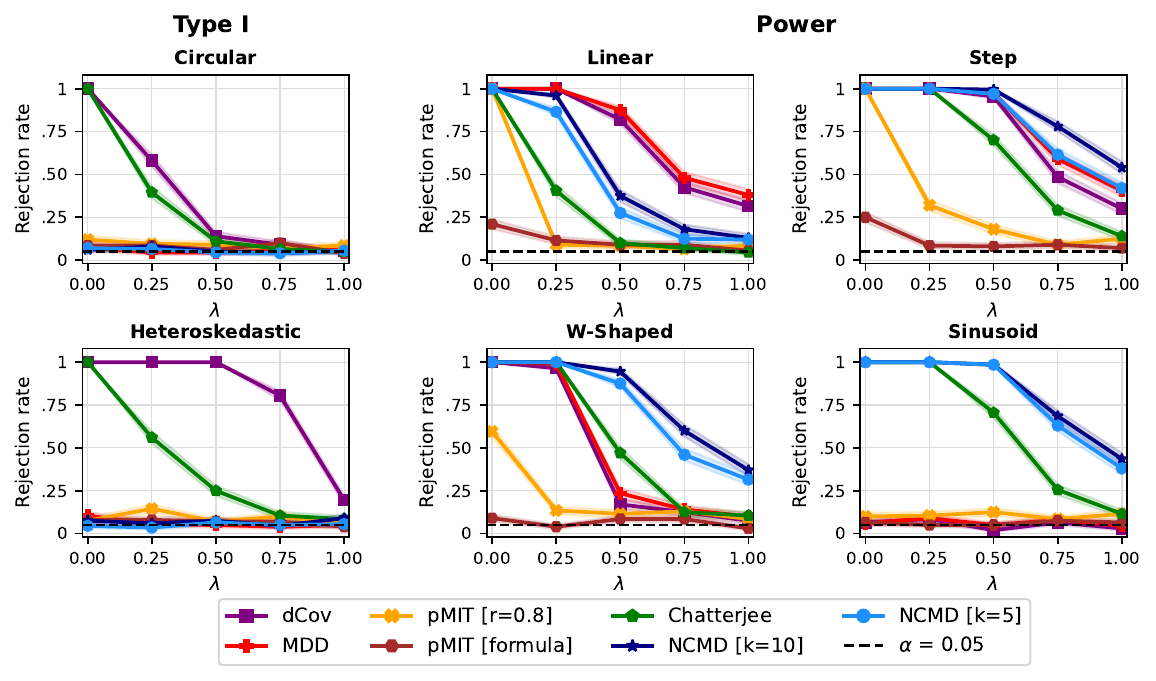} 
        \small{ (a) }  
    \end{minipage}%
    \begin{minipage}{0.4\linewidth}
        \centering
        \includegraphics[width=\linewidth]{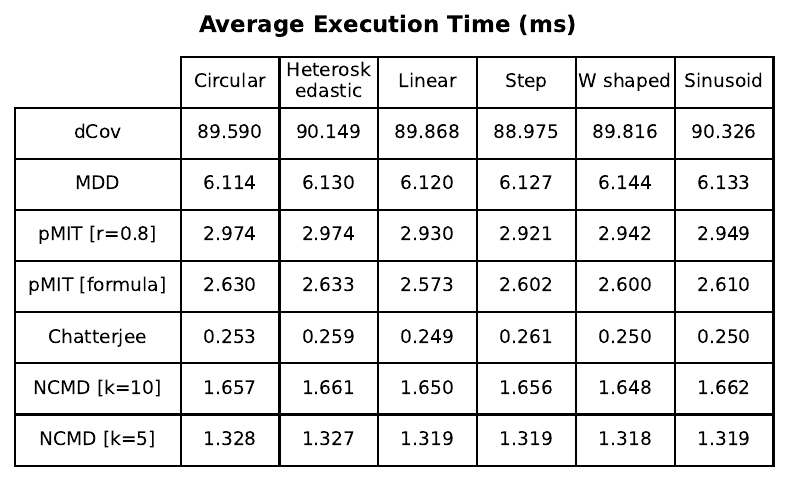}
        \small{ (b) } 
    \end{minipage}

    \caption{\small{Empirical Type I error/power and (b) computational time for conditional mean independence testing for the settings in Example \ref{example:univariate}, when $X\sim\mathrm{Uniform}[-1,1]$.}}
    \label{fig:power_time_uniform}
\end{figure}

\begin{example}\label{example:multivariate} 
Next, we consider the following models with a multivariate predictor $$\bX=(X_1,\ldots,X_d)\in\mathbb{R}^d$$ and a scalar response $Y$. Throughout, we take $d=10$, and assume that the coordinates of $\bX$ are generated independently from a common underlying distribution, which will be specified below. As before, $\varepsilon \sim N(0, 1)$, $\lambda \geq 0$ controls the noise level, and we observe i.i.d.\ samples $\{(Y_i,\bX_i)\}_{i=1}^n$ from the corresponding joint distribution.
\begin{itemize}

\item \texttt{Noise:} $Y = \lambda \varepsilon$.

\item \texttt{Heteroskedastic:} $Y = Z(1+2\|\bX\|_2^2) + \lambda \varepsilon$, where $Z\sim\mathrm{Uniform}\{-1,1\}$.

\item \texttt{Nonlinear Additive:} $Y = \sin(\pi X_1) + \log(|X_2|+1) + \lambda \varepsilon$.

\item \texttt{Interaction:} $Y = X_1X_2 + \lambda \varepsilon$.

\item \texttt{Radial:} $Y = \cos(r) + \lambda \varepsilon$, where $r=\frac{1}{\sqrt{5}}\|\bX_{S}\|_2$ and $S \subset[d]$ is chosen uniformly at random among subsets of size $5$.

\item \texttt{Nonlinear Interaction:} $Y = \sin(X_1) + \cos(X_2)X_3 + \lambda \varepsilon$.

\end{itemize}
\begin{figure}[t]
    \centering
    \begin{minipage}{0.6\linewidth}
        \centering
        \includegraphics[width=\linewidth]{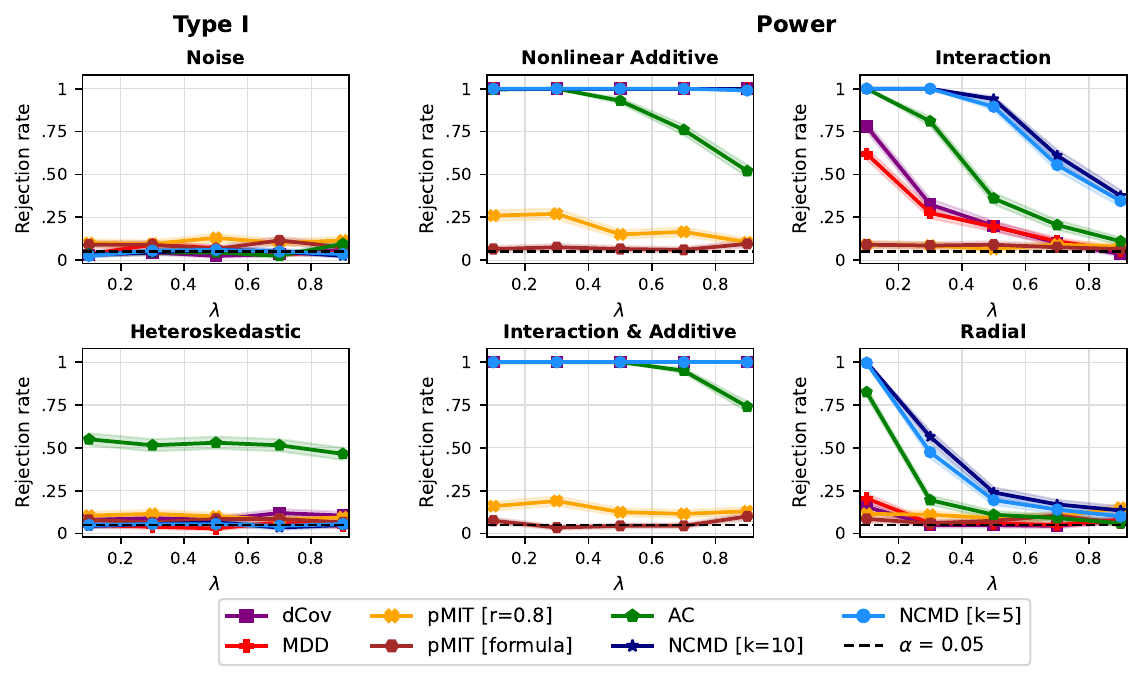} 
        \small{ (a) }  
    \end{minipage}%
    \begin{minipage}{0.4\linewidth}
        \centering
        \includegraphics[width=\linewidth]{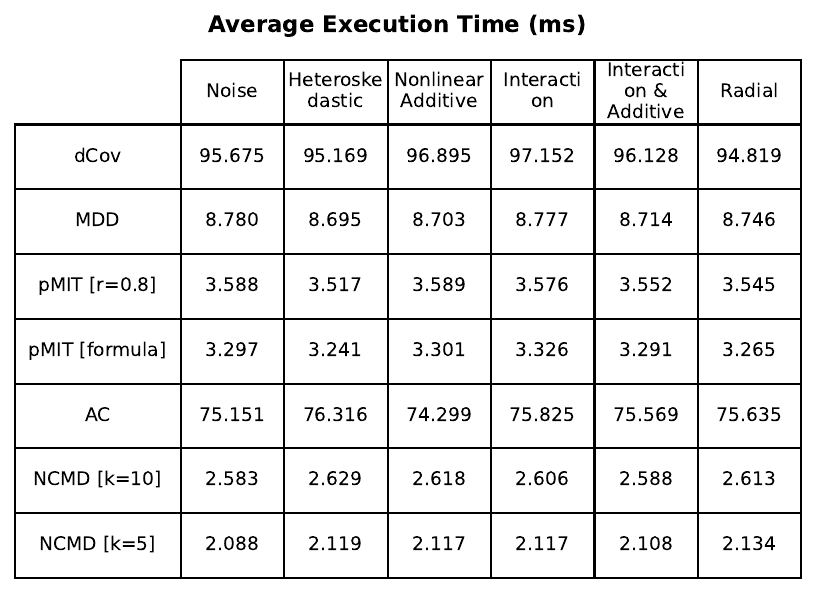} 
        \small{ (b) }  
    \end{minipage}

    \caption{\small{Empirical Type I error/power and (b) computational time for conditional mean independence testing for the settings in Example \ref{example:multivariate}, when $X\sim\mathrm{Uniform}([-1,1]^{10})$.}} 
    \label{fig:power_time_uniform_multivariate}
\end{figure}
Note that for the \texttt{Nonlinear Additive}, \texttt{Interaction}, \texttt{Radial}, and \texttt{Nonlinear Interaction} models, $Y$ depends only on the first few or a random subset of coordinates of $\mathbf{X}$, while the remaining coordinates are pure noise. Also, observe that \texttt{Noise} and \texttt{Heteroskedastic} settings correspond to Type-I error benchmarks and the remaining four cases is for power comparison. The comparison is performed with the same collection of methods as in Example \ref{example:univariate}. The only difference is that Chatterjee's correlation (which is a measure of association between two univariate random variables) is replaced by the Azadkia--Chatterjee coefficient \citep{azadkia2021simple}, implemented using a permutation test.
The results are displayed in Figure~\ref{fig:power_time_uniform_multivariate}, when the coordinates of $\bX$ are generated independently from $\mathrm{Uniform}[-1,1]$. The corresponding results for Gaussian and Beta distributed covariates are given in Figures~\ref{fig:power_time_normal_multivariate} and \ref{fig:power_time_beta_multivariate} in Appendix~\ref{appendix:cmi_test}, respectively.
\end{example} 

The following are the key findings from our experiments: 

\begin{itemize} 

\item Overall, the proposed \texttt{NCMD} test achieves higher empirical power than the competing methods in most of the simulation settings considered (see Figures~\ref{fig:power_time_uniform}(a), \ref{fig:power_time_uniform_multivariate}(a), as well as Figures~\ref{fig:power_time_normal}(a), \ref{fig:power_time_beta}(a), \ref{fig:power_time_normal_multivariate}(a), and \ref{fig:power_time_beta_multivariate}(a)  in Appendix~\ref{appendix:cmi_test}). The only exceptions are the \texttt{Linear} model and the \texttt{Nonlinear Additive Model} (when the covariates are Beta distributed), in which case the test based on MDD performs best, followed closely by the \texttt{dCov}-based test.

\item In the setting of Example \ref{example:univariate}, \texttt{Chatterjee} and \texttt{dCov} fail to control the Type I error in the \texttt{Circular} and \texttt{Heteroskedastic} models, since $Y$ remains dependent on $X$ even though $\mathbb{E}[Y\mid X]=\mathbb{E}[Y]$. By contrast, the conditional mean independence tests (\texttt{MDD}, \texttt{pMIT}, and \texttt{NCMD}) maintain correct Type I error control. A similar phenomenon is observed in Example \ref{example:multivariate}. In particular, under the \texttt{Noise} model, where $Y$ and $\bX$ are independent, all methods control the Type I error at the nominal level. However, the \texttt{Heteroskedastic Null} model satisfies conditional mean independence but not full independence. In this case, as expected, only the conditional mean independence tests (\texttt{MDD}, \texttt{pMIT}, and \texttt{NCMD}) provide valid Type I error control.

\item In terms of computational time, the proposed method is substantially faster than most of the competing procedures. This is because our test admits an efficient near-linear-time implementation, while the \texttt{MDD} and \texttt{dCov} based procedures require considerably more expensive permutation or bootstrap calibration (recall the discussion in Remark \ref{remark:distributioncomparison}). The \texttt{pMIT} test, which is also based on a simple asymptotic null distribution, is considerably faster than the resampling-based methods, but it is still slower than \texttt{NCMD} because it requires estimating the conditional mean function using a machine-learning method (here implemented using \texttt{XGBoost}). It is also worth noting that, in the univariate setting, the test based on Chatterjee's correlation is faster than \texttt{NCMD} (see Figure~\ref{fig:power_time_uniform}(b), and Figures~\ref{fig:power_time_normal}(b), and \ref{fig:power_time_beta}(b) in Appendix~\ref{appendix:cmi_test}). This is because it is likewise based on an asymptotically Gaussian test statistic and does not require estimating the asymptotic variance. In the multivariate setting, however, the test based on the Azadkia--Chatterjee coefficient implemented via a permutation procedure. As a result, the \texttt{NCMD} methods are substantially faster than the Azadkia--Chatterjee method in multivariate settings. 

\end{itemize}

\subsection{Variable Screening}\label{sec:variable_screening}

In this section, we evaluate the performance of the nearest neighbor based variable screening (\texttt{NNVS}) algorithm described in Algorithm~\ref{alg:greedy_screening} both in a simulation setting and on a real dataset. We compare the proposed \texttt{NNVS} algorithm with the following existing methods: \texttt{KFOCI} (with Gaussian kernel) \citep{huang2022kernel}, \texttt{MDCSIS} \citep{shao2014martingale}, \texttt{BcorSIS} \citep{pan2019generic}, and \texttt{Kfilter} \citep{mai2015fused}. \texttt{BcorSIS} and \texttt{Kfilter} are implemented using the R package \texttt{MFSIS} \citep{cheng2023generic}.

\subsubsection{Simulations}

We consider the following simulation settings. Generate $\mathbf{X} = (X_1, \ldots, X_d) \sim  N_d(\mathbf{0}, I_d)$, with $d \in \{10,25\}$. Then the response $Y$ (which only depends on $X_1$, $X_2$, and $X_3$) is obtained as follows. As before, $\varepsilon \sim N(0, 1)$. 

\begin{itemize}
    \item[(S1)] $Y = X_1 X_2 + X_1 - X_3 + \varepsilon$.
    
    \item[(S2)] $Y = \sin(X_1) + \cos(X_2) X_3 + \varepsilon$.
    
    \item[(S3)] In this setting, 
    $$Y =
    \begin{cases}
        \cos(X_1) + \sin(X_3) + \varepsilon, & X_2 < 0 , \\
        \sin(X_1) + \cos(X_3) + \varepsilon, & X_2 \ge 0 . 
    \end{cases} $$
    
    \item[(S4)] In this setting, 
    $$Y =
    \begin{cases}
        \cos(X_1)\exp(X_3) + \varepsilon, & X_2 < 0 , \\
        \sin(X_3)\exp(X_1) + \varepsilon, & X_2 \ge 0 . 
    \end{cases} $$ 
    
\end{itemize}
In our experiments we set $n = 300$ and use the number of nearest neighbors $K = 10$ for \texttt{NNVS} and \texttt{KFOCI} methods. The results are shown in Table~\ref{tab:screening_results}. Each cell reports, in order, the proportion of exact selections, the proportion of selections containing the correct set, and the average selection size, with each quantity computed from 100 independent repetitions of the experiment. For \texttt{MDCSIS}, \texttt{BcorSIS}, and \texttt{Kfilter}, the default number of selected variables is $n/\log n$. 
\begin{table}[t]
\small
\centering
\caption{\small{ Performance of the different variable screening algorithms in different settings. Each entry represents: exact selection/contains correct set/average selection size.} }
\label{tab:screening_results}
\begin{tabular}{|c|c|c|c|c|c|c|}
\hline
\textbf{Setting} & $d$ & \texttt{NNVS} & \texttt{KFOCI} & \texttt{MDCSIS} & \texttt{BcorSIS} & \texttt{Kfilter} \\
\hline
\multirow{2}{*}{(S1)} & 10 & $0.99/0.99/2.99$ & $1.00/1.00/3.00$ & $0.16/0.16/3.00$ & $0.95/0.95/3.00$ & $0.79/0.79/3.00$ \\
 & 25 & $0.99/1.00/3.01$ & $1.00/1.00/3.00$ & $0.08/0.08/3.00$ & $0.87/0.87/3.00$ & $0.65/0.65/3.00$ \\
\hline
\multirow{2}{*}{(S2)} & 10 & $0.86/0.94/3.03$ & $0.66/0.85/3.14$ & $0.15/0.15/3.00$ & $0.23/0.23/3.00$ & $0.22/0.22/3.00$ \\
 & 25 & $0.76/0.92/3.11$ & $0.51/0.75/3.26$ & $0.04/0.04/3.00$ & $0.13/0.13/3.00$ & $0.04/0.04/3.00$ \\
\hline
\multirow{2}{*}{(S3)} & 10 & $0.72/1.00/3.31$ & $0.60/0.99/3.45$ & $0.16/0.16/3.00$ & $0.19/0.19/3.00$ & $0.10/0.10/3.00$ \\
 & 25 & $0.45/0.97/3.69$ & $0.45/0.91/3.59$ & $0.03/0.03/3.00$ & $0.01/0.01/3.00$ & $0.07/0.07/3.00$ \\
\hline
\multirow{2}{*}{(S4)} & 10 & $0.90/0.95/3.01$ & $0.98/1.00/3.02$ & $0.54/0.54/3.00$ & $0.83/0.83/3.00$ & $0.69/0.69/3.00$ \\
 & 25 & $0.73/0.89/3.12$ & $0.84/1.00/3.16$ & $0.40/0.40/3.00$ & $0.74/0.74/3.00$ & $0.60/0.60/3.00$ \\
\hline
\multirow{2}{*}{(S5)} & 10 & $0.92/0.99/3.07$ & $0.47/1.00/3.92$ & $0.21/0.21/3.00$ & $0.88/0.88/3.00$ & $0.82/0.82/3.00$ \\
 & 25 & $0.87/1.00/3.13$ & $0.48/1.00/3.89$ & $0.05/0.05/3.00$ & $0.80/0.80/3.00$ & $0.68/0.68/3.00$ \\
\hline
\multirow{2}{*}{(S6)} & 10 & $0.48/0.80/3.25$ & $0.08/0.66/4.51$ & $0.02/0.02/3.00$ & $0.06/0.06/3.00$ & $0.08/0.08/3.00$ \\
 & 25 & $0.32/0.74/3.46$ & $0.10/0.77/4.80$ & $0.01/0.01/3.00$ & $0.04/0.04/3.00$ & $0.04/0.04/3.00$ \\
\hline
\end{tabular}
\end{table}
\normalsize 
From the results in Table~\ref{tab:screening_results}, we observe that \texttt{NNVS} and \texttt{KFOCI} outperform the other three methods (\texttt{MDCSIS}, \texttt{BcorSIS}, and \texttt{Kfilter}) across all settings. The performance of \texttt{NNVS} and \texttt{KFOCI} is comparable for (S1), whereas \texttt{NNVS} shows relatively better performance for (S2) and (S3), while \texttt{KFOCI} performs better in (S4). 

In addition, we also consider the following settings, in which the mean of $Y$ depends only on $X_1$, $X_2$, and $X_3$:
\begin{itemize}
    \item[(S5)] $Y = X_1X_2 + X_1 - X_3 + \varepsilon X_4X_5$.
    \item[(S6)] $Y = \sin(X_1) + \cos(X_2)X_3 + \varepsilon X_4X_5$.
\end{itemize}
The performance of all methods under settings (S5) and (S6) is also reported in Table \ref{tab:screening_results}. In both settings, \texttt{NNVS} achieves the highest exact selection proportion among the competing methods. This is because \texttt{NNVS} directly targets mean dependence, in contrast to methods such as \texttt{KFOCI}, \texttt{BcorSIS}, and \texttt{Kfilter}, which are based on measures of general dependence. Interestingly, \texttt{NNVS} also performs substantially better than \texttt{MDCSIS}, which is also a mean-dependence-based screening method, in these two settings.


\subsubsection{California Housing Dataset}

 In this section we apply our variable screening algorithm to the California Housing Dataset available in \texttt{sklearn}. This dataset contains $20{,}640$ observations describing housing districts in California from the 1990 U.S. Census. The response variable is the median house value (\texttt{MedHouseVal}), and the dataset includes eight predictive features: median income (\texttt{MedInc}), house age (\texttt{HouseAge}), average number of rooms (\texttt{AveRooms}), average number of bedrooms (\texttt{AveBedrms}), population (\texttt{Population}), average occupancy (\texttt{AveOccup}), latitude (\texttt{Latitude}), and longitude (\texttt{Longitude}).  To increase dimensionality, we augment the dataset as follows: we randomly select $7$ features and add Gaussian noise with standard deviation $\sigma \in \{0.1, 0.5, 1\}$, and generate $7$ additional variables as linear combinations of the original features. Finally, we add $500$ pure Gaussian noise variables, resulting in a total of $522$ features. We then apply variable screening methods mentioned above to this dataset. To this end, we split the dataset equally. Variable selection is performed on a random sample of $2,000$ observations drawn from the first half of the data, while the second half is used to fit an XGBoost model based on the selected variables. For model evaluation, we use $80\%$ of the second half for training and the remaining $20\%$ to report prediction performance in terms of mean squared error (MSE). For the \texttt{NNVS} method, we use the number of nearest neighbors $K \in \{10, 25\}$, and for \texttt{KFOCI}, we use $K= 10$. For \texttt{MDCSIS}, we use its default choice of selecting $n/\log n$ variables (reported as \texttt{MDCSIS} (own)). Moreover, for \texttt{BcorSIS}, \texttt{Kfilter}, and \texttt{MDCSIS} (match), the number of selected variables is fixed to match the maximum number of variables selected by \texttt{NNVS} and \texttt{KFOCI}.

\begin{table}[t]
\centering
\scriptsize
\setlength{\tabcolsep}{3pt}

\begin{minipage}{0.48\textwidth}
\centering
\begin{tabular}{lcccccc}
\toprule
 & \texttt{NNVS} & \texttt{NNVS} & \texttt{KFOCI} & \texttt{MDCSIS} & \texttt{BcorSIS} & \texttt{Kfilter} \\
\texttt{Feature} & $K{=}10$ & $K{=}25$ &  & (match) &  &  \\
\midrule
\texttt{MedInc}      & \checkmark & \checkmark & \checkmark & \checkmark & \checkmark & \checkmark \\
\texttt{Longitude}   & \checkmark & \checkmark & \checkmark &            & \checkmark &            \\
\texttt{Latitude}    & \checkmark & \checkmark & \checkmark &            &            &            \\
\texttt{AveOccup}    & \checkmark & \checkmark & \checkmark & \checkmark & \checkmark & \checkmark \\
\texttt{AveRooms}    &            &            &            & \checkmark & \checkmark & \checkmark \\
\texttt{AveBedrms}   &            & \checkmark &            &            &            &            \\
\midrule
\texttt{n\_MedInc}   &            &            &            & \checkmark & \checkmark & \checkmark \\
\texttt{n\_AveRooms} &            &            &            & \checkmark & \checkmark &            \\
\texttt{n\_AveOccup} &            & \checkmark &            & \checkmark &            &            \\
\texttt{n\_Latitude} &            &            &            &            &            & \checkmark \\
\texttt{n\_AveBedrms}&            &            &            &            &            & \checkmark \\
\bottomrule
\end{tabular}
\caption*{ \small{(a)} } 
\end{minipage}
\hspace{10pt}
\begin{minipage}{0.48\textwidth}
\centering
\scriptsize
\begin{tabular}{lccccc}
\toprule
Method & Selected & Real & Noise & MSE\\
\midrule
\texttt{NNVS} ($K{=}10$) & 4 & 4 & 0 & .2254\\
\texttt{NNVS} ($K{=}25$) & 6 & 5 & 1 & .2296\\
\texttt{KFOCI}           & 4 & 4 & 0 & .2254\\
\texttt{BcorSIS}         & 6 & 4 & 2 & .3564\\
\texttt{Kfilter}         & 6 & 3 & 3 & .4698\\
\texttt{MDCSIS} (match)     & 6 & 3 & 3 & .4944\\
\texttt{MDCSIS} (own)       & 263 & 8 & 255 & .2549\\
Oracle                   & 8 & 8 & 0 & .2036\\
\bottomrule
\end{tabular}
\caption*{ \small{(b)} }
\end{minipage}
\caption{\small{Variable screening in the California Housing Dataset: (a) features selected by each method with $\sigma = 0.5$, and (b) prediction performance of each method with $\sigma = 0.5$. }}
\label{table:variablesigma}
\end{table}

Table \ref{table:variablesigma} reports the results when $\sigma = 0.5$. In Table~\ref{table:variablesigma} (a), we show the variables selected by each method on the augmented California Housing Dataset. Notice that \texttt{NNVS} with $K=10$ and \texttt{KFOCI} select the same variables, whereas \texttt{NNVS} with $K=25$ selects two additional variables, one of which is an augmented noisy variable. The other three methods, \texttt{MDCSIS}, \texttt{Kfilter}, and \texttt{BcorSIS}, select six variables (matching the maximum number selected by \texttt{NNVS} and \texttt{KFOCI}), among which there are at least two noisy augmented variables.  In Table~\ref{table:variablesigma} (b), we report the mean squared error (MSE) of predictions using the selected variables. \texttt{NNVS} and \texttt{KFOCI} achieve the lowest prediction errors among all screening methods, significantly outperforming the others. This is expected, as the other methods include additional noisy features that reduce prediction accuracy.

The results for $\sigma = 0.1$ and $\sigma = 1$ are shown in Table \ref{table:variablesigma2} and Table \ref{table:variablesigma3}, respectively, in Appendix \ref{sec:variablesigma}. Consistent with the results above, the prediction performance achieved using variables selected by \texttt{NNVS} and \texttt{KFOCI} significantly outperforms \texttt{MDCSIS}, \texttt{Kfilter}, and \texttt{BcorSIS}. The latter methods tend to select a higher proportion of noisy variables, leading to degraded accuracy. Specifically, for $\sigma = 0.1$, \texttt{NNVS} identifies $4$ variables (including $1$ noise variable), while \texttt{KFOCI} selects $5$ variables (including the same noise variable). Notably, despite the similarity in selected features, the prediction accuracy using the \texttt{NNVS} subset is slightly superior to that of \texttt{KFOCI}. 

\small

\subsection*{Acknowledgements} B. B. Bhattacharya was supported by NSF CAREER grant DMS 2046393 and a Sloan Research Fellowship.

\subsection*{Use of Generative AI} During the preparation of this work, the authors used AI tools for assistance with coding and proofreading the manuscript. The authors take full responsibility for the final content of the manuscript.

\small

\bibliography{ref}
\bibliographystyle{abbrvnat}


\appendix

%
%

\normalsize 

\newpage

\section{Additional Experimental Results}

\subsection{Conditional Mean Independence}\label{appendix:cmi_test}

In this section we present additional results to complement the experiments in Section \ref{sec:cmi_test}. We consider the same settings as in Section \ref{sec:cmi_test}, but with different\ covariate distributions. In particular, for Example \ref{example:univariate} we consider $X \sim  N(0,1)$ and $X = 2U-1$, where $U \sim \mathrm{Beta}(0.1,0.1)$ (which concentrates mass near the boundaries of $[-1,1]$). 
The results are presented in Figure \ref{fig:power_time_normal} and Figure \ref{fig:power_time_beta}. In the setting of Example \ref{example:multivariate}, where we have a multivariate predictor $\bm{X} = (X_1,\ldots,X_{10})$, each coordinate is generated independently from the above distributions. The results are reported in Figures~\ref{fig:power_time_normal_multivariate} and \ref{fig:power_time_beta_multivariate}. These additional experiments confirm the trends observed in the main text: the proposed \texttt{NCMD} test maintains superior empirical power across most settings, controls Type-I error in the appropriate null scenarios, and is computationally faster than competing methods in the multivariate setting.

\begin{figure}[h]
    \centering
    \begin{minipage}{0.6\linewidth}
        \centering
        \includegraphics[width=\linewidth]{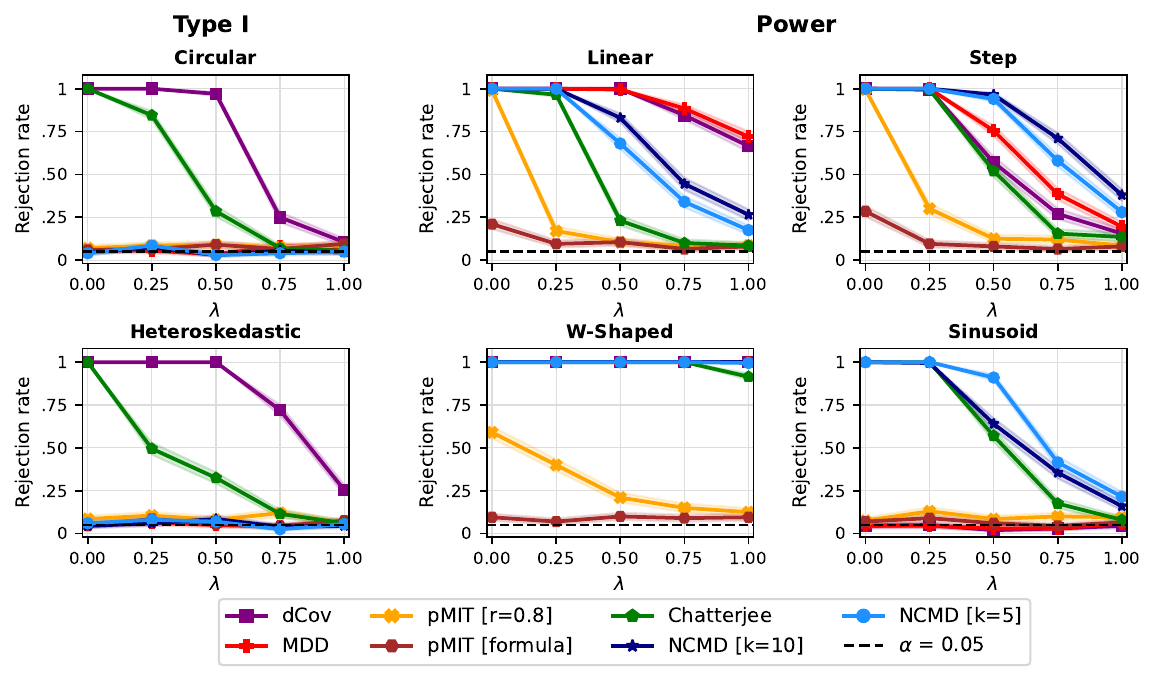}
        \small{(a)}
    \end{minipage}%
    \begin{minipage}{0.4\linewidth}
        \centering
        \includegraphics[width=\linewidth]{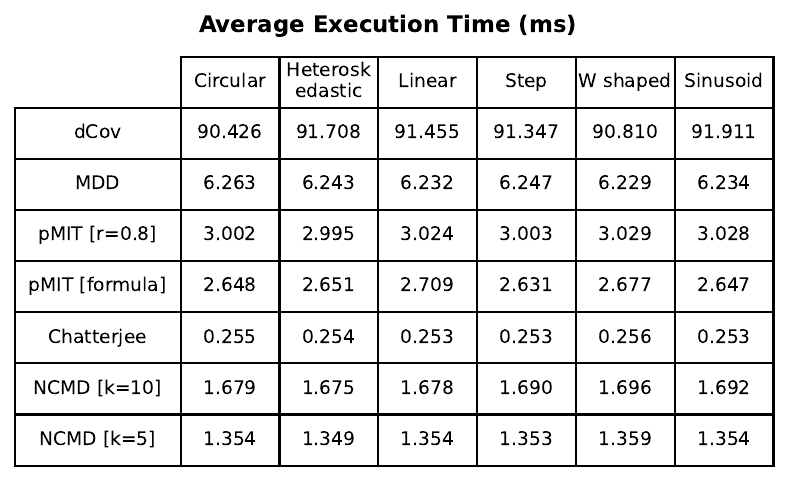}
        \small{(b)}
    \end{minipage} 
    
        \caption{\small{(a) Empirical Type I error/power and (b) computational time for conditional mean independence testing for the settings in Example \ref{example:univariate}, $X\sim N(0,1)$.}}

    \label{fig:power_time_normal}
\end{figure}

\begin{figure}[h]
    \centering
    \begin{minipage}{0.6\linewidth}
        \centering
        \includegraphics[width=\linewidth]{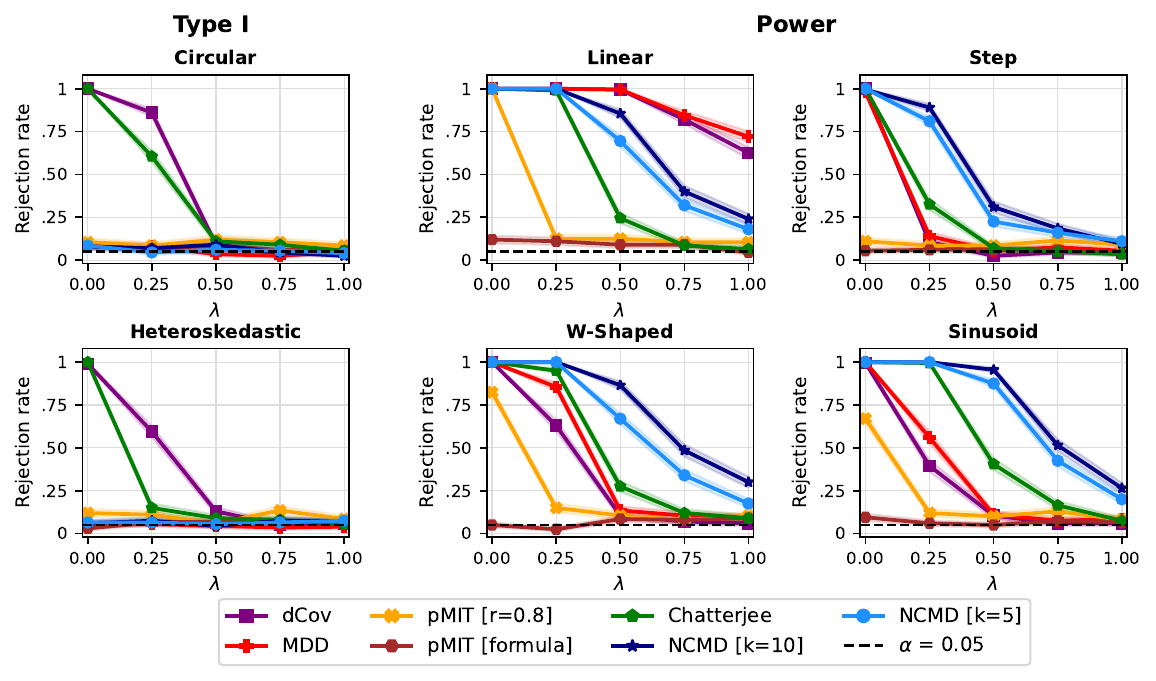}
        \small{(a)}
    \end{minipage}%
    \begin{minipage}{0.4\linewidth}
        \centering
        \includegraphics[width=\linewidth]{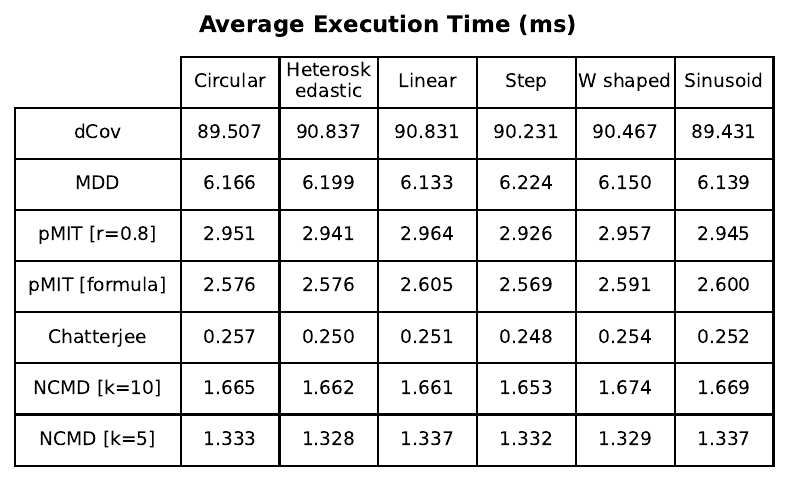}
        \small{(b)}
    \end{minipage}
    \caption{\small{(a) Empirical Type I error/power and (b) computational time for conditional mean independence testing for the settings in Example \ref{example:univariate}, when $X = 2U-1$ and $U \sim \mathrm{Beta}(0.1,0.1)$.}} 
    \label{fig:power_time_beta}
\end{figure}

\begin{figure}[h]
    \centering
    \begin{minipage}{0.6\linewidth}
        \centering
        \includegraphics[width=\linewidth]{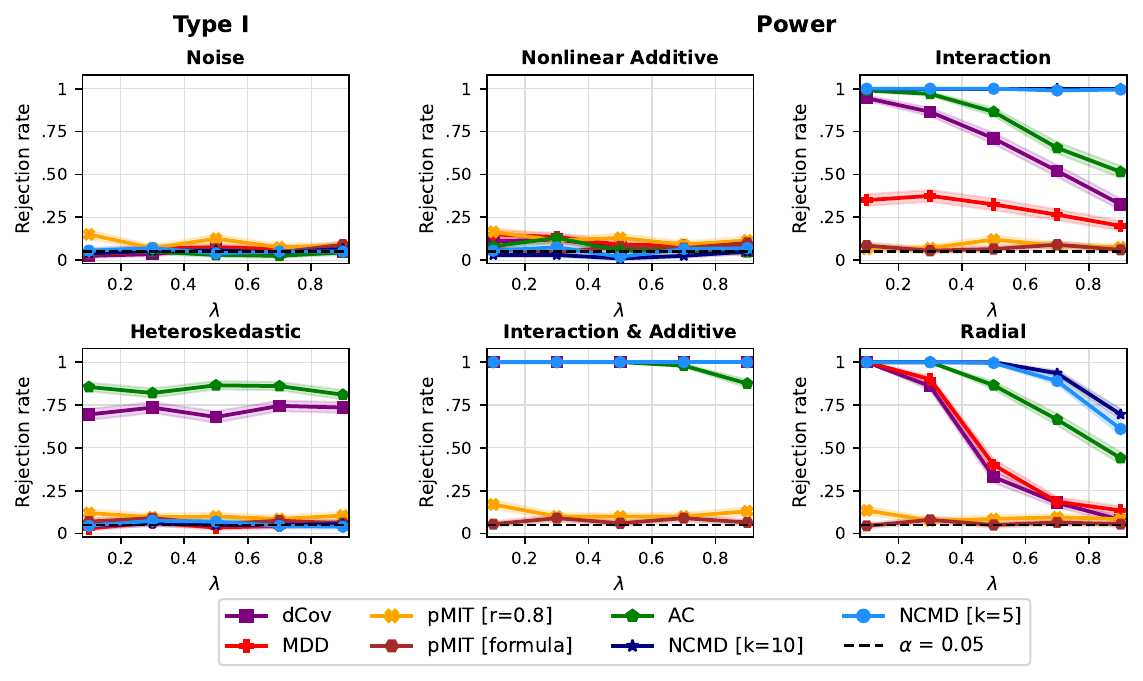}
        \small{(a)}
    \end{minipage}%
    \begin{minipage}{0.4\linewidth}
        \centering
        \includegraphics[width=\linewidth]{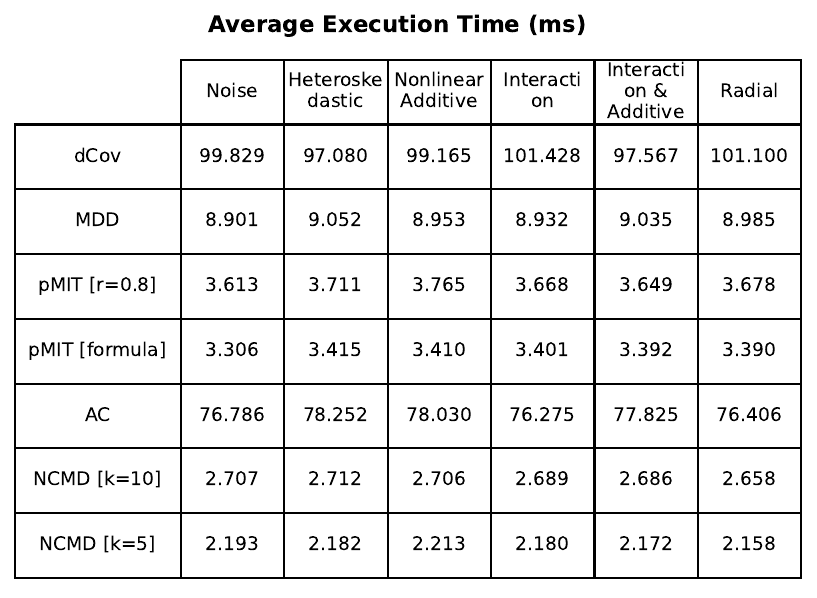}
        \small{(b)}
    \end{minipage} 
        \caption{\small{(a) Empirical Type I error/power and (b) computational time for conditional mean independence testing for the settings in Example \ref{example:multivariate}, when $\bX\sim  N_{10}(\bm 0, \bm{I}_{10})$.}} 
    \label{fig:power_time_normal_multivariate}
\end{figure}

\begin{figure}[h]
    \centering
    \begin{minipage}{0.6\linewidth}
        \centering
        \includegraphics[width=\linewidth]{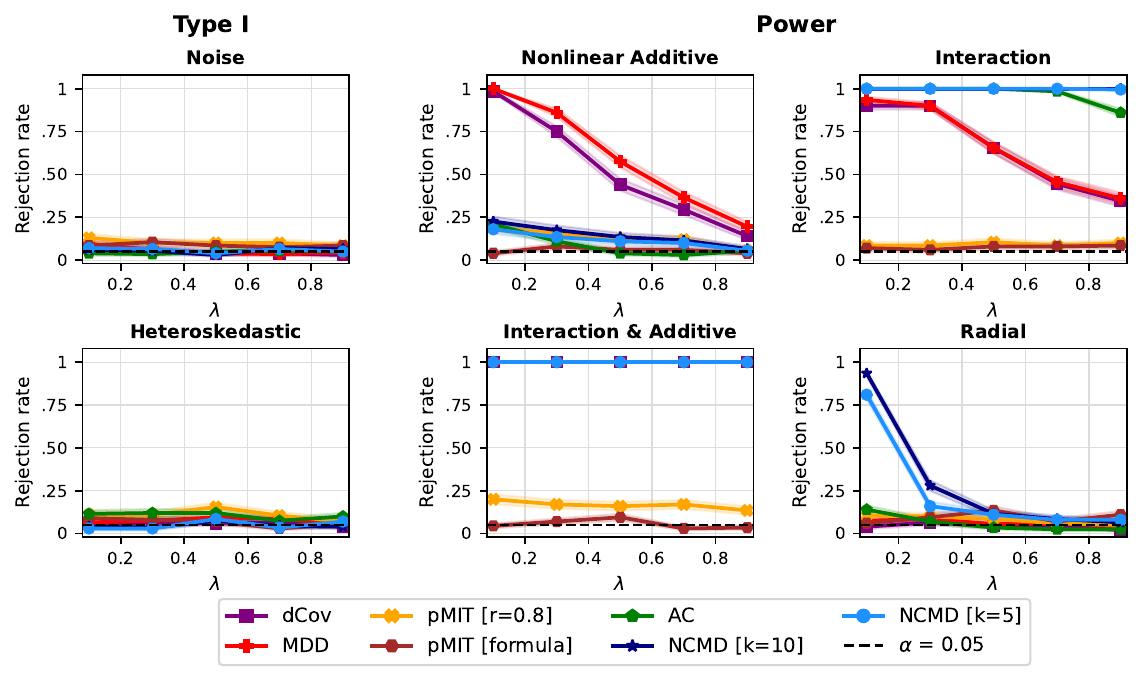}
    \end{minipage}%
    \begin{minipage}{0.4\linewidth}
        \centering
        \includegraphics[width=\linewidth]{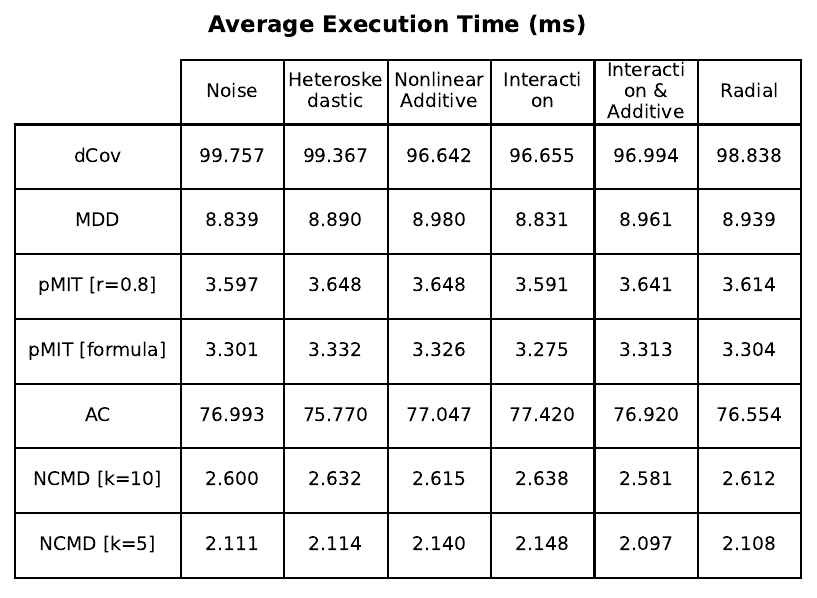}
    \end{minipage} 
            \caption{\small{(a) Empirical Type I error/power and (b) computational time for conditional mean independence testing for the settings in Example \ref{example:multivariate}, when $\bX = 2\bm{U}-1$ and the coordinates of $\bm{U}$ are i.i.d. $\mathrm{Beta}(0.1,0.1)$.}} 
    \label{fig:power_time_beta_multivariate}
\end{figure}

\subsection{Variable Screening for California Housing Dataset}
\label{sec:variablesigma}

In this section, we present additional experimental results comparing the variable screening methods on the California Housing dataset considered in Section~\ref{sec:variable_screening}. Specifically, Tables~\ref{table:variablesigma2} and \ref{table:variablesigma3} report the results for $\sigma = 0.1$ and $\sigma = 1$, respectively, where $\sigma$ denotes the standard deviation of the noise variables, as described in Section~\ref{sec:variable_screening}. The findings in these tables reinforce the trends observed earlier: the predictive performance based on the variables selected by \texttt{NNVS} and \texttt{KFOCI} exceeds that of all competing methods. By contrast, the other methods tend to select a larger proportion of noisy variables, which results in a deterioration in prediction accuracy. Notably, when $\sigma = 0.1$, \texttt{NNVS} selects four variables, including one noise variable, whereas \texttt{KFOCI} selects five variables, again including the same noise variable. Interestingly, despite the substantial overlap between the two selected subsets, the variables chosen by \texttt{NNVS} yield slightly better predictive accuracy than those selected by \texttt{KFOCI}.

\begin{table}[t]
\centering
\scriptsize
\setlength{\tabcolsep}{2pt}
\begin{minipage}{0.48\textwidth}
\centering
\begin{tabular}{lcccccc}
\toprule
 & \texttt{NNVS} & \texttt{NNVS} & \texttt{KFOCI} & \texttt{MDCSIS} & \texttt{BcorSIS} & \texttt{Kfilter} \\
\texttt{Feature} & $K{=}10$ & $K{=}25$ &  & (match)  &  &  \\
\midrule
\texttt{MedInc}         & \checkmark & \checkmark & \checkmark & \checkmark & \checkmark &            \\
\texttt{HouseAge}       &            &            &            &            &            & \checkmark \\
\texttt{AveRooms}       &            &            &            & \checkmark & \checkmark &            \\
\texttt{AveBedrms}      &            & \checkmark &            &            &            & \checkmark \\
\texttt{AveOccup}       &            & \checkmark & \checkmark & \checkmark & \checkmark &            \\
\texttt{Latitude}       & \checkmark & \checkmark & \checkmark &            &            &            \\
\texttt{Longitude}      & \checkmark & \checkmark & \checkmark &            &            & \checkmark \\
\midrule
\texttt{n\_MedInc}      &            &            &            & \checkmark & \checkmark &            \\
\texttt{n\_AveRooms}    &            &            &            & \checkmark & \checkmark &            \\
\texttt{n\_AveOccup}    & \checkmark & \checkmark & \checkmark & \checkmark & \checkmark & \checkmark \\
\texttt{n\_Population}  &            &            &            &            &            & \checkmark \\
\texttt{n\_AveBedrms}   &            &            &            &            &            & \checkmark \\
\bottomrule
\end{tabular}
\caption*{ \small{(a)} }
\end{minipage}
\hspace{5pt}
\begin{minipage}{0.48\textwidth}
\centering
\scriptsize 
\begin{tabular}{lccccc}
\toprule
Method & Selected & Real & Noise & MSE \\
\midrule
\texttt{NNVS} ($K{=}10$) & 4 & 3 & 1 & .2206 \\
\texttt{NNVS} ($K{=}25$) & 6 & 5 & 1 & .2292 \\
\texttt{KFOCI}            & 5 & 4 & 1 & .2246 \\
\texttt{BcorSIS}          & 6 & 3 & 3 & .4958 \\
\texttt{Kfilter}          & 6 & 3 & 3 & .7457 \\
\texttt{MDCSIS} (match)      & 6 & 3 & 3 & .4972 \\
\texttt{MDCSIS} (own)        & 263 & 8 & 255 & .2533 \\
Oracle                    & 8 & 8 & 0 & .2036 \\
\bottomrule
\end{tabular}
\caption*{ \small{(b)} }
\end{minipage}
\caption{\small{Variable screening in the California Housing Dataset: (a) features selected by each method with $\sigma = 0.1$ and (b) prediction performance of each method with $\sigma = 0.1$. }}  
\label{table:variablesigma2}
\end{table}

\begin{table}[t]
\centering
\scriptsize
\setlength{\tabcolsep}{2pt}
\begin{minipage}{0.48\textwidth}
\centering
\begin{tabular}{lcccccc}
\toprule
 & \texttt{NNVS} & \texttt{NNVS} & \texttt{KFOCI} & \texttt{MDCSIS} & \texttt{BcorSIS} & \texttt{Kfilter} \\
\texttt{Feature} & $K{=}10$ & $K{=}25$ &  & (match) &  &  \\
\midrule
\texttt{MedInc}         & \checkmark & \checkmark & \checkmark & \checkmark & \checkmark & \checkmark \\
\texttt{AveRooms}       &            &            &            & \checkmark & \checkmark & \checkmark \\
\texttt{AveBedrms}      &            & \checkmark &            &            &            &            \\
\texttt{AveOccup}       & \checkmark & \checkmark & \checkmark & \checkmark & \checkmark &            \\
\texttt{Latitude}       & \checkmark & \checkmark & \checkmark &            &            &            \\
\texttt{Longitude}      & \checkmark & \checkmark & \checkmark &            & \checkmark &            \\
\midrule
\texttt{n\_MedInc}      &            &            &            & \checkmark & \checkmark &            \\
\texttt{n\_AveRooms}    &            &            &            & \checkmark &            &            \\
\texttt{n\_Longitude}   &            &            &            &            &            & \checkmark \\
\texttt{n\_Latitude}    &            &            &            &            &            & \checkmark \\
\texttt{n\_AveBedrms}   &            &            &            &            &            & \checkmark \\
\bottomrule
\end{tabular}
\caption*{ \small{(a)} }
\end{minipage}
\hspace{5pt}
\begin{minipage}{0.48\textwidth}
\centering
\scriptsize 
\begin{tabular}{lccccc}
\toprule
Method & Selected & Real & Noise & MSE \\
\midrule
\texttt{NNVS} ($K{=}10$) & 4 & 4 & 0 & .2254 \\
\texttt{NNVS} ($K{=}25$) & 5 & 5 & 0 & .2242 \\
\texttt{KFOCI}            & 4 & 4 & 0 & .2254 \\
\texttt{BcorSIS}          & 5 & 4 & 1 & .3673 \\
\texttt{Kfilter}          & 5 & 2 & 3 & .5864 \\
\texttt{MDCSIS} (match)      & 5 & 3 & 2 & .5153 \\
\texttt{MDCSIS} (own)        & 263 & 8 & 255 & .2532 \\
Oracle                    & 8 & 8 & 0 & .2036 \\
\bottomrule
\end{tabular}
\caption*{ \small{(b)} }
\end{minipage} 
\caption{\small{Variable screening in the California Housing Dataset: (a) features selected by each method with $\sigma = 1$ and (b) prediction performance of each method with $\sigma = 1$. }}
\label{table:variablesigma3}
\end{table}

\subsection{Sobol' Indices}\label{sec:sobol_experiment}
In this section, we present experimental results to illustrate the performance of the nearest neighbor-based estimators of Sobol' indices introduced in Section \ref{sec:connect_sobol} and Section \ref{sec:higher_sobol}. We consider a simple model with three inputs of varying dimensions and a scalar output. Specifically, we generate $\bX_1 = (X_{11}, X_{12})\sim \text{Unif}[-2,2]^2$, $\bX_2 = (X_{21}, X_{22}, X_{23}, X_{24})\sim \text{Unif}[-2,2]^4$, and $X_3\sim \text{Unif}[-2,2]$. The scalar output $Y$ is then generated as
\begin{equation}\label{eq:sobol_example}
Y = \theta \one^\top\bX_1 + \theta \one^\top\bX_2 + \theta X_3
+ (2-\theta)\one^\top\bX_1 X_3 + (2-\theta)\one^\top\bX_1 \one^\top\bX_2 X_3 + \varepsilon, 
\end{equation} 
where $\varepsilon\sim\mathcal{N}(0,0.04)$. For this model, we estimate the main effects $\eta_{\bX_1}$ and $\eta_{\bX_2}$, as well as the second-order interaction effect $\eta_{2}$ between $\bX_1$ and $X_3$, using the nearest neighbor--based estimators $\hat \eta_{\bX_1}$ and $\hat \eta_{\bX_2}$ from \eqref{eq:def_eta_hat}, and $\hat \eta_2$ from \eqref{eq:def_hat_eta_2}. Exploiting the mutual independence of the inputs, a direct calculation yields
\begin{align*}
\Var[Y] = \frac{252\theta^2 + 608(2-\theta)^2}{27} + 0.04.
\end{align*}
Recalling the definitions of $\eta_{\bX_1}$, $\eta_{\bX_2}$, and $\eta_2$, we obtain
\begin{align*}
\eta_{\bX_1} = \frac{72\theta^2}{252\theta^2 + 608(2-\theta)^2 + 1.08},\qquad
\eta_{\bX_2} = \frac{144\theta^2}{252\theta^2 + 608(2-\theta)^2 + 1.08},
\end{align*}
and
\begin{align*}
\eta_2 = \frac{96(2-\theta)^2}{252\theta^2 + 608(2-\theta)^2 + 1.08}.
\end{align*}
Figure~\ref{fig:sobol_experiment} compares the estimated indices $\hat \eta_{\bX_1}$, $\hat\eta_{\bX_2}$, and $\hat\eta_2$ with their corresponding population values $\eta_{\bX_1}$, $\eta_{\bX_2}$, and $\eta_2$.
Specifically, in Figure~\ref{fig:sobol_experiment}(a), we plot the estimated and true indices as functions of $\theta\in [0.5,1.5]$, with $n=1000$ and $K=5$. The estimates are averaged over 25 iterations, and $\pm1$ standard deviation error bars are shown. We observe that the estimates closely align with the true Sobol' indices across the range of $\theta$ values considered. In Figure~\ref{fig:sobol_experiment}(b), we illustrate the asymptotic convergence of the estimators by fixing $\theta=1$, $K=5$, and varying the sample size $n$. As before, the estimates are averaged over 25 iterations, and $\pm1$ standard deviation error bars are shown. In all three cases, the estimates converge to their population counterparts as the sample size increases, validating the results in Theorems~\ref{thm:etan_consistent} and~\ref{thm:eta2_convg}.


\begin{figure}[h]
    \centering
    \hspace{-20pt}
    \begin{minipage}{0.4\linewidth}
        \centering
        \includegraphics[scale = 0.7]{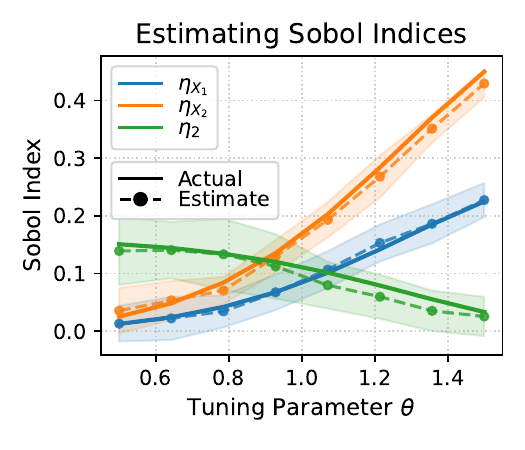}\\
        \small{(a)}
    \end{minipage}%
    \hspace{30pt}
    \begin{minipage}{0.4\linewidth}
        \centering
        \includegraphics[height=1\linewidth, width = 0.7\linewidth]{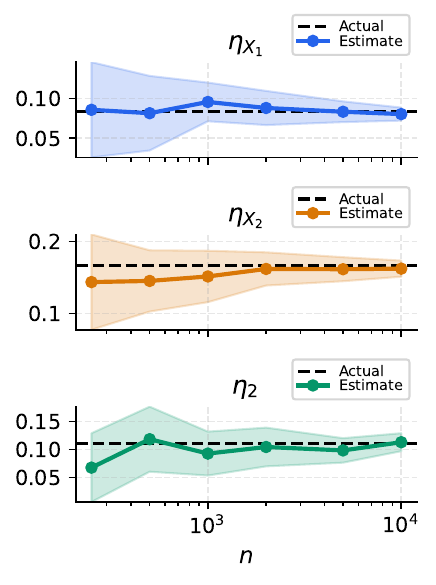}\\
        \small{(b)}
    \end{minipage} 
    \caption{\small{(a) True and estimated Sobol' indices as functions of $\theta\in [0.5,1.5]$ for model \eqref{eq:sobol_example} and (b) convergence of the estimated Sobol' indices at $\theta = 1$ for the model in \eqref{eq:sobol_example}.}}

    \label{fig:sobol_experiment}
\end{figure}

\section{Proof of Proposition \ref{ppn:normalized}} 
\label{sec:propertiespf}

The result in \ref{item:P1} follow directly from the decomposition: 
\begin{align*}
    \E\left[\big\|\bY - \E\left[\bY\right]\big\|_2^2\right] 
    = \E\left[\big\|\bY - \E\left[\bY\mid\bX\right]\big\|_2^2\right] 
    + \E\left[\big\|\E\left[\bY\mid\bX\right] - \E\left[\bY\right]\big\|_2^2\right].
\end{align*}
Property \ref{item:P2} follows immediately from the definition. Finally, for Property \ref{item:P3}, from the above decomposition we have, 
$$\eta = 1 \quad \Longleftrightarrow \quad  \E\left[\big\|\bY - \E\left[\bY\mid\bX\right]\big\|_2^2\right]  = 0 \quad \Longleftrightarrow \quad \bY \stackrel{a.s.}= \E\left[\bY\mid\bX\right].$$

\section{Proofs of Theorem \ref{thm:etan_consistent} and Theorem \ref{thm:rate_of_convg}}

We begin with the proof of Theorem \ref{thm:etan_consistent} in Appendix \ref{appendix:proofof_etan_consistent}. Then we prove Theorem \ref{thm:rate_of_convg} in Appendix \ref{sec:rate_of_convgpf}. Throughout the proofs, for two non-negative sequences $a_n$ and $b_n$, we use the notation $a_n \lesssim_{\square} b_n$ to denote that $a_n \le C(\square) \cdot b_n$, where $C(\square) > 0$ is a constant depending on the subscripted parameters.

\subsection{Proof of Theorem \ref{thm:etan_consistent}}\label{appendix:proofof_etan_consistent}

Recall the definition of $\hat\eta_n$ from \eqref{eq:def_eta_hat}. Then by the law of large numbers and recalling the decomposition from \eqref{eq:YY_simple} 
%
%
we can immediately conclude that the denominator $D_n\pto\E[\|\bY - \E[\bY]\|_2^2]$. Similarly, the second term in the numerator of $\hat\eta_n$ is a consistent estimate of $\left\|\E\left[\bY\right]\right\|_2^2$. Thus, recalling the decomposition from \eqref{eq:numerator_decomp}, to complete the proof it is enough to show,
\begin{align}\label{eq:nn_consistency}
    V_n := \frac{1}{n}\sum_{u=1}^{n} \frac{1}{K} \sum_{v \in N_{G\left(\sX_n\right)}(u)}\bY_u^\top \bY_v\pto \E\left[\left\|\E\left[\bY\mid\bX\right]\right\|_2^2\right].
\end{align}
To establish \eqref{eq:nn_consistency}, it suffices to show the following: 
\begin{align}\label{eq:consistencycondition}
\E[V_n]\ra \E[\|\E[\bY\mid\bX]\|_2^2] \quad \text{ and } \quad \Var[V_n] = o(1). 
\end{align}  
With this in mind, by exchangeability and recalling that $(\bY_1, \bX_1),\ldots, (\bY_n, \bX_n)$ are independent, we observe that
\begin{align}
    \E[V_n] = \E\left[\frac{1}{K}\sum_{u=1}^{n}\E\left[\bY_1\mid \bX_1\right]^\top\E\left[\bY_u\mid \bX_u\right]\one\left\{ u \in N_{G(\sX_n)}(1)\right\}\right],\label{eq:ETncondXn}
\end{align} 
Now, for $1\leq i\leq n$, define $g(\bX_u) = \E\left[\bY\middle|\bX_u\right]$ and suppose $N(1)$ is a vertex uniformly chosen from the neighbours of $1$ in the graph $G(\sX_n)$. Then recalling~\eqref{eq:ETncondXn} shows,
    \begin{align}
        \left|\E\left[V_n\right] - \E\left[\big\|g(\bX_1)\big\|_2^2\right]\right|
        &\leq \E\left[\frac{1}{K}\sum_{u=1}^{n}\left|g(\bX_1)^\top g(\bX_u) - \big\|g(\bX_1)\big\|_2^2\right|\one\left\{u \in N_{G(\sX_n)}(1)\right\}\right]\nonumber\\
        &= \E\left[\frac{1}{K}\sum_{u=1}^{n}\left|g(\bX_1)^\top \left(g(\bX_u) - g(\bX_1)\right)\right|\one\left\{u \in N_{G(\sX_n)}(1)\right\}\right]\nonumber\\
        &\leq \E\left[\left|g(\bX_1)^\top\left(g\left(\bX_{N(1)}\right) - g(\bX_1)\right)\right|\right]\nonumber\\
        &\leq \sqrt{\E\left[\big\|g(\bX_1)\big\|_2^2\right]}\sqrt{\E\left[\big\|g\left(\bX_{N(1)}\right) - g(\bX_1)\big\|_2^2\right]},\label{eq:unifoneneigh}
    \end{align}
    where the last inequality follows from Cauchy-Schwartz inequality. Now, by Lemma \ref{lemma:fXN1_bdd_fX1} notice that,
    \begin{align}\label{eq:1N1diff4bdd}
        \E\left[\big\|g\left(\bX_{N(1)}\right) - g(\bX_1)\big\|_2^4\right]\lesssim\E\left[\big\|g\left(\bX_{N(1)}\right)\big\|_2^4\right] + \E\left[\big\|g(\bX_1)\big\|_2^4\right]\lesssim \E\left[\big\|g(\bX_1)\big\|_2^4\right]<\infty,
    \end{align}
    where the finiteness follows from Jensen's inequality and the assumption $\E[\|\bY\|_2^{4+\delta}]<\infty$. The bound from \eqref{eq:1N1diff4bdd} implies that $\|g(\bX_{N(1)}) - g(\bX_1)\|_2^2$ is uniformly integrable. Then, applying \cite[Lemma D.3]{deb2020measuring} shows that $\E[\|g(\bX_{N(1)}) - g(\bX_1)\|_2^2] = o(1)$. Combined with the bound from \eqref{eq:1N1diff4bdd}, we conclude $\E[V_n]\ra \E[\|\E[\bY\mid \bX]\|_2^2]$. This proves the first assertion in \eqref{eq:consistencycondition}.

Now, we establish the second assertion in \eqref{eq:consistencycondition}. Towards that, \textcolor{black}{applying Lemma \ref{lemma:VarTn_bd} with $f(\bY_u,\bY_v) = \bY_u^\top\bY_v$} gives, 
    \begin{align}\label{eq:bound_var_Vn}
        \Var[V_n] \lesssim_{d}\frac{1}{n}\E\left[\max_{1\leq u\neq v\leq n}\left|\bY_u^\top \bY_v\right|^2\right]\leq \frac{1}{n}\E\left[\max_{1\leq u\leq n}\left\|\bY_u\right\|_2^4\right],
    \end{align}
    where the last inequality once again follows from the Cauchy-Schwartz inequality. For $\vep>0$ define $\vep_n := \vep n^{\frac{4}{4+\delta}}$. Then,
    \begin{align}
        \E\left[\max_{1\leq u\leq n}\left\|\bY_u\right\|_2^4\right]
        &\leq \vep_n + \E\left[\max_{1\leq u\leq n}\left\|\bY_u\right\|_2^4\one\left\{\max_{1\leq u\leq n}\left\|\bY_u\right\|_2^4>\vep_n\right\}\right]\nonumber\\
        &\leq \vep_n + n\int_{\vep_n}^{\infty}\P\left(\left\|\bY_1\right\|_2^4\geq t\right)\mathrm dt\tag*{ (by union bound) }\nonumber\\
        &\leq \vep_n + n\int_{\vep_n}^{\infty}\frac{\E\left[\left\|\bY_1\right\|_2^{4+\delta}\one\{\|\bY_1\|_2^4\geq \vep_n\}\right]}{t^{1+\frac{\delta}{4}}}\mathrm dt\tag*{ (by Markov's Inequality) }\nonumber\\
        &\lesssim \vep_n +  \vep_n\frac{\E\left[\left\|\bY_1\right\|_2^{4+\delta}\one\{\|\bY_1\|_2^4\geq \vep_n\}\right]}{\vep^{1+\frac{\delta}{4}}}.\nonumber
    \end{align}
Hence by DCT,
\begin{align}
    \lim_{n\ra\infty}n^{-\frac{4}{4+\delta}}\E\left[\max_{1\leq u\leq n}\left\|\bY_u\right\|_2^4\right]\lesssim\lim_{n\ra\infty}\vep\left(1 + \frac{\E\left[\left\|\bY_1\right\|_2^{4+\delta}\one\{\|\bY_1\|_2^4\geq \vep_n\}\right]}{\vep^{1+\frac{\delta}{4}}}\right) = \vep.\label{eq:max_Y_4_bd}
\end{align}
Since $\vep>0$ was arbitrary, using the limit from \eqref{eq:max_Y_4_bd} gives, $\frac{1}{n}\E[\max_{1\leq u\leq n}\|\bY_u\|_2^4] = o(1)$, implying that $\Var[V_n] = o(1)$.

\subsection{Proof of Theorem \ref{thm:rate_of_convg}}
\label{sec:rate_of_convgpf}

From \eqref{eq:def_eta_hat}, recall that
\begin{align*}
    \hat\eta_n = \dfrac{\frac{1}{n}\sum_{u=1}^{n}\frac{1}{K}\sum_{v \in N_{G\left(\sX_n\right)}(u)}\bY_u^\top \bY_v - \frac{1}{n(n-1)}\sum_{1 \leq u \neq v \leq n}\bY_u^\top \bY_v}{\frac{1}{n}\sum_{u=1}^{n}\left\|\bY_u\right\|^2 - \frac{1}{n(n-1)}\sum_{1 \leq u \neq v \leq n}\bY_u^\top \bY_v} = \frac{T_n}{D_n} , 
\end{align*} 
with $D_n$ and $T_n$ as defined in \eqref{eq:Dn} and \eqref{eq:Tn}, respectively. By the distributional convergence of $U$-statistics (see \cite[Theorem 12.3]{van2000asymptotic}), it immediately follows that 
\begin{align}\label{eq:rates_other_terms}
\begin{aligned}
    \left|\frac{1}{n(n-1)}\sum_{1 \leq u \neq v \leq n}\bY_u^\top \bY_v - \left\|\E[\bY]\right\|_2^2\right| & = O_P\left(\frac{1}{\sqrt{n}}\right) , \\ 
    \left|\frac{1}{n}\sum_{u=1}^{n}\left\|\bY_u\right\|_2^2 - \E\left[\left\|\bY\right\|_2^2\right]\right| 
    & = O_P\left(\frac{1}{\sqrt{n}}\right). 
    \end{aligned} 
\end{align}
These rates immediately imply that 
$$\left|D_n - \E\left[\left\|\bY - \E[\bY]\right\|_2^2\right]\right|= O_P\left(\frac{1}{\sqrt{n}}\right).$$
To complete the proof of the theorem, we aim to apply Lemma \ref{lemma:ratio_Op}. By \eqref{eq:rates_other_terms}, it therefore suffices to determine the rate at which 
$$V_n := \frac{1}{n}\sum_{u=1}^{n}\frac{1}{K}\sum_{v \in N_{G\left(\sX_n\right)}(u)}\bY_u^\top \bY_v$$ concentrates around
$\E[\|\E[\bY\mid\bX]\|_2^2]$, which we show in the following proposition.
\begin{proposition}\label{prop:rate_nn_est}
    Under the assumptions of Theorem \ref{thm:rate_of_convg},
    \begin{align*}
        \left|\frac{1}{n}\sum_{u=1}^{n}\frac{1}{K}\sum_{v \in N_{G\left(\sX_n\right)}(u)}\bY_u^\top \bY_v - \E\left[\left\|\E[\bY\mid\bX]\right\|_2^2\right]\right| = O_P\left(\max\left\{\frac{(\log n)^2}{\sqrt{n}},\frac{(\log n)^{1+\frac{1}{d}}}{n^{\frac{1}{d}}}\right\}\right)
    \end{align*}
\end{proposition}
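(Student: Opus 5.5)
We split $V_n - \E[\|g(\bX)\|_2^2]$, with $g(\bm x)=\E[\bY\mid\bX=\bm x]$, into a bias part and a fluctuation part:
\begin{align*}
V_n - \E\left[\|g(\bX)\|_2^2\right] = \left(V_n - \E[V_n]\right) + \left(\E[V_n] - \E\left[\|g(\bX)\|_2^2\right]\right).
\end{align*}
The two pieces are handled separately. The fluctuation is of order $(\log n)^2/\sqrt{n}$, via a truncation argument combined with the variance bound of Lemma \ref{lemma:VarTn_bd}. The bias is of order $(\log n)^{1+1/d}/n^{1/d}$, via the local Lipschitz-type condition in Assumption \ref{assumption:rate_of_convg}(2) together with a high-probability bound on nearest neighbor distances.

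\textbf{Bias.} As in the proof of Theorem \ref{thm:etan_consistent}, exchangeability gives
\begin{align*}
\left|\E[V_n] - \E\left[\|g(\bX_1)\|_2^2\right]\right| \le \E\left[\left|g(\bX_1)^\top\left(g(\bX_{N(1)}) - g(\bX_1)\right)\right|\right].
\end{align*}
Apply Assumption \ref{assumption:rate_of_convg}(2) with $\bm x=\bm x_2=\bX_1$ and $\bm x_1=\bX_{N(1)}$. This bounds the right side by
\begin{align*}
C_3\,\E\left[\left(1+2\|\bX_1\|_2^\beta+\|\bX_{N(1)}\|_2^\beta\right)\|\bX_{N(1)}-\bX_1\|_2\right].
\end{align*}
Let $E_n=\{\max_u\|\bX_u\|_2\le C\log n\}$. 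By the exponential tail in Assumption \ref{assumption:rate_of_convg}(1) and a union bound, $\P(E_n^c)\le n^{-10}$ for a suitable constant $C$. On $E_n^c$, Hölder's inequality and the exponential moments make the contribution negligible.

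On $E_n$ the prefactor is at most $(\log n)^{\beta}$, so what remains is to bound $\E[\|\bX_{N(1)}-\bX_1\|_2\one_{E_n}]$. All points lie in a ball of radius $C\log n$, so the standard covering argument applies: cover the ball by $\asymp n/K$ cubes of side $\asymp (\log n)(K/n)^{1/d}$. The average $K$-NN distance is then $O((\log n)(K/n)^{1/d})$, which is the estimate in the style of Azadkia–Chatterjee \cite{azadkia2021simple} and Lemma D.3 of \cite{deb2020measuring}.

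Matching the stated $(\log n)^{1+1/d}$ exactly requires absorbing the $\beta$-power. The cleanest route is to work on shells of $\|\bX_1\|_2$: use the exponential tail to weight the polynomial factor, and bound the NN distance locally. I expect this bookkeeping to be the main obstacle. If it proves too hard, I would settle for an extra polylog factor and then refine.

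\textbf{Fluctuation.} Set $\tau_n = C\log n$ and truncate $\tilde\bY_u=\bY_u\one\{\|\bY_u-\E\bY\|_2\le \tau_n\}$. By the exponential tail, the difference between $V_n$ and its truncated version is $O_P(n^{-1})$, and the same holds in expectation. For the truncated statistic, Lemma \ref{lemma:VarTn_bd} gives
\begin{align*}
\Var \lesssim_d \frac{1}{n}\,\E\left[\max_{u\neq v}|\tilde\bY_u^\top\tilde\bY_v|^2\right]\lesssim \frac{(\log n)^4}{n}.
\end{align*}
Chebyshev's inequality then yields $V_n-\E[V_n]=O_P((\log n)^2/\sqrt n)$. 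Combining the two bounds proves Proposition \ref{prop:rate_nn_est}.

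Theorem \ref{thm:rate_of_convg} then follows from Lemma \ref{lemma:ratio_Op}, together with \eqref{eq:rates_other_terms} and the $O_P(n^{-1/2})$ rate for $D_n$.
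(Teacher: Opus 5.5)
Your fluctuation bound is fine. The paper gets the same $(\log n)^4/n$ variance by applying Lemma \ref{lemma:VarTn_bd} directly and bounding $\E[\max_u\|\bY_u\|_2^4]\lesssim(\log n)^4$, which is equivalent to your truncation. Your reduction of the bias to $\E[(1+2\|\bX_1\|_2^\beta+\|\bX_{N(1)}\|_2^\beta)\|\bX_1-\bX_{N(1)}\|_2]$ is exactly the paper's.

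The gap is the next step. Bounding the polynomial weight by $(\log n)^\beta$ on $E_n$ gives a bias of order $(\log n)^{1+\beta}n^{-1/d}$. This exceeds the claimed rate for $d\ge 3$ as soon as $\beta>1/d$, and for $d=2$ as soon as $\beta>1$. The shell-by-shell localization you sketch is neither carried out nor needed, so as written the proposal does not prove the proposition.

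The missing idea is to apply Cauchy--Schwarz before using the good event:
\begin{align*}
\E\Big[\big(1+2\|\bX_1\|_2^\beta+\|\bX_{N(1)}\|_2^\beta\big)\|\bX_1-\bX_{N(1)}\|_2\one\{E_n\}\Big]
\le\Big(\E\big[(1+2\|\bX_1\|_2^\beta+\|\bX_{N(1)}\|_2^\beta)^2\big]\Big)^{1/2}\Big(\E\big[\|\bX_1-\bX_{N(1)}\|_2^2\one\{E_n\}\big]\Big)^{1/2}.
\end{align*}
The first factor is $O(1)$ by the exponential tail of $\|\bX\|_2$ and Lemma \ref{lemma:fXN1_bdd_fX1} with $f(\bm x)=\|\bm x\|_2^{2\beta}$. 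The $\beta$-power then disappears, and what remains is a \emph{second}-moment bound on the neighbor distance on the good event. That is what the paper proves.

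Your justification of the nearest-neighbor distance bound also fails as stated. A single-scale covering by $\asymp n/K$ cubes of side $\asymp(\log n)(K/n)^{1/d}$ averages only $K$ points per cube. Cubes holding at most $K$ points then contain a constant fraction of the sample, already for uniform $\bX$. For those points the argument gives only the trivial bound $2C\log n$.

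The paper instead bounds the tail $\P(\|\bX_1-\bX_{N(1)}\|_2\ge\vep,\,\mathcal E_n)$ at every scale $\vep$, in these steps:
\begin{itemize}
\item cover the ball of radius $C\log n$ by $\cN(\vep)\lesssim_d(\log n)^d\vep^{-d}$ balls of diameter $\vep$;
\item charge the light balls (mass at most $CK\log n/n$) by $\frac{K\log n}{n}\cN(\vep)$;
\item show that a point in a heavy ball has $K$ neighbors within distance $\vep$ except with probability $O(n^{-2})$;
\item integrate the resulting bound $K(\log n)^{1+d}/(n\vep^d)$ against $\vep\,\mathrm{d}\vep$ from $\vep_n=(\log n)^{1+1/d}n^{-1/d}$.
\end{itemize}

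Alternatively, because your event $E_n$ is symmetric in all sample points, a deterministic packing argument works. Let $r_i$ be the $K$-th nearest neighbor distance of $\bX_i$. The balls $B(\bX_i,r_i/2)$ cover each point of $\R^d$ at most $K$ times, so on $E_n$ one has $\sum_i r_i^d\lesssim_d K(\log n)^d$. Exchangeability and the power-mean inequality then give
\begin{align*}
\E\big[\|\bX_1-\bX_{N(1)}\|_2^2\one\{E_n\}\big]\le\E\Big[\frac1n\sum_i r_i^2\one\{E_n\}\Big]\lesssim_d(\log n)^2(K/n)^{2/d}
\end{align*}
for $d\ge 2$, and $\lesssim(\log n)^2K/n$ for $d=1$. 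Combined with the Cauchy--Schwarz step, this yields the required bias bound.
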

Now combining \eqref{eq:rates_other_terms} and Proposition \ref{prop:rate_nn_est} we get the convergence rate,
\begin{align*}
    \left|T_n - \E[\|\E[\bY\mid\bX] - \E[\bY]\|_2^2]\right|= O_P\left(\max\left\{\frac{(\log n)^2}{\sqrt{n}},\frac{(\log n)^{1+\frac{1}{d}}}{n^{\frac{1}{d}}}\right\}\right).
\end{align*}
The proof is now completed by recalling that $\bY$ is not almost surely a constant and applying Lemma \ref{lemma:ratio_Op}.

\subsubsection{Proof of Proposition \ref{prop:rate_nn_est}}
Recalling \eqref{eq:Tn}, note that $T_n = V_n - \frac{1}{n(n-1)}\sum_{1 \leq u \neq v \leq n}\bY^\top\bY$, where,
\begin{align}\label{eq:Vn}
    V_n = \frac{1}{n K}\sum_{u=1}^{n}\sum_{v \in N_{G(\sX_n)}(u)}\bY^\top\bY.
\end{align}
First we control of variance of $V_n$. For this, applying \eqref{eq:bound_var_Vn} gives,
\begin{align*}
    \Var[V_n]\lesssim_d\frac{1}{n}\E\left[\max_{1\leq i\leq n}\|\bY_u\|_2^4\right]\lesssim_{d,p}\frac{1}{n} \inf_{\delta>0}\left\{\delta + n\int_{\delta}^{\infty}\P\left(\left\|\bY - \E\bY\right\|_2^4>t\right)\d t\right\} + \frac{1}{n} , 
\end{align*}
where the last inequality follows using union bound. Now, using Assumption \ref{assumption:rate_of_convg} (1) gives, 
\begin{align}\label{eq:bd_var_Vn}
    \Var[V_n]\lesssim_{d,p}\frac{1}{n}\inf_{\delta>0}\left\{\delta + n\int_{\delta}^{\infty}\exp\left(-C_2t^{1/4}\right)\d t\right\}\lesssim_{d,p}\frac{\left(\log n\right)^4}{n} . 
\end{align}
Next, we control the bias. For this, recall the definition of $g$ from Assumption \ref{assumption:rate_of_convg}, choose $\vep_n = (\log n)^{1 + 1/d}/n^{1/d}$, and 
\begin{align*}
    \nu_n = 
    \begin{cases}
            \frac{\left(\log n\right)^3}{n} & \text{ if }d = 1 , \\
            \frac{\left(\log n\right)^{4}}{n} & \text{ if }d = 2, \\
            \frac{\left(\log n\right)^{2 + 2/d}}{n^{2/d}} & \text{ if }d \geq 3.  
        \end{cases}
\end{align*}
Then by the argument from \eqref{eq:unifoneneigh}, 
\begin{align*}
    \left|\E[V_n] - \E\left[\left\|\E[\bY\mid\bX]\right\|_2^2\right]\right|
    & \lesssim \E\left[\left|g(\bX_1)^\top\left(g(\bX_{N(1)}) - g(\bX_1)\right)\right|\right]\\
    & \lesssim\E\left[\left(1 + \|\bX_1\|_2^{\beta} + \|\bX_{N(1)}\|_2^{\beta}\right)\left\|\bX_1 - \bX_{N(1)}\right\|\right] , 
\end{align*}
where $N(1)$ is an index selected uniformly from the neighbors of $\bX_1$ in $G(\sX_n)$ and the second inequality follows by the local Lipschitz assumption on $g$ in Assumption \ref{assumption:rate_of_convg}. 
Now, let $M_n = C \log n$, for a sufficiently large constant $C>0$, and define
\begin{align*}
\mathcal{E}_n := \left\{\max\{\|\bX_1\|_2,\|\bX_{N(1)}\|_2\}\leq M_n\right\}.
\end{align*}
Then consider the following decomposition: 
\begin{align}
\E\Big[\big(1+\|\bX_1\|_2^{\beta}+\|\bX_{N(1)}\|_2^{\beta}\big)
\|\bX_1-\bX_{N(1)}\|_2\Big]
\lesssim T_1 + T_2, \label{eq:decomp}
\end{align}
where
\begin{align*}
T_1 &:= \E\Big[\big(1+\|\bX_1\|_2^{\beta}+\|\bX_{N(1)}\|_2^{\beta}\big)
\|\bX_1-\bX_{N(1)}\|_2\one\{\mathcal{E}_n^c\}\Big],\\
T_2 &:= \E\Big[\big(1+\|\bX_1\|_2^{\beta}+\|\bX_{N(1)}\|_2^{\beta}\big)
\|\bX_1-\bX_{N(1)}\|_2\one\{\mathcal{E}_n\}\Big].
\end{align*}
For $T_1$, by Cauchy--Schwarz inequality, the tail condition in Assumption \ref{assumption:rate_of_convg} and Lemma \ref{lemma:fXN1_bdd_fX1} gives, 
\begin{align}\label{eq:I_1_bdd}
T_1 &\leq
\left(\E\Big[\big(1+\|\bX_1\|_2^{\beta}+\|\bX_{N(1)}\|_2^{\beta}\big)^2
\|\bX_1-\bX_{N(1)}\|_2^{2}\Big]\right)^{\frac{1}{2}}
\left(\P(\mathcal{E}_n^c)\right)^{\frac{1}{2}} \lesssim \frac{1}{n^2} ,
\end{align}
for $C$ chosen sufficiently large. Next,  for $T_2$, once again by Cauchy-Schwarz inequality,
\begin{align}\label{eq:I2_bdd}
    T_2
    &\lesssim  \left(\E\left[\left(1+\|\bX_1\|_2^{\beta}+\|\bX_{N(1)}\|_2^{\beta}\right)^2\right] \right)^{\frac{1}{2}} \left(\E\left[\|\bX_1 - \bX_{N(1)}\|_2^{2}\one\left\{\mathcal{E}_n\right\}\right] \right)^{\frac{1}{2}} \nonumber\\
    &\lesssim \left(\E\left[\|\bX_1 - \bX_{N(1)}\|_2^{2}\one\left\{\mathcal{E}_n\right\}\right]\right)^{\frac{1}{2}} , 
\end{align}
where the last inequality from the tail condition on $P_{\bX}$ in Assumption \ref{assumption:rate_of_convg}. 
%
%
Now, observe that
\begin{align}
\E\Big[\|\bX_1-\bX_{N(1)}\|_2^{2}\one\{\mathcal{E}_n\}\Big]
& \lesssim \int_{0}^{2M_n}\vep\P\left(\|\bX_1-\bX_{N(1)}\|_2\geq \vep,\, \mathcal{E}_n\right)\mathrm{d}\vep\nonumber\\
&\lesssim
\vep_n^{2}
+ \int_{\vep_n}^{2M_n}
\vep
\P\Big(\|\bX_1-\bX_{N(1)}\|_2\geq \vep,\, \mathcal{E}_n\Big)
\, \mathrm{d}\vep  ,  
\label{eq:tail_int}
\end{align} 
where the second inequality follows by noting that $\vep_n\leq M_n$ for large enough $C>0$. To control the second term, let $\cN = \cN(M_n,\vep)$ denote the covering number of the Euclidean ball $\cB(M_n) = \{\bx\in \R^d : \|\bx\|_2 \le M_n\}$ by $\|\cdot\|_2$-norm balls of diameter $\vep$.  Let $\{\cB_i\}_{1 \leq i \leq \cN}$ be a covering of $\cB(M_n)$ and define
\begin{align}\label{eq:def_S}
\cS = \left\{1 \leq i \leq \cN: P_{\bX}(\cB_i) \le \frac{C K \log n}{n} \right\} , 
\end{align}
where $P_{\bX}(A)$ denotes the measure of the set $A$ under $P_{\bX}$. Then, for $\vep \in (\vep_n, M_n)$, using the union bound and Lemma \ref{lemma:fXN1_bdd_fX1} gives, 
\begin{align}
    \P(\|\bX_1 - \bX_{N(1)}\|_2 \ge \vep, & \max(\|\bX_1\|_2,\|\bX_{N(1)}\|_2) \le M_n) \nonumber \\
&\lesssim \underbrace{\P\left(\bX_1,\bX_{N(1)} \in \bigcup_{i\notin \cS} \cB_i, \|\bX_1 - \bX_{N(1)}\|_2 \ge \vep \right)}_{\mathcal P_n} + \frac{K \log n}{n} \cN. 
\label{eq:Mn}
\end{align}
Note that if $\|\bX_1 - \bX_{N(1)}\|_2 \ge \vep$, then every other point outside the $K$ nearest neighbors must be at distance at least $\vep$ from $\bX_1$, that is, for any $v$ not among the $K$ nearest neighbors of $\bX_1$, we have $\|\bX_1 - \bX_v\|_2 \ge \vep$. Moreover, this implies there must exist $1\leq \ell\leq n$ and subset $J_\ell = \{j_1, j_2,\ldots, j_{n-K-1}\}\subseteq[n]\setminus\{\ell\}$ such that $\bX_{\ell}\in \cup_{i\not\in \cS}\cB_i$ and $\|\bX_\ell - \bX_{j_v}\|\geq \vep$ for all $1\leq v\leq n-K-1$. By a union bound,
\begin{align*}
    \mathcal P_n\leq\sum_{\ell=1}^{n}\sum_{\substack{J_\ell\subseteq[n]\setminus\{\ell\}\\|J_\ell| = n-K-1}}\P\left(\bX_{\ell}\in \bigcup_{i\not\in \cS}\cB_i, \min_{1\leq v\leq n-K-1}\|\bX_\ell - \bX_{j_v}\|\geq \vep\right).
\end{align*}
To further bound $\mathcal P_n$ let $\cB(\bX_{\ell}) = \cB_j$ such that $\bX_\ell\in \cB_j$ for some $j\not\in \cS$. Then,
\begin{align*}
    \mathcal P_n
    & \leq \sum_{\ell=1}^{n}\sum_{\substack{J_\ell\subseteq[n]\setminus\{\ell\}\\|J_\ell| = n-K-1}}\P\left(\bX_{j_v}\not\in \cB(\bX_\ell)\text{ for all }1\leq v\leq n-K-1\right)\\
    & = \sum_{\ell=1}^{n}\sum_{\substack{J_\ell\subseteq[n]\setminus\{\ell\}\\|J_\ell| = n-K-1}}\E\left[\left(1-\P\left(\bX\in\cB(\bX_\ell)\mid\bX_\ell\right)\right)^{n-K-1}\right]\le n^{K+1} \left(1 - \frac{C K \log n}{n}\right)^{n-K-1} \lesssim \frac{1}{n^2},
\end{align*}
where $\bX\sim P_{\bX}$ is generated independent of $\bX_\ell$. The equality follows from independence of $\bX_\ell,\bX_{j_1},\ldots,\bX_{j_{n-K-1}}$, while the penultimate inequality follows from the definition of $\cS$ in \eqref{eq:def_S}. Applying the above with the standard bound $\cN \lesssim_d (\log n)^{d}/\vep^d$ in \eqref{eq:Mn} gives, 
\[
\P(\|\bX_1 - \bX_{N(1)}\|_2 \ge \vep, \max(\|\bX_1\|_2,\|\bX_{N(1)}\|_2) \le M_n) \lesssim_d \frac{K (\log n)^{1+d}}{n \vep^d}.
\]
Plugging this into the tail integral in \eqref{eq:tail_int} gives, 
\begin{align}\label{eq:I2_rate}
    \E[\|\bX_1 - \bX_{N(1)}\|_2^{2} \mathbf{1}\{\mathcal{E}_n\}] 
\lesssim_d \vep_n^{2} + \frac{K (\log n)^{1+d}}{n} \int_{\vep_n}^{2M_n} \vep^{1 - d} \mathrm{d} \vep \lesssim_d \vep_n^{2} + \nu_n.
\end{align}
Combining \eqref{eq:decomp}, \eqref{eq:I_1_bdd}, \eqref{eq:I2_bdd}, and \eqref{eq:I2_rate} yields,
\begin{align}\label{eq:EVn_conc_rate}
\left|\E[V_n] - \E\left[\left\|\E[\bY\mid\bX]\right\|_2^2\right]\right| \lesssim_d \frac{1}{n^2} + \vep_n^{} + \sqrt{\nu_n}.
\end{align}
Finally, recalling the bound on $\Var[V_n]$ from \eqref{eq:bd_var_Vn} we conclude,
\begin{align*}
    \left|V_n - \E\left[\left\|\E[\bY\mid\bX]\right\|_2^2\right]\right| = O_P\left(\sqrt{\frac{(\log n)^4}{n} + \vep_n^2 + \nu_n}\right) = O_P\left(\max\left\{\frac{(\log n)^2}{\sqrt{n}},\frac{(\log n)^{1+\frac{1}{d}}}{n^{\frac{1}{d}}}\right\}\right) .  
\end{align*}
This completes the proof of Proposition \ref{prop:rate_nn_est}. \hfill $\Box$



\section{Proof of Theorem \ref{thm:null_dist}}
\label{sec:cltpf}

The first step in the proof is to establish the decomposition in \eqref{eq:RnTn}. This is stated in the following lemma, which we prove later in Appendix \ref{sec:Rnpf}.

\begin{lemma} Let $T_n$ and $R_n$ be as defined in \eqref{eq:Tn} and \eqref{eq:Rn}, respectively. Then the following hold: 
$$\sqrt n T_n = R_n + o_{L_2}(1).$$
\label{lm:Rn}
\end{lemma}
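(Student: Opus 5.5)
The identity is purely algebraic plus a second-moment bound. The neighbor-sum term in $\sqrt n T_n$ and the one in $R_n$ coincide exactly, so the plan is to show that the only discrepancy is the degenerate part of a second-order $U$-statistic, which is $O_{L_2}(n^{-1/2})$ after multiplying by $\sqrt n$. We use only $\E[\|\bY\|_2^2]<\infty$; the argument does not need $H_0$ or the graph structure.

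Write $\bm\mu=\E[\bY]$ and
$$U_n=\frac{1}{n(n-1)}\sum_{1\le u\ne v\le n}\bY_u^\top\bY_v .$$
Comparing \eqref{eq:Tn} with \eqref{eq:Rn}, the neighbor sums cancel and
$$\sqrt n T_n-R_n=-\sqrt n\Big(U_n-\frac1n\sum_{u=1}^n\bm\mu^\top(2\bY_u-\bm\mu)\Big).$$
Next we expand each product using $\bY_u=\bm\mu+(\bY_u-\bm\mu)$. For $u\ne v$,
$$\bY_u^\top\bY_v=\|\bm\mu\|_2^2+\bm\mu^\top(\bY_u-\bm\mu)+\bm\mu^\top(\bY_v-\bm\mu)+(\bY_u-\bm\mu)^\top(\bY_v-\bm\mu).$$
Averaging over ordered pairs gives the Hoeffding decomposition
$$U_n=\|\bm\mu\|_2^2+\frac2n\sum_{u=1}^n\bm\mu^\top(\bY_u-\bm\mu)+W_n,\qquad W_n:=\frac{1}{n(n-1)}\sum_{u\ne v}(\bY_u-\bm\mu)^\top(\bY_v-\bm\mu).$$
On the other side,
$$\frac1n\sum_{u=1}^n\bm\mu^\top(2\bY_u-\bm\mu)=\|\bm\mu\|_2^2+\frac2n\sum_{u=1}^n\bm\mu^\top(\bY_u-\bm\mu).$$
Hence the linear terms cancel exactly and
$$\sqrt n T_n-R_n=-\sqrt n\,W_n.$$

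It remains to bound $\E[W_n^2]$. Put $\bZ_u=\bY_u-\bm\mu$, which are i.i.d. with mean zero. For index pairs $u\ne v$ and $u'\ne v'$, the expectation $\E[(\bZ_u^\top\bZ_v)(\bZ_{u'}^\top\bZ_{v'})]$ vanishes unless $\{u,v\}=\{u',v'\}$. Indeed, if some index appears only once among the four, conditioning on the other variables and using $\E\bZ=0$ kills the term. Therefore
$$\E[W_n^2]=\frac{2}{n(n-1)}\E\big[(\bZ_1^\top\bZ_2)^2\big]\le\frac{2\big(\E\|\bY-\bm\mu\|_2^2\big)^2}{n(n-1)},$$
using Cauchy--Schwarz and independence. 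Consequently $\E[(\sqrt nW_n)^2]=O(1/n)\to0$, which gives $\sqrt nT_n=R_n+o_{L_2}(1)$.

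There is no real obstacle here. The only point needing care is the bookkeeping check that the linear Hoeffding projection of $U_n$ matches the centering $\bm\mu^\top(2\bY_u-\bm\mu)$ in \eqref{eq:Rn} exactly, so that only the degenerate term $W_n$ survives.
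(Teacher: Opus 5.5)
Your proposal is correct and is essentially the paper's own argument: rewrite $\sqrt n T_n - R_n$ in terms of $\bZ_u = \bY_u - \E[\bY]$, note that the linear terms cancel exactly against the centering in $R_n$, and compute $\E[(\sqrt n T_n - R_n)^2] = \frac{2}{n-1}\E[(\bZ_1^\top \bZ_2)^2]$. Your sign, $\sqrt n T_n - R_n = -\sqrt n W_n$, is the correct one (the paper's display omits the minus sign, which does not affect the $L_2$ bound), and you are right that only $\E[\|\bY\|_2^2]<\infty$ is needed.
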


Next, we compute the variance of $R_n$ under $H_0$. The proof is given in Appendix \ref{sec:varianceRnpf}. 

\begin{lemma} Let $R_n$ be as defined in \eqref{eq:Rn} and $\sigma_n^2 := \Var_{H_0}\left[R_n\middle| \mathcal{F}(\sX_n)\right] $. Then 
\begin{align*}
       \sigma_n^2 = & \frac{1}{n K^2} \sum_{1 \leq u \ne v \leq n} (\E_{H_0}[(\bZ_u^\top \bZ_v)^2\mid  \bX_u, \bX_v]) \left( \one\left\{\bX_u \rightarrow \bX_v\right\} + \one\left\{ \bX_u \leftrightarrow \bX_v \right\} \right) \nonumber \\ 
& \hspace{1.35in} + \frac{1}{n}\sum_{u=1}^{n}\E_{\bm H_0}\left[\bY\right]^\top\Var_{\bm H_0}\left(\bY_u\middle|\bX_u\right)\E_{\bm H_0}\left[\bY\right]\left(\frac{\bar{d}_u}{K}-1\right)^2 , 
    \end{align*} 
    \label{lm:varianceRn} 
    with $\bZ_u = \bY_u - \E[\bY]$ and $\bar{d}_u: |\left\{v \in [n]:\bX_v\ra\bX_u \right\}|$, for $1\leq u \leq n$. 
\end{lemma}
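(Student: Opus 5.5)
Under $H_0$ write $\bm\mu=\E[\bY]$ and $\bY_u=\bm\mu+\bZ_u$. Then $\E[\bZ_u\mid\bX_u]=0$, and conditionally on $\mathcal F(\sX_n)$ the vectors $\bZ_1,\ldots,\bZ_n$ are independent. The plan is to expand $R_n$ in these variables. This produces a purely quadratic part, a linear part and a deterministic part, and we then compute the conditional variance using the fact that the quadratic and linear parts are conditionally uncorrelated.

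\textbf{Step 1 (expansion).} Since $\bY_u^\top\bY_v=\|\bm\mu\|^2+\bm\mu^\top\bZ_u+\bm\mu^\top\bZ_v+\bZ_u^\top\bZ_v$ and every vertex has out-degree $K$, we get
\begin{align*}
\frac{1}{\sqrt n K}\sum_{u}\sum_{v\in N(u)}\bY_u^\top\bY_v=\sqrt n\|\bm\mu\|^2+\frac{1}{\sqrt n}\sum_u \bm\mu^\top\bZ_u+\frac{1}{\sqrt n K}\sum_v \bar d_v\,\bm\mu^\top\bZ_v+Q_n ,
\end{align*}
where $Q_n=\frac{1}{\sqrt nK}\sum_u\sum_{v\in N(u)}\bZ_u^\top\bZ_v$. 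The second term of $R_n$ equals $\frac{1}{\sqrt n}\sum_u(2\bm\mu^\top\bZ_u+\|\bm\mu\|^2)$. The constants therefore cancel, and we obtain
$$R_n=Q_n+L_n,\qquad L_n=\frac{1}{\sqrt n}\sum_u\Big(\frac{\bar d_u}{K}-1\Big)\bm\mu^\top\bZ_u .$$

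\textbf{Step 2 (cross term vanishes).} Consider $\mathrm{Cov}(\bZ_u^\top\bZ_v,\bm\mu^\top\bZ_w\mid\mathcal F(\sX_n))$ with $u\ne v$. In each of the three cases $w=u$, $w=v$ and $w\notin\{u,v\}$, at least one factor $\bZ$ appears exactly once and is conditionally independent of the others. Its conditional mean is zero, so the covariance vanishes. Hence $\sigma_n^2=\Var(Q_n\mid\mathcal F)+\Var(L_n\mid\mathcal F)$.

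\textbf{Step 3 (linear part).} By conditional independence,
$$\Var(L_n\mid\mathcal F)=\frac1n\sum_u(\bar d_u/K-1)^2\,\bm\mu^\top\Var(\bY_u\mid\bX_u)\bm\mu ,$$
which is exactly the second term of the lemma.

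\textbf{Step 4 (quadratic part).} Write $Q_n=\frac{1}{\sqrt nK}\sum_{u\ne v}a_{uv}\bZ_u^\top\bZ_v$ with $a_{uv}=\one\{\bX_u\ra\bX_v\}$. Regroup this over unordered pairs as $\sum_{\{u,v\}}(a_{uv}+a_{vu})\bZ_u^\top\bZ_v$. For distinct unordered pairs the products $\bZ_u^\top\bZ_v$ are conditionally uncorrelated with mean zero, because some index appears only once. Therefore
$$\Var(Q_n\mid\mathcal F)=\frac{1}{nK^2}\sum_{\{u,v\}}(a_{uv}+a_{vu})^2\,\E[(\bZ_u^\top\bZ_v)^2\mid\bX_u,\bX_v].$$
Since $a_{uv}$ and $a_{vu}$ take values in $\{0,1\}$, we have $(a_{uv}+a_{vu})^2=a_{uv}+a_{vu}+2a_{uv}a_{vu}$. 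Converting back to a sum over ordered pairs, using the symmetry of the conditional moment, gives the weight $a_{uv}+a_{uv}a_{vu}=\one\{\bX_u\ra\bX_v\}+\one\{\bX_u\leftrightarrow\bX_v\}$.

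The main care point is this bookkeeping of mutual edges in Step 4; the remaining steps are routine.
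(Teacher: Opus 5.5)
Your proposal is correct and follows essentially the same route as the paper. Both rewrite $R_n$ in the centered variables $\bZ_u$ as a quadratic part plus the in-degree-weighted linear part, and both use conditional independence together with $\E_{H_0}[\bZ_u\mid\bX_u]=0$ to kill the cross-covariance and all fourth-order terms except those on the same pair of indices. Your regrouping over unordered pairs is a slightly more explicit version of the paper's $S_{11}/S_{12}/S_{13}$ split, and it makes clear where the $\one\{\bX_u\leftrightarrow\bX_v\}$ weight comes from.
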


The next result establishes the asymptotic normality of $\frac{R_n}{\sigma_n}$ under $H_0$. The proof is given in Appendix \ref{sec:cltH0sigmapf}.

\begin{proposition}\label{prop:clt_with_sigman} 
    Suppose Assumption \ref{assumption:normX_continuous} holds and $\E[\|\bY\|_2^{8+\delta}]<\infty$. Then under $\bm H_0$, as $n\ra\infty$,
    \begin{align*}
        \sup_{z\in\R}\left|\P\left(\frac{R_n}{\sigma_n}\leq z\right) - \Phi(z)\right| \ra 0 . 
    \end{align*} 
    \end{proposition}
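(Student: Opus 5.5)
We prove Proposition \ref{prop:clt_with_sigman} by working conditionally on $\mathcal F(\sX_n)$, where the $K$-NN graph $G(\sX_n)$ is fixed. Under $H_0$ we write $\bY_u = \bm\mu + \bZ_u$ with $\bm\mu=\E[\bY]$ and $\E[\bZ_u\mid\bX_u]=0$. Conditionally on $\mathcal F(\sX_n)$, the $\bZ_u$ are independent and mean zero.

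Substituting into \eqref{eq:Rn}, the constant terms cancel because $\sum_u |N_{G(\sX_n)}(u)| = nK = \sum_u \bar d_u$. This leaves
\begin{align*}
R_n = \frac{1}{\sqrt n K}\sum_{u=1}^n\sum_{v\in N_{G(\sX_n)}(u)}\bZ_u^\top\bZ_v + \frac{1}{\sqrt n}\sum_{u=1}^n\Big(\frac{\bar d_u}{K}-1\Big)\bm\mu^\top\bZ_u .
\end{align*}
This is a sum of local terms $\xi_e$, indexed by the edges and the vertices of $G(\sX_n)$, and each term has conditional mean zero. The two pieces are conditionally uncorrelated, which is exactly why Lemma \ref{lm:varianceRn} has the stated form. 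For the edge terms, the factor $\one\{\bX_u\to\bX_v\}+\one\{\bX_u\leftrightarrow\bX_v\}$ accounts for mutual edges, which appear twice.

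Next we form a dependency graph on these summands. Two summands are joined if their index sets share a vertex of $G(\sX_n)$. Since every vertex of a $K$-NN graph in $\R^d$ has in-degree at most $C_dK$ (a kissing-number bound, available under Assumption \ref{assumption:normX_continuous}), each summand has at most $D\lesssim_d K^2$ neighbors in this dependency graph. We then apply the Stein's method bound of \cite{chen2004normal} for sums with a dependency graph, conditionally on $\mathcal F(\sX_n)$. In the form using third and fourth moments, the Kolmogorov distance is bounded by
\begin{align*}
C\Big(\frac{D^2}{\sigma_n^3}\sum_e \E[|\xi_e|^3\mid\mathcal F(\sX_n)] + \frac{D^{3/2}}{\sigma_n^2}\Big(\sum_e\E[\xi_e^4\mid\mathcal F(\sX_n)]\Big)^{1/2}\Big).
\end{align*}
Each $\xi_e$ carries a factor $n^{-1/2}$. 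The third-moment sum is therefore of order $n\cdot n^{-3/2}$, with conditional moments $\E[\|\bZ_u\|^3\mid\bX_u]$ entering, and the fourth-moment term is of order $(n\cdot n^{-2})^{1/2}$. Both vanish, provided that $\sigma_n^2$ is bounded below and that averages such as $\frac1n\sum_u \E[\|\bZ_u\|^4\mid\bX_u]^2$ are $O_P(1)$. This is where the assumption $\E\|\bY\|^{8+\delta}<\infty$ is used, since fourth powers of products $\bZ_u^\top\bZ_v$ require eighth moments. We control these averages in $L^1$ by the law of large numbers together with the bounded degree of the graph. Finally, we integrate the conditional bound over $\mathcal F(\sX_n)$ using dominated convergence.

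The main obstacle is the nondegeneracy step, namely showing $\sigma_n^2\ge c>0$ with probability tending to one. The second term of Lemma \ref{lm:varianceRn} is nonnegative, so it suffices to bound the first term from below. We use $\one\{\bX_u\to\bX_v\}\ge$ the out-edge indicator and the fact that each $u$ has exactly $K$ out-neighbors. With $h(\bx,\bx')=\E[(\bZ^\top\bZ')^2\mid \bX=\bx,\bX'=\bx']$, the first term is then at least $\frac{1}{nK^2}\sum_u\sum_{v\in N(u)}h(\bX_u,\bX_v)$. By the nearest-neighbor consistency argument used for \eqref{eq:consistencycondition} (Lemma D.3 of \cite{deb2020measuring}, applied after truncating $h$), this converges in probability to $\frac1K\E[h(\bX,\bX)] = \frac1K\E\big[\tr(\Sigma(\bX)^2)\big]$, where $\Sigma(\bx)=\Var(\bY\mid\bX=\bx)$. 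This limit is positive because $\bY$ is not almost surely constant. If the continuity of $h$ needed for this argument is unavailable, we would fall back on a truncated, Lusin-type approximation of $h$ by continuous functions.
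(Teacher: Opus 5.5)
Your proposal is correct and follows essentially the same route as the paper. You condition on $\mathcal F(\sX_n)$, apply a dependency-graph Stein bound with maximal degree $\lesssim_d K^2$ coming from the $K$-NN in-degree bound, and show the moment sums are $O_P(n^{-1/2})$. You then get nondegeneracy of $\sigma_n^2$ from $\frac{1}{nK^2}\sum_u\sum_{v\in N_{G(\sX_n)}(u)}\tr(\Sigma(\bX_u)\Sigma(\bX_v))\to\frac1K\E\|\Sigma(\bX)\|_F^2>0$, which is exactly what the proof of Lemma~\ref{lemma:varTnlimpos} establishes; using edge and vertex summands instead of the paper's per-vertex $V_u$ is an immaterial variation.

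One small correction: the third-plus-fourth-moment bound you quote is the Wasserstein-distance bound of Baldi--Rinott--Stein (as stated in Ross's survey), not Chen--Shao's Kolmogorov bound. This is harmless, since Chen--Shao's Theorem~2.7 with $p=3$ gives Kolmogorov distance $\lesssim D^{10}\sigma_n^{-3}\sum_e\E[|\xi_e|^3\mid\mathcal F(\sX_n)]$, which you already show vanishes; alternatively, use $d_K\lesssim\sqrt{d_W}$ for a Gaussian target.
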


Next, we show that $\hat\sigma_n^2$ (defined in \eqref{eq:varianceH0}) is a consistent estimate of $\sigma_n^2$. The proof is given in Appendix \ref{sec:varianceestimatepf}.

\begin{proposition}\label{ppn:varianceestimate}
    Assume that $\E[\|\bY\|_2^{8+\delta}]<\infty$ for some $\delta>0$. Then under $H_0$,
    \begin{align*}
        \left|\frac{\hat{\sigma}_n^2}{\sigma_n^2}-1\right| = o_P\left(n^{-\frac{\delta}{32 + 4\delta}}\right),
    \end{align*}
    where $\sigma_n^2$ and $\hat\sigma_n^2$ are defined in Lemma \ref{lm:varianceRn} and \eqref{eq:varianceH0}, respectively.
\end{proposition}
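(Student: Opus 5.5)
The plan is to compare $\hat\sigma_n^2$ with $\sigma_n^2$ in two stages. First replace the empirical centering $\bar{\bY}_n$ by the population mean $\E[\bY]$. Then show that the resulting oracle estimator differs from $\sigma_n^2$ by a sum of terms that are conditionally centered given $\mathcal{F}(\sX_n)$. Finally, lower bound $\sigma_n^2$ in probability so that an additive error becomes a relative one.

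Let $\bZ_u=\bY_u-\E[\bY]$ and define the oracle quantity
\begin{align*}
\tilde\sigma_n^2 := \frac{1}{nK^2}\sum_{u\neq v}(\bZ_u^\top\bZ_v)^2\left(\one\{\bX_u\ra\bX_v\}+\one\{\bX_u\leftrightarrow\bX_v\}\right) + \frac1n\sum_{u=1}^n\left(\frac{\bar d_u}{K}-1\right)^2\left(\E[\bY]^\top\bZ_u\right)^2 .
\end{align*}
By independence across $u$, $E[(\bZ_u^\top\bZ_v)^2\mid \sX_n]=\E_{H_0}[(\bZ_u^\top\bZ_v)^2\mid\bX_u,\bX_v]$ and $E[(\E[\bY]^\top\bZ_u)^2\mid\sX_n]=\E[\bY]^\top\Var(\bY_u\mid\bX_u)\E[\bY]$, using $H_0$ in the second identity. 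Hence $E[\tilde\sigma_n^2\mid\mathcal F(\sX_n)]=\sigma_n^2$.

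For the conditional variance of $\tilde\sigma_n^2-\sigma_n^2$, I would use that the in-degrees $\bar d_u$ of a $K$-NN graph are bounded by $C_dK$ (a kissing-number argument). It follows that each edge summand is conditionally correlated with only $O_d(K^2)$ others, and each vertex summand carries a bounded weight $(\bar d_u/K-1)^2$. Combined with $\E\|\bZ\|^8<\infty$ (which controls $\E(\bZ_u^\top\bZ_v)^4\le(\E\|\bZ\|^4)^2$ and $\E\|\bZ\|^4$), this gives $\Var(\tilde\sigma_n^2\mid\sX_n)\lesssim_{d,K}1/n$. Therefore $\tilde\sigma_n^2-\sigma_n^2=O_P(n^{-1/2})$.

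For the centering step, expand $\bY_u-\bar{\bY}_n=\bZ_u-\bm\Delta_n$ with $\bm\Delta_n=\bar{\bY}_n-\E[\bY]=O_P(n^{-1/2})$. The cross terms in $((\bZ_u-\bm\Delta_n)^\top(\bZ_v-\bm\Delta_n))^2-(\bZ_u^\top\bZ_v)^2$ are bounded by $\|\bm\Delta_n\|\,\|\bZ_u\|\,\|\bZ_v\|(\|\bZ_u\|+\|\bZ_v\|)$ plus higher powers of $\|\bm\Delta_n\|$; the second term is handled the same way. The averages over edges of $\|\bZ_u\|^a\|\bZ_v\|^b$ are $O_P(1)$ by the bounded degree. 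Where a maximum is needed, I would use $\max_u\|\bZ_u\|=o_P(n^{1/(8+\delta)})$, proved as in \eqref{eq:max_Y_4_bd} from $\E\|\bY\|^{8+\delta}<\infty$. Balancing the resulting powers of $n^{1/(8+\delta)}$ against $n^{-1/2}$ (possibly after truncating $\bZ_u$ at level $n^{1/(8+\delta)}$ and bounding the truncated tail by Markov) is where I expect the stated exponent $\delta/(32+4\delta)=\tfrac14\cdot\tfrac{\delta}{8+\delta}$ to arise. I would track that exponent in this step rather than optimize it.

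The main obstacle is the lower bound $\sigma_n^2\ge c>0$ with probability tending to one, which is needed to pass from additive to relative error. Discard the nonnegative second term of $\sigma_n^2$ and write $h(\bx,\bx')=\E[(\bZ^\top\bZ')^2\mid\bX=\bx,\bX'=\bx']=\Tr(\Sigma(\bx)\Sigma(\bx'))+\mu(\bx)^\top\Sigma(\bx')\mu(\bx)+\mu(\bx')^\top\Sigma(\bx)\mu(\bx')+(\mu(\bx)^\top\mu(\bx'))^2$, where $\mu(\bx)=\E[\bZ\mid\bX=\bx]$ and $\Sigma(\bx)=\Var(\bY\mid\bX=\bx)$. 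Under $H_0$ we have $\mu\equiv0$, so $h(\bx,\bx')=\Tr(\Sigma(\bx)\Sigma(\bx'))$. I would show $\frac{1}{nK}\sum_u\sum_{v\in N(u)}h(\bX_u,\bX_v)\pto \E[\Tr(\Sigma(\bX)^2)]$. The expectation step would go as in the proof of Theorem \ref{thm:etan_consistent}: approximate $\Sigma$ in $L_2$ by a continuous compactly supported function and use $\|\bX_{N(1)}-\bX_1\|\pto0$ together with Lemma D.3 of \cite{deb2020measuring}. The variance step would use Lemma \ref{lemma:VarTn_bd}. The limit is strictly positive because $\E\Tr(\Sigma(\bX)^2)\ge \E\|\bY-\E\bY\|^4_{\!}\!/p$-type bounds are not needed: it suffices that $\E\Tr\Sigma(\bX)=\E\|\bY-\E\bY\|^2>0$ under $H_0$ and $Y$ nonconstant. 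Therefore $\sigma_n^2\ge c>0$ with high probability, and combining this with the two previous steps yields the proposition.
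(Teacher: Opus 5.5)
Your proof is correct. It differs from the paper's mainly in how you control the additive error $\hat\sigma_n^2-\sigma_n^2$; your lower-bound step is essentially the paper's proof of Lemma \ref{lemma:varTnlimpos}.

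\textbf{The paper's route.} It expands $\sigma_n^2$ and $\hat\sigma_n^2$ into five pieces $Q_n^{(i)},\hat Q_n^{(i)}$ written with uncentered $\bY$, plus a remainder $\tilde Q_n$. It then bounds the variances of $Q_n^{(1)}$ and $\hat Q_n^{(1)}$ separately through the Efron--Stein Lemma \ref{lemma:VarTn_bd}. That step costs $\frac1n\E[\max_u\|\bY_u\|_2^8]=o(n^{-\delta/(8+\delta)})$, hence $|\hat\sigma_n^2-\sigma_n^2|=o_P(n^{-\delta/(16+2\delta)})$. The exponent $\delta/(32+4\delta)$ then appears because only the weak bound $n^{\gamma}\sigma_n^2>1$ is used when passing to the ratio, which halves the exponent. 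So the exponent does not originate in the centering step, and in your route maxima never enter the rate.

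\textbf{Your route.} Your oracle $\tilde\sigma_n^2$ uses the exact identity $\E[\tilde\sigma_n^2\mid\mathcal F(\sX_n)]=\sigma_n^2$. It also uses the conditional independence of the $\bY_u$ given $\sX_n$, since the graph is $\sX_n$-measurable. Together these give $\E[(\tilde\sigma_n^2-\sigma_n^2)^2]=\E[\Var(\tilde\sigma_n^2\mid\sX_n)]=O(1/n)$. The recentering error is $O_P(n^{-1/2})$ and needs only low-order moments and bounded degrees. Combined with $\sigma_n^2\ge c>0$ with probability tending to one, this yields $|\hat\sigma_n^2/\sigma_n^2-1|=O_P(n^{-1/2})$, which is strictly stronger than the stated rate.

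The paper's route buys modularity: one generic Efron--Stein lemma is reused for every nearest-neighbor functional, with no conditional-covariance bookkeeping. Yours buys a cleaner argument and the parametric rate.

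Two details to fix when you write it up:
\begin{itemize}
\item $\Var(\tilde\sigma_n^2\mid\sX_n)$ is random, so the bound should read $\E[\Var(\tilde\sigma_n^2\mid\sX_n)]\lesssim_{d,K}1/n$.
\item For a neighboring pair, $(\E\|\bZ\|_2^4)^2$ is not the right control, because $\bX_u,\bX_v$ are dependent through the edge indicator. Instead use $\E[(\bZ_u^\top\bZ_v)^4\mid\bX_u,\bX_v]\le m_4(\bX_u)m_4(\bX_v)$ with $m_4(\bx)=\E[\|\bZ\|_2^4\mid\bX=\bx]$. Then apply $ab\le(a^2+b^2)/2$ and the bounded in- and out-degrees, which reduces everything to $\E[m_4(\bX)^2]\le\E\|\bZ\|_2^8<\infty$.
\end{itemize}
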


With the above results the proof of Theorem \ref{thm:null_dist} can be completed as follows. First, note that the lower bound in \eqref{eq:var_tn_xn_lb}, together with the convergences in \eqref{eq:var_sigman_t1_convg} and \eqref{eq:Var_t1_convg}, implies that $\sigma_n$ satisfies the lower-bound assumption of Lemma \ref{lemma:replace_lemma}. Then Lemma \ref{lm:Rn} and Proposition \ref{prop:clt_with_sigman}, in conjunction with Lemma \ref{lemma:replace_lemma}, gives 
\begin{align}\label{eq:tn_sigman_clt}
    \sup_{z\in \R}\left|\P\left(\frac{\sqrt{n}T_n}{\sigma_n}\leq z\right) - \Phi(z)\right|\ra 0\text{ as }n\ra \infty.
\end{align}
The proof is completed by replacing $\sigma_n$ with $\hat \sigma_n$ in \eqref{eq:tn_sigman_clt} using Proposition \ref{ppn:varianceestimate}.


\subsection{Proof of Lemma \ref{lm:Rn}}
\label{sec:Rnpf}
First note that,
\begin{align*}
    \sqrt{n}T_n - R_n = \frac{1}{\sqrt{n}(n-1)}\sum_{1 \leq u\neq v \leq n}\bY_u^\top\bY_v - \frac{1}{\sqrt{n}}\sum_{u=1}^{n}\E[\bY]^\top\left(2\bY_i - \E[\bY]\right).
\end{align*}
For notational simplification let $\bZ_i = \bY_i - \E[\bY]$ for all $1\leq i\leq n$. Then by a direct computation,
\begin{align*}
    \frac{1}{\sqrt{n}(n-1)}\sum_{1 \leq u\neq v \leq n}\bY_u^\top\bY_v = \frac{1}{\sqrt{n}(n-1)}\sum_{1 \leq u\neq v \leq n}\bZ_u^\top \bZ_v + \frac{2}{\sqrt{n}}\sum_{u=1}^{n}\E[\bY]^\top\bZ_u + \sqrt{n}\|\E[\bY]\|_2^2,
\end{align*}
and similarly,
\begin{align*}
    \frac{1}{\sqrt{n}}\sum_{u=1}^{n}\E[\bY]^\top\left(2\bY_i - \E[\bY]\right) = \frac{2}{\sqrt{n}}\sum_{u=1}^{n}\E[\bY]^\top\bZ_u + \sqrt{n}\|\E[\bY]\|_2^2.
\end{align*}
With the expansions from above the difference $\sqrt{n}T_n - R_n$ becomes,
\begin{align*}
    \sqrt{n}T_n - R_n = \frac{1}{\sqrt{n}(n-1)}\sum_{1 \leq u\neq v \leq n}\bZ_u^\top \bZ_v.
\end{align*}
Using independence and $\E[\bZ_i] = 0$ for all $1\leq i\leq n$, it follows immediately that,
\begin{align*}
    \E\left[\left(\sqrt{n}T_n - R_n\right)^2\right] = \frac{2}{n-1}\E\left[\left(\bZ_1^\top\bZ_2\right)^2\right] = o(1),
\end{align*}
which completes the proof.  \hfill $\Box$  

\subsection{Proof of Lemma \ref{lm:varianceRn}}
\label{sec:varianceRnpf}

Recall the definition of $R_n$ from \eqref{eq:Rn}. Then a direct computation gives, 
\begin{align*}
   R_n & = \frac{1}{\sqrt{n} K }\sum_{u=1}^{n} \sum_{v\in N_{G(\sX_n)}(u)}\left(\bY_u - \E\left[\bY\right]\right)^\top\left(\bY_v - \E\left[\bY\right]\right) \nonumber \\ 
   & \hspace{1.5in} + \frac{1}{\sqrt{n}}\sum_{u=1}^{n}\left(\frac{\bar{d}_u}{K}-1\right)\E\left[\bY\right]^\top\left(\bY_u - \E\left[\bY\right]\right) , 
\end{align*}
    We now decompose the conditional variance as follows:
    \begin{align}\label{eq:condvardecomp}
        \Var_{H_0}\left[R_n\middle| \mathcal{F}(\sX_n)\right] 
        = S_1 + S_2 + S_3,
    \end{align}
    where
    \begin{align*}
    S_1 
    & := \Var_{H_0}\left[\frac{1}{\sqrt{n}K}\sum_{u=1}^{n}\sum_{v\in N_{G(\sX_n)}(u)}\bZ_u^\top \bZ_v\middle| \mathcal{F}(\sX_n)\right] ,\\
    S_2 
    & := \Var_{H_0}\left[\frac{1}{\sqrt{n}}\sum_{u=1}^{n}\left(\frac{\bar{d}_u}{K}-1\right)\E\left[\bY\right]^\top\left(\bY_u - \E\left[\bY\right]\right)\middle| \mathcal{F}(\sX_n)\right] ,\\
    S_3 
    & := 2\E_{H_0}\left[\frac{1}{nK}\sum_{w=1}^{n}\sum_{u=1}^{n}\sum_{v\in N_{G(\sX_n)}(u)}\bZ_u^\top \bZ_v\left(\frac{\bar{d}_w}{K}-1\right)\E\left[\bY\right]^\top  \bZ_w\middle| \mathcal{F}(\sX_n)\right],
    \end{align*}
    with $\bZ_u = \bY_u - \E\left[\bY\right]$ for all $1\leq u \leq n$. A direct computation then shows that
    \begin{align}\label{eq:S2value}
        S_2 = \frac{1}{n}\sum_{v=1}^{n}\left(\frac{\bar{d}_v}{K}-1\right)^2 \E\left[\bY\right]^\top \Var\left[\bY_v\middle|\bX_v\right] \E\left[\bY\right].
    \end{align}
Next, recall that
    \begin{align*}
        S_3 
        = \frac{2}{nK}\sum_{w=1}^{n}\left(\frac{\bar{d}_w}{K}-1\right)\sum_{u=1}^{n}\sum_{v\in N_{G(\sX_n)}(u)}\E\left[\bZ_u^\top \bZ_v \bZ_w^\top \E[\bY]\middle| \mathcal{F}(\sX_n)\right].
    \end{align*}
    To simplify $S_3$, first note that by construction $1 \leq u\neq v \leq n$. Then using $\E_{H_0}\left[\bZ_u \middle| \mathcal{F}(\sX_n)\right]=0$, for all $1\leq u \leq n$, it follows that $S_3 = 0$. Finally, to simplify $S_1$ we consider the following decomposition,
    \begin{align*}
        S_1 = S_{11} + S_{12} + S_{13},
    \end{align*}
    where
    \begin{align}
    \begin{aligned} 
            \label{eq:S1expression} 
        S_{11}  
        & = \frac{1}{nK^2}\sum_{u=1}^{n}\sum_{v=1}^{n}\E_{H_0}\left[(\bZ_u^\top \bZ_v)^2\middle|\bX_u, \bX_v\right]\left( \one\left\{\bX_u \rightarrow \bX_v\right\} + \one\left\{ \bX_u \leftrightarrow \bX_v \right\} \right) , \\
        S_{12} 
        & = \frac{1}{nK^2}\sum_{\cS_2}\E_{H_0}\left[\bZ_u^\top \bZ_v \bZ_{u^\prime}^\top \bZ_{v^\prime}\middle|\bX_u, \bX_v, \bX_{u^\prime}, \bX_{v^\prime}\right]  \one\left\{\bX_u \rightarrow \bX_v\right\} \one\left\{\bX_{u'} \rightarrow \bX_{v'}\right\} , \\
        S_{13}
        & = \frac{1}{nK^2}\sum_{\cS_3}\E_{H_0}\left[\bZ_u^\top \bZ_v \bZ_{u^\prime}^\top \bZ_{v^\prime}\middle|\bX_u, \bX_v, \bX_{u^\prime}, \bX_{v^\prime}\right]  \one\left\{\bX_u \rightarrow \bX_v\right\} \one\left\{\bX_{u'} \rightarrow \bX_{v'}\right\} , 
          \end{aligned} 
    \end{align}
    with $\cS_2 = \{(u, v, u^\prime, v^\prime)\in [n]^4: |\{u, v\}\cap \{u^\prime, v^\prime\}|=1\}$ and $\cS_3 = \{(u, v, u^\prime, v^\prime)\in [n]^4: |\{u, v\}\cap \{u^\prime, v^\prime\}|=0\}$. Since $\E_{H_0}\left[\bZ_u \middle|\bX_u\right] = 0$, for all $1\leq u \leq n$, we have that $S_{13} = 0$. Furthermore, consider $(u, v, u^\prime, v^\prime)\in \cS_2$ and without loss of generality suppose $u = u^\prime$. Then, 
    \begin{align*}
        \E_{H_0}[\bZ_u^\top \bZ_v \bZ_{u^\prime}^\top \bZ_{v^\prime}|\bX_u, \bX_{u^\prime}, \bX_v, \bX_{v^\prime}]
        = \E_{H_0}[\bZ_v^\top \bZ_u \bZ_u^\top \bZ_{v^\prime}|\bX_u, \bX_v, \bX_{v^\prime}] = 0,
    \end{align*}
    implying $S_{12} = 0$. Collecting the above shows, $S_1 = S_{11}$. Combining this \eqref{eq:condvardecomp}, \eqref{eq:S2value}, and \eqref{eq:S1expression}, completes the proof of Lemma \ref{lm:varianceRn}.  \hfill $\Box$

\subsection{Proof of Proposition \ref{prop:clt_with_sigman}}
\label{sec:cltH0sigmapf}

Define, for each $1 \le u \le n$,
\begin{align*}
    V_u 
    = \frac{1}{\sqrt{n}}
    \left[
        \frac{1}{K}\sum_{v \in N_{G(\sX_n)}(u)} \bY_u^\top \bY_v
        - \E[\bY]^\top\bigl(2\bY_u - \E[\bY]\bigr)
    \right].
\end{align*}
Then $R_n$ (recall \eqref{eq:Rn}) can be written as
\begin{align*}
    R_n = \sum_{u=1}^{n} V_u .
\end{align*}
We will prove the result in Proposition \ref{prop:clt_with_sigman} using Stein's method for dependency graphs \cite{chen2004normal}. For this, we need to construct a dependency graph for the collection of random variables $\{V_1,\dots,V_n\}$. Denote by $\bar{G}(\sX_n)$ the undirected simple graph obtained from the the $K$-NN graph
$G(\sX_n)$, that is, we remove the directions from the edges and if for a pair of vertices there are directed edges in both directions, we keep only an undirected edge between them. We then define a dependency graph $\cG_n = (\cV_n, \cE_n)$ with vertex set
$\cV_n = \{V_1,\dots,V_n\}$, where an edge is placed between $V_u$ and $V_v$
if and only if there exists a path of length at most two between $u$ and $v$
in the undirected version of $\bar{G}(\sX_n)$. 
Let $D$ denote the maximum degree of $\cG_n$.
By \cite[Lemma 1]{jaffe2020randomized}, the maximum degree of $\bar{G}(\sX_n)$ is bounded by $c_{d} K$ for some constant $c_{d}>0$ depending only on the dimension $d$. Hence, $D \lesssim_d K^2$. Moreover, under $\bm H_0$,
\begin{align*}
    \E_{\bm H_0}\!\left[V_u \middle| \mathcal{F}(\sX_n) \right] = 0 .
\end{align*}
Then, applying the version Stein's method from \cite[Theorem 2.7]{chen2004normal} gives, 
\begin{align*}
    \sup_{z \in \R}
    \left|
        \P\left(\frac{R_n}{\sigma_n} \le z \middle| \mathcal{F}(\sX_n) \right) - \Phi(z)
    \right|
    \lesssim_{d}\frac{K^{20}}{\sigma_n^3}
        \E_{\bm H_0}\!\left[\sum_{u=1}^{n} |V_u|^3 \middle| \mathcal{F}(\sX_n) \right]
\end{align*}
where $\sigma_n^2 = \Var_{\bm H_0}(R_n \mid \sX_n)$. Using the tower property of conditional expectation, for any $\vep>0$,
\begin{align}
    \sup_{z \in \R}
    \left|
        \P\left(\frac{R_n}{\sigma_n} \le z \right) - \Phi(z)
    \right|
    &\lesssim_{d}
    \E\!\left[
            \frac{K^{20}}{\sigma_n^3}
            \E_{\bm H_0}\!\left[\sum_{u=1}^{n} |V_u|^3 \middle| \mathcal{F}(\sX_n) \right]
    \right] \nonumber\\
    &\le
    \E\!\left[\frac{K^{20}}{\sigma_n^3}
            \E_{\bm H_0}\!\left[\sum_{u=1}^{n} |V_u|^3 \middle| \mathcal{F}(\sX_n) \right]
        \one\{n^{\vep}\sigma_n^2 \ge 1\}
    \right]
    + o(1)\nonumber\\
    &\le
    K^{20} n^{\frac{3\vep-1}{2}}
    \E\!\left[
        \sqrt{n}\,
        \E_{\bm H_0}\!\left[\sum_{u=1}^{n} |V_u|^3 \middle| \mathcal{F}(\sX_n) \right]
    \right]
    + o(1), \label{eq:use_lemma_var_lim}
\end{align}
where the second inequality follows from Lemma~\ref{lemma:varTnlimpos}. To control the RHS of \eqref{eq:use_lemma_var_lim}, by the definition of $V_u$ and H\"{o}lder's inequality,
\begin{align}
    \sqrt{n}\sum_{u=1}^{n} |V_u|^3
    &\le
    \frac{1}{n}\sum_{u=1}^{n}
    \left|
        \E[\bY]^\top(\bY_u - \E[\bY])
    \right|^3
    + \frac{1}{n}\sum_{u=1}^{n}
    \left|
        \frac{1}{K}\sum_{v \in N_{G(\sX_n)}(u)} \bY_u^\top \bY_v
    \right|^3 \nonumber\\
    &\le
    \frac{1}{n}\sum_{u=1}^{n}
    \left|
        \E[\bY]^\top(\bY_u - \E[\bY])
    \right|^3
    + \frac{1}{n K}
    \sum_{u=1}^{n}\sum_{v \in N_{G(\sX_n)}(u)}
    \|\bY_u\|_2^3 \|\bY_v\|_2^3 .\label{eq:V_u_bdd}
\end{align}
Under the moment assumption in Theorem \ref{prop:clt_with_sigman}, a straightforward decomposition the first term shows, 
\begin{align*}
    \E\!\left[
        \E_{\bm H_0}\!\left[
            \frac{1}{n}\sum_{u=1}^{n}
            \left|
                \E[\bY]^\top(\bY_u - \E[\bY])
            \right|^3
            \middle| \mathcal{F}(\sX_n)
        \right]
    \right]
    = O(1).
\end{align*}
For the second term from \eqref{eq:V_u_bdd}, using exchangeability and letting $N(1)$ denote a uniformly chosen neighbor of $1$
in $G(\sX_n)$,
\begin{align*}
    \E\!\left[
        \E_{\bm H_0}\!\left[
            \frac{1}{n K}\sum_{u=1}^{n}\sum_{v \in N_{G(\sX_n)}(u)}
            \|\bY_u\|_2^3 \|\bY_v\|_2^3
            \middle| \mathcal{F}(\sX_n)
        \right]
    \right]
    = \E\!\left[\beta(\bX_1)\beta(\bX_{N(1)})\right],
\end{align*}
where $\beta(\bm x) = \E[\|\bY\|_2^3 \mid \bX=\bm x]$. Now, by the Cauchy-Schwarz inequality and Lemma \ref{lemma:fXN1_bdd_fX1},
\begin{align*}
    \E\!\left[\beta(\bX_1)\beta(\bX_{N(1)})\right]
    \le \E\!\left[\beta(\bX_1)^2\right]
    \le \E\!\left[\|\bY_1\|_2^6\right]
    < \infty .
\end{align*}
Combining the above bounds gives, 
\begin{align*}
    \E\!\left[
        \sqrt{n}\,
        \E_{\bm H_0}\!\left[\sum_{u=1}^{n} |V_u|^3 \middle| \mathcal{F}(\sX_n) \right]
    \right]
    = O(1).
\end{align*}
Therefore, from \eqref{eq:use_lemma_var_lim},
\begin{align*}
    \sup_{z \in \R}
    \left|
        \P\left(\frac{R_n}{\sigma_n} \le z \right) - \Phi(z)
    \right|
    \lesssim_{d}
    K^{20} n^{\frac{3\vep-1}{2}} .
\end{align*}
The proof of Proposition \ref{prop:clt_with_sigman} is now completed by choosing $\vep< \frac{1}{3}$. \hfill $\Box$

\subsection{Proof of Proposition \ref{ppn:varianceestimate}} 
\label{sec:varianceestimatepf}

To prove Proposition \ref{ppn:varianceestimate} we divide the proof into two steps. In the first step we show that the plug in estimate $\hat\sigma_n^2$ is close to the conditional variance $\sigma_n^2$ in $L_2$.

\begin{lemma}\label{lemma:L2_var_est}
    Suppose Assumption \ref{assumption:normX_continuous} holds and let $\E[\|\bY\|_2^{8+\delta}]<\infty$. Then under $H_0$,
    \begin{align*}
        \left|\sigma_{n}^2-\hat\sigma_n^2\right| = o_P\left(n^{-\frac{\delta}{16+2\delta}}\right).
    \end{align*}
\end{lemma}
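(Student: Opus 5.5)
We compare $\hat\sigma_n^2$ and $\sigma_n^2$ term by term. Write $\sigma_n^2 = A_n + B_n$ and $\hat\sigma_n^2 = \hat A_n + \hat B_n$, where $A_n,\hat A_n$ are the edge sums (weights $w_{uv} := \one\{\bX_u\ra\bX_v\}+\one\{\bX_u\leftrightarrow\bX_v\}$) and $B_n,\hat B_n$ are the degree-weighted quadratic forms. In both $\hat A_n$ and $\hat B_n$ we first replace $\bar\bY_n$ by $\E[\bY]$. Set $\tilde A_n := \frac{1}{nK^2}\sum_{u\neq v}(\bZ_u^\top\bZ_v)^2 w_{uv}$ and $\tilde B_n := \frac1n\sum_u(\bar d_u/K-1)^2(\E[\bY]^\top\bZ_u)^2$, with $\bZ_u=\bY_u-\E[\bY]$.

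\textbf{Step 1: centering replacement.} Since $\|\bar\bY_n-\E\bY\|_2=O_P(n^{-1/2})$, expand $(\bY_u-\bar\bY_n)^\top(\bY_v-\bar\bY_n)=\bZ_u^\top\bZ_v - (\bar\bY_n-\E\bY)^\top(\bZ_u+\bZ_v)+\|\bar\bY_n-\E\bY\|_2^2$, and similarly for $\bar\bY^\top(\bY_u-\bar\bY_n)$. The degrees satisfy $\bar d_u\le c_dK$ by \cite[Lemma 1]{jaffe2020randomized}, so the total edge weight is $O(nK)$. Cauchy--Schwarz then bounds each difference by $O_P(n^{-1/2})$ times averages such as $\frac1{nK}\sum w_{uv}\|\bZ_u\|^2\|\bZ_v\|^2$, whose expectations are $O(1)$ by exchangeability and Lemma \ref{lemma:fXN1_bdd_fX1} (as in the proof of Proposition \ref{prop:clt_with_sigman}). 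This gives $|\hat A_n-\tilde A_n|+|\hat B_n-\tilde B_n|=O_P(n^{-1/2})$, which is faster than the claimed rate.

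\textbf{Step 2: conditional centering.} Under $H_0$ we have $\E[\tilde A_n\mid\mathcal F(\sX_n)]=A_n$ and $\E[\tilde B_n\mid\mathcal F(\sX_n)]=B_n$, because $\E[(\E\bY^\top\bZ_u)^2\mid\bX_u]=\E\bY^\top\Var(\bY_u\mid\bX_u)\E\bY$. It therefore suffices to bound conditional fluctuations. Each summand of $\tilde A_n$ is a function of $(\bZ_u,\bZ_v)$ for an edge, and given $\sX_n$ the $\bZ$'s are independent. Two edge terms are therefore correlated only if the edges share a vertex, and by the degree bound each edge shares a vertex with $O(K)$ others.

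The obstacle is that only $8+\delta$ moments are available, while the conditional variance of $(\bZ_u^\top\bZ_v)^2$ needs eighth moments in a way that is not uniformly controlled. To handle this, truncate. Fix $\tau_n=n^{\gamma}$ and split $\bZ_u=\bZ_u\one\{\|\bZ_u\|\le\tau_n\}+\bZ_u\one\{\|\bZ_u\|>\tau_n\}$.

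For the truncated part, the conditional variance of the truncated $\tilde A_n$ is at most $\frac{C_dK}{n^2K^4}\sum_{u\ne v}w_{uv}\E[\|\bZ_u\|^4\|\bZ_v\|^4\one\{\cdot\le\tau_n\}\mid\sX_n]$. Its expectation is $O(1/n)$ by Lemma \ref{lemma:fXN1_bdd_fX1}, since $\E\|\bY\|^8<\infty$.

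For the tail part, use Markov together with $\E[\|\bZ\|^4\one\{\|\bZ\|>\tau_n\}]\le\tau_n^{-(4+\delta)}\E\|\bZ\|^{8+\delta}$ and the analogous bound for the mixed products. The truncation also shifts the conditional means, but only by the same tail-order amount.

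Balancing the tail term against the variance term in $n$, with $\delta$ entering through $\tau_n^{-(4+\delta)}$-type bounds at the $L_1$ level (moment order $8+\delta$ against the fourth-power terms), I expect a rate of the form $n^{-\delta/(16+2\delta)}$ after optimizing $\gamma$. The precise exponent bookkeeping is the main obstacle. $\tilde B_n$ is handled identically, and is simpler since it is a sum of conditionally independent terms with bounded weights $(\bar d_u/K-1)^2\le c_d^2$.

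Combining Steps 1 and 2 through the triangle inequality gives $|\sigma_n^2-\hat\sigma_n^2|=o_P(n^{-\delta/(16+2\delta)})$.
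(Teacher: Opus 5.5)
Your proposal is correct. It follows a different and cleaner route than the paper, and it actually proves more than the lemma states. The truncation step is unnecessary, and so is the ``exponent bookkeeping'' you flag as the main obstacle.

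The paper expands $\sigma_n^2$ and $\hat\sigma_n^2$ into five pieces $Q^{(i)}_n$ and $\hat Q^{(i)}_n$, plus a centering remainder $\tilde Q_n$. For the main edge-sum piece it bounds the \emph{unconditional} variances of both the empirical sum and its conditional-expectation counterpart, which have equal means. It does this with the Efron--Stein bound of Lemma \ref{lemma:VarTn_bd}, which has the form $\frac1n\E[\max_u\|\bY_u\|_2^8]$. Under $8+\delta$ moments this is $o(n^{-\delta/(8+\delta)})$, and taking square roots produces the exponent $\delta/(16+2\delta)$.

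You instead condition on $\mathcal F(\sX_n)$ and use $\E[(\tilde A_n-A_n)^2]=\E[\Var(\tilde A_n\mid\mathcal F(\sX_n))]$. This means you never have to control fluctuations of the graph-dependent quantity $A_n$ itself. Conditional independence of the $\bZ_u$ and the $O(K)$ degree bound then reduce the variance to a sum over $O(nK^2)$ adjacent edge pairs. Your single centering replacement in Step 1 also replaces the paper's bookkeeping with $Q^{(2)},Q^{(3)},Q^{(5)}$ and $\tilde Q_n$. Your treatment of $B_n$ is essentially the paper's argument for $Q^{(4)}_n$.

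What you missed is that your bound for the truncated part holds verbatim without the indicator. Its expectation is $O(1/n)$ using only $\E\|\bY\|_2^8<\infty$:
\begin{itemize}
\item Set $h(\bm x)=\E[\|\bZ\|_2^4\mid\bX=\bm x]$.
\item Write $h(\bX_u)h(\bX_v)\le\frac12(h(\bX_u)^2+h(\bX_v)^2)$.
\item Apply the out-degree $K$ and in-degree $c_dK$ bounds.
\end{itemize}
This gives $\E[\Var(\tilde A_n\mid\mathcal F(\sX_n))]\lesssim_d \E\|\bZ\|_2^8/(nK^2)$, with no uniform control of conditional moments needed.

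So there is nothing to balance: the truncated variance bound does not depend on $\tau_n$ at all. Steps 1 and 2 then give $|\sigma_n^2-\hat\sigma_n^2|=O_P(n^{-1/2})$. This implies the stated $o_P(n^{-\delta/(16+2\delta)})$, since $\delta/(16+2\delta)<1/2$. The exponent in the lemma is an artifact of the paper's max-moment Efron--Stein route, not something an optimized truncation would recover. Drop the truncation rather than tuning $\gamma$ to hit it.
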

The proof of Lemma \ref{lemma:L2_var_est} is postponed to Appendix \ref{appendix:proofof_lemma_L2_var_est}. Lemma \ref{lemma:L2_var_est} shows that $\hat\sigma_n^2$ is indeed an approximation of $\sigma_n^2$ in $L_2$. However to complete the proof of Proposition \ref{ppn:varianceestimate} we have to translate this to an approximation error in terms of the ratio $\hat\sigma_n^2/\sigma_n^2$. To that end in the following lemma (with proof postponed to Appendix \ref{appendix:proofof_lemma_varTnlimpos}) we show that $\sigma_n$ is asymptotically bounded away from $0$.

\begin{lemma}\label{lemma:varTnlimpos}
Suppose $\E\left[\|\bY\|_2^{8+\delta}\right]<\infty$ for some $\delta>0$ and assume that $\bY$ is almost surely not a constant. Then for all $\vep, t>0$,
\begin{align*}
\P\left(n^{\vep}\Var_{\bm{H}_0}\left(R_n\middle| \mathcal{F}(\sX_n)\right)>t\right) \to 1.
\end{align*}
\end{lemma}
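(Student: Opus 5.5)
The plan is to show something stronger than the statement: under $H_0$, $\sigma_n^2 = \Var_{H_0}(R_n\mid\mathcal F(\sX_n))$ is bounded below in probability by a strictly positive constant. Then, for any $\vep,t>0$, $n^{\vep}\sigma_n^2 > t$ holds with probability tending to one.

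\emph{Step 1: reduce to a nearest neighbor average.} Start from the expression in Lemma \ref{lm:varianceRn}. Both terms are nonnegative, so discard the second one and the indicator $\one\{\bX_u\leftrightarrow\bX_v\}$. This gives
\begin{align*}
\sigma_n^2 \ \ge\ \frac{1}{K}\cdot\frac{1}{n}\sum_{u=1}^n\frac{1}{K}\sum_{v\in N_{G(\sX_n)}(u)} \E_{H_0}\big[(\bZ_u^\top\bZ_v)^2\mid \bX_u,\bX_v\big].
\end{align*}
Write $\Sigma(\bx)=\Var(\bY\mid\bX=\bx)$. Under $H_0$ we have $\E[\bZ_u\mid\bX_u]=0$, and $\bZ_u,\bZ_v$ are conditionally independent given $(\bX_u,\bX_v)$. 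Hence
\begin{align*}
\E_{H_0}\big[(\bZ_u^\top\bZ_v)^2\mid\bX_u,\bX_v\big]=\tr(\Sigma(\bX_u)\Sigma(\bX_v))=h(\bX_u)^\top h(\bX_v),
\end{align*}
where $h(\bx)=\mathrm{vec}(\Sigma(\bx))\in\R^{p^2}$. The lower bound is therefore $K^{-1}W_n$, where $W_n$ is exactly the statistic $V_n$ of \eqref{eq:nn_consistency}, except that $\bY_u^\top\bY_v$ is replaced by the deterministic-given-$\sX_n$ quantity $h(\bX_u)^\top h(\bX_v)$.

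\emph{Step 2: consistency of $W_n$.} I will show $W_n\pto \E\|h(\bX)\|_2^2=\E[\tr(\Sigma(\bX)^2)]$ by repeating the two-part argument of Theorem \ref{thm:etan_consistent}.
\begin{itemize}
\item[(i)] For the mean, exchangeability and the Cauchy--Schwarz bound \eqref{eq:unifoneneigh} with $g$ replaced by $h$ give $|\E W_n-\E\|h(\bX_1)\|^2|\le (\E\|h(\bX_1)\|^2)^{1/2}(\E\|h(\bX_{N(1)})-h(\bX_1)\|^2)^{1/2}$. The last factor is $o(1)$ by Lemma \ref{lemma:fXN1_bdd_fX1} together with \cite[Lemma D.3]{deb2020measuring}, provided $\|h(\bX)\|^{2+\delta'}$ is integrable.
\item[(ii)] For the variance, apply Lemma \ref{lemma:VarTn_bd} with $f(\bX_u,\bX_v)=h(\bX_u)^\top h(\bX_v)$, in the same way as in \eqref{eq:bound_var_Vn}. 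This gives $\Var W_n\lesssim_d n^{-1}\E\max_u\|h(\bX_u)\|^4$, which is $o(1)$ by the truncation argument in \eqref{eq:max_Y_4_bd} once $\E\|h(\bX)\|^{4+\delta'}<\infty$.
\end{itemize}
Both integrability conditions follow from Jensen's inequality: $\|\Sigma(\bX)\|_F\lesssim \E[\|\bY\|_2^2\mid\bX]+\|\E\bY\|^2$, so $\E\|h(\bX)\|^{4+\delta/2}\lesssim \E\|\bY\|_2^{8+\delta}+1<\infty$. This is exactly where the $8+\delta$ moment assumption is used.

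\emph{Step 3: positivity of the limit.} Under $H_0$, $\E[\bY\mid\bX]=\E\bY$, so $\E[\tr\Sigma(\bX)]=\E\|\bY-\E\bY\|^2>0$ because $\bY$ is not almost surely constant. Hence $\Sigma(\bX)\neq 0$ with positive probability. Since $\Sigma(\bX)$ is positive semidefinite, $\tr(\Sigma(\bX)^2)\ge 0$ with strict inequality on that event, so $c:=\E\tr(\Sigma(\bX)^2)>0$. Combining the three steps, $\P(\sigma_n^2>c/(2K))\to1$. For any $\vep,t>0$ and $n$ large enough that $n^{\vep}c/(2K)>t$, this gives the claim.

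The main obstacle is Step 2(i). Lemma \ref{lemma:fXN1_bdd_fX1} and \cite[Lemma D.3]{deb2020measuring} are invoked for a generic square-integrable function of $\bX$ (there, $g$). I would check that they apply verbatim to the possibly non-continuous $h$, relying only on uniform integrability and the continuity in Assumption \ref{assumption:normX_continuous}. If they do not, I would approximate $h$ in $L_2(P_{\bX})$ by a continuous compactly supported function and control the error using Lemma \ref{lemma:fXN1_bdd_fX1}.
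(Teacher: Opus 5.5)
Your proposal is correct and follows the paper's proof essentially step for step. The shared steps are:
\begin{itemize}
\item the lower bound by the $\one\{\bX_u\to\bX_v\}$ part of Lemma \ref{lm:varianceRn};
\item the rewriting as $\tr(h(\bX_u)h(\bX_v))$;
\item the variance bound via the Efron--Stein argument of Lemma \ref{lemma:VarTn_bd} with the $8+\delta$ moment;
\item the mean limit $K^{-1}\E\|h(\bX_1)\|_F^2$ via \eqref{eq:unifoneneigh}, Lemma \ref{lemma:fXN1_bdd_fX1} and \cite[Lemma D.3]{deb2020measuring};
\item positivity because $\bY$ is not constant.
\end{itemize}
Like the paper, you actually obtain the stronger conclusion that $\sigma_n^2$ is bounded below in probability by a positive constant. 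Your worry about a non-continuous $h$ is handled exactly as the paper handles the equally non-continuous regression function $g$.
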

Now to complete the proof of Proposition \ref{ppn:varianceestimate} consider $\gamma = \frac{\delta}{32 + 4\delta}$. Then by Lemma \ref{lemma:varTnlimpos} we know that $$\P\left(n^{\gamma}\Var_{\bm{H}_0}\left[R_n\middle| \mathcal{F}(\sX_n)\right] \leq 1\right) = o(1).$$ Hence, for all $\vep>0$,
    \begin{align*}
        \P\left(n^{\gamma}\left|\frac{\hat\sigma_n^2}{\sigma_n^2}-1\right| > \vep\right) 
        & \leq \P\left(n^{2\gamma}\left|\hat{\sigma}_n^2 - \sigma_n^2\right| > \vep\right) + o(1) = o(1).
    \end{align*}

\subsubsection{Proof of Lemma \ref{lemma:L2_var_est}}\label{appendix:proofof_lemma_L2_var_est}
To prove that $\hat\sigma_n^2$ is close to $\sigma_n^2$, we begin by first expanding both $\sigma_n^2$ and $\hat\sigma_n^2$ into five components and then show that the empirical counterpart of each component consistently estimates its population analogue. Combining these bounds yields Lemma \ref{lemma:L2_var_est}. We begin by decomposing $\sigma_n^2$ as follows: 
\begin{align}\label{eq:sigman_decomp}
    \sigma_n^2 =  Q^{(1)}_n  -  Q^{(2)}_n  +  Q^{(3)}_n  +  Q^{(4)}_n  -  Q^{(5)}_n , 
\end{align}
where  
\begin{align*}
     Q^{(1)}_n  & := \frac{1}{n K^2} \sum_{i,j=1}^n \E\Big[\big(\bY_u^\top \bY_v\big)^2 \,\big|\, \bX_u, \bX_v \Big] \big[\one\left\{\bX_u \rightarrow \bX_v\right\} + \one\{\bX_u\leftrightarrow\bX_j\}\big],\\
     Q^{(2)}_n  &:= \frac{1}{n K^2} \sum_{i,j=1}^n \E[\bY]^\top \Big(\E[\bY_u \bY_u^\top \,|\, \bX_u] + \E[\bY_v \bY_v^\top \,|\, \bX_v]\Big) \E[\bY] \big[\one\left\{\bX_u \rightarrow \bX_v\right\} + \one\{\bX_u\leftrightarrow\bX_j\}\big],\\
     Q^{(3)}_n &:=\frac{\|\E[\bY]\|_2^4}{n K^2} \sum_{i,j=1}^n \big[\one\left\{\bX_u \rightarrow \bX_v\right\} + \one\{\bX_u\leftrightarrow\bX_j\}\big], \\   
     Q^{(4)}_n &:=\frac{1}{n} \sum_{i=1}^n \Big(\frac{\bar{d}_i}{k}-1\Big)^2 \E[\bY]^\top \E[\bY_u \bY_u^\top \,|\, \bX_u] \E[\bY] , \\ 
     Q^{(5)}_n  & := \frac{\|\E[\bY]\|_2^4}{n} \sum_{i=1}^n \Big(\frac{\bar{d}_i}{k}-1\Big)^2.
\end{align*}
Next, we expand $\hat\sigma_n^2$ in a similar manner. In particular,
\begin{align}\label{eq:sigma_n_hat_decomp}
    \hat\sigma_{n}^2 := \hat Q^{(1)}_n  - \hat Q^{(2)}_n  + \hat Q^{(3)}_n  + \hat Q^{(4)}_n  - \hat Q^{(5)}_n  + \tilde Q_n,  
\end{align}
where  
\begin{align*}
\hat Q^{(1)}_n &:= \frac{1}{n K^2} \sum_{i,j=1}^n (\bY_u^\top \bY_v)^2 \, [\one\left\{\bX_u \rightarrow \bX_v\right\} + \one\{\bX_u\leftrightarrow\bX_j\}],\\
\hat Q^{(2)}_n &:= \frac{1}{n K^2} \sum_{i,j=1}^n \bar{\bY}^\top (\bY_u \bY_u^\top + \bY_v \bY_v^\top) \bar{\bY} \, [\one\left\{\bX_u \rightarrow \bX_v\right\} + \one\{\bX_u\leftrightarrow\bX_j\}] , \\
\hat Q^{(3)}_n &:= \frac{\|\bar{\bY}\|_2^4}{n K^2} \sum_{i,j=1}^n [\one\left\{\bX_u \rightarrow \bX_v\right\} + \one\{\bX_u\leftrightarrow\bX_j\}], \\ 
\hat Q^{(4)}_n &:= \frac{1}{n} \sum_{i=1}^n \Big(\frac{\bar{d}_i}{k} - 1\Big)^2 \bar{\bY}^\top \bY_u \bY_u^\top \bar{\bY}, \\ 
\hat Q^{(5)}_n &:= \frac{\|\bar{\bY}\|_2^4}{n} \sum_{i=1}^n \Big(\frac{\bar{d}_i}{k} - 1\Big)^2,
\end{align*}
and $\tilde Q_n = \hat\sigma_n^2 - \left(\hat Q^{(1)}_n  - \hat Q^{(2)}_n  + \hat Q^{(3)}_n  + \hat Q^{(4)}_n  - \hat Q^{(5)}_n\right)$. In the following, we first show that $\hat Q^{(i)}_n$ is close to $Q^{(i)}_n$, for $i \in \{1, 4, 5\}$.

\begin{lemma}\label{lemma:tau1}
    Suppose Assumption \ref{assumption:normX_continuous} holds and $\E[\|\bY\|_2^{8+\delta}]<\infty$ for some $\delta>0$. Then under $H_0$,
    \begin{align*}
        \max\left\{\E_{H_0}\left[\left| Q^{(1)}_n  - \hat Q^{(1)}_n \right|\right], \E_{H_0}\left[\left| Q^{(4)}_n  - \hat Q^{(4)}_n \right|\right], \E_{H_0}\left[\left| Q^{(5)}_n  - \hat Q^{(5)}_n \right|\right]\right\} = o\left(n^{-\frac{\delta}{16+2\delta}}\right)
    \end{align*}
\end{lemma}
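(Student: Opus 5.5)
We bound each of the three differences separately, conditioning on $\mathcal F(\sX_n)$ throughout. The key structural fact is the bounded in-degree of the $K$-NN graph: by \cite[Lemma 1]{jaffe2020randomized}, the maximum degree of $\bar G(\sX_n)$ is at most $c_d K$. Hence $\bar d_u\le c_dK$, every summand weight $(\bar d_u/K-1)^2$ is $O_d(1)$, and each vertex appears in $O_d(K)$ edges.

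\emph{The term $Q^{(1)}_n-\hat Q^{(1)}_n$.} Write it as $\frac{1}{nK^2}\sum_{u\neq v} \xi_{uv} w_{uv}$. Here $w_{uv}=\one\{\bX_u\ra\bX_v\}+\one\{\bX_u\leftrightarrow\bX_v\}$ is $\mathcal F(\sX_n)$-measurable, and $\xi_{uv}=(\bY_u^\top\bY_v)^2-\E[(\bY_u^\top\bY_v)^2\mid\bX_u,\bX_v]$. Conditionally on $\sX_n$, the $\bY$'s are independent, so $\E[\xi_{uv}\xi_{u'v'}\mid\sX_n]=0$ whenever $\{u,v\}\cap\{u',v'\}=\emptyset$. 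The $L_1$ norm is bounded by the $L_2$ norm, and the conditional second moment only involves pairs of edges sharing a vertex. There are $O_d(nK^2)$ such pairs. This gives
\[
\E|Q^{(1)}_n-\hat Q^{(1)}_n|^2\lesssim_{d,K} \tfrac1n \E\|\bY\|_2^8 .
\]
To bound $\E[\xi_{uv}\xi_{uv'}]$ we use Cauchy--Schwarz together with Lemma \ref{lemma:fXN1_bdd_fX1}, which lets us replace moments at $\bX_{N(1)}$ by moments at $\bX_1$. Thus the difference is $O(n^{-1/2})$, which is $o(n^{-\delta/(16+2\delta)})$ because $\delta/(16+2\delta)<1/2$.

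If only $8$ moments (rather than $8+\delta$) are to be used carefully, we instead truncate $\|\bY_u\|_2\le n^{\alpha}$. The truncated part is handled by the variance argument above. The tail part is handled through $\E[\|\bY\|^{8}\one\{\|\bY\|>n^\alpha\}]\le n^{-\alpha\delta}\E\|\bY\|^{8+\delta}$, and optimising $\alpha$ produces the stated exponent. We expect this truncation and exponent bookkeeping to be the main (though routine) obstacle.

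\emph{The term $Q^{(4)}_n-\hat Q^{(4)}_n$.} Split it as $A_n+B_n$, where
\[
A_n=\tfrac1n\sum_u c_u\,\E[\bY]^\top(\bY_u\bY_u^\top-\E[\bY_u\bY_u^\top\mid\bX_u])\E[\bY],
\qquad c_u=(\bar d_u/K-1)^2 .
\]
The remainder is
\[
B_n=\tfrac1n\sum_u c_u\,\big(\bar\bY^\top\bY_u\bY_u^\top\bar\bY-\E[\bY]^\top\bY_u\bY_u^\top\E[\bY]\big).
\]
Since $A_n$ is a sum of conditionally independent, mean-zero terms with bounded weights, $\E A_n^2\lesssim n^{-1}\E\|\bY\|^4$. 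For $B_n$ we bound $|B_n|\lesssim \|\bar\bY-\E\bY\|_2(\|\bar\bY\|_2+\|\E\bY\|_2)\cdot\frac1n\sum_u\|\bY_u\|_2^2$. Then Cauchy--Schwarz with $\E\|\bar\bY-\E\bY\|^2=O(1/n)$ and the bounded second moment of $\frac1n\sum\|\bY_u\|^2$ gives $\E|B_n|=O(n^{-1/2})$.

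\emph{The term $Q^{(5)}_n-\hat Q^{(5)}_n$.} This equals $(\|\E\bY\|^4-\|\bar\bY\|^4)\cdot\frac1n\sum_u c_u$. The average $\frac1n\sum_u c_u$ is $O_d(1)$ deterministically. Moreover $\big|\|\E\bY\|^4-\|\bar\bY\|^4\big|\le\|\bar\bY-\E\bY\|\,(\|\bar\bY\|+\|\E\bY\|)^3$, whose expectation is $O(n^{-1/2})$ by Hölder, since $\E\|\bar\bY\|^6<\infty$ under the moment assumption.

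Collecting the three bounds, each expected difference is $O(n^{-1/2})$ (or, after truncation, $o(n^{-\delta/(16+2\delta)})$), which proves the lemma.
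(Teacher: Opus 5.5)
Your proposal is correct. For $Q^{(4)}_n$ and $Q^{(5)}_n$ it follows the paper's argument essentially verbatim. For $Q^{(1)}_n$ it takes a different and sharper route.

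The paper introduces $\tilde Q^{(1)}_n$ and $\hat{\tilde Q}^{(1)}_n$, notes that they have equal means, and bounds each variance separately with the Efron--Stein Lemma~\ref{lemma:VarTn_bd}. This produces $\frac1n\E[\max_{u}\|\bY_u\|_2^8]=o(n^{-\delta/(8+\delta)})$, whose square root is exactly the stated rate. The maximum over $u$ is what forces $8+\delta$ moments, and the paper only sketches the bound on $\Var[\tilde Q^{(1)}_n]$.

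You instead work with the difference itself, conditionally on $\mathcal F(\sX_n)$. Each $\xi_{uv}$ is conditionally centred and the $\bY_u$ are conditionally independent, so the fluctuation of the random graph cancels and only edge pairs sharing a vertex survive. Bounded degree, the bound $\E[\xi_{uv}^2\mid\mathcal F(\sX_n)]\le\psi(\bX_u)\psi(\bX_v)$ with $\psi(\bx):=\E[\|\bY\|_2^4\mid\bX=\bx]$, Cauchy--Schwarz and Lemma~\ref{lemma:fXN1_bdd_fX1} then give $\E|Q^{(1)}_n-\hat Q^{(1)}_n|^2\lesssim_{d,K}\frac1n\E\|\bY\|_2^8$. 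This yields $O(n^{-1/2})$ using only eight moments, and you never need to control $\Var[\tilde Q^{(1)}_n]$. The paper's route buys modularity, since one generic variance lemma is reused throughout, at the price of a weaker rate and a stronger moment condition.

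Two minor remarks. First, your truncation paragraph is superfluous: the direct conditional second-moment bound already uses only $\E\|\bY\|_2^8<\infty$, so no optimisation over $\alpha$ is needed. Second, in bounding $\E|B_n|$, the second Cauchy--Schwarz factor $\E[(\|\bar\bY\|_2+\|\E\bY\|_2)^2(\frac1n\sum_u\|\bY_u\|_2^2)^2]$ is not controlled by the second moment of $\frac1n\sum_u\|\bY_u\|_2^2$ alone. Using $\|\bar\bY\|_2^2\le\frac1n\sum_u\|\bY_u\|_2^2$ and Jensen bounds it in terms of $\E\|\bY\|_2^6$, as the paper does.
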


\begin{proof}[Proof of Lemma \ref{lemma:tau1}]
We begin by showing that
\begin{align}\label{eq:var_term_1_convg}
    \E_{H_0}\left[\left| Q^{(1)}_n  - \hat Q^{(1)}_n \right|\right] = o\left(n^{-\frac{\delta}{16+2\delta}}\right)  .  
\end{align}
To that end notice that it is enough to prove the error bound,
    \begin{align}\label{eq:var_term-1}
        \E\left[\left(\frac{1}{nK^2}\sum_{1 \leq u \ne v \leq n}\left(\E\left[\left(\bY_u^\top \bY_v\right)^2\middle|\bX_u, \bX_v\right] - \left(\bY_u^\top \bY_v\right)^2\right)\one\left\{\bX_u \rightarrow \bX_v\right\}\right)^2\right] = o\left(n^{-\frac{\delta}{8+\delta}}\right)  .
    \end{align}
    Note that proof for $\one\left\{\bX_u \rightarrow \bX_v\right\}$ replaced by $\one\{\bX_u\leftrightarrow\bX_v\}$ is exactly similar and, hence, is omitted. Towards proving \eqref{eq:var_term-1} define,
    \begin{align*}
        \tilde Q^{(1)}_n  
        & = \frac{1}{nK^2}\sum_{1 \leq u \ne v \leq n}\E\left[(\bY_u^\top \bY_v)^2\middle|\bX_u, \bX_v\right]\one\left\{\bX_u \rightarrow \bX_v\right\},
    \end{align*}
    and
    \begin{align*}
        \hat{\tilde Q}_{n}^{(1)} 
        & = \frac{1}{nK^2}\sum_{1 \leq u \ne v \leq n}(\bY_u^\top \bY_v)^2\one\left\{\bX_u \rightarrow \bX_v\right\}.
    \end{align*}
    By definition it is easy to note that $\E[\tilde Q^{(1)}_n ]=\E[\hat{\tilde Q}_{n}^{(1)}]$. Hence, to show \eqref{eq:var_term-1} it is enough to show $\Var[\tilde Q^{(1)}_n ] = o(n^{-\frac{\delta}{8+\delta}})\text{ and }\Var[\hat{\tilde Q}_{n}^{(1)} ] = o(n^{-\frac{\delta}{8+\delta}})$. \textcolor{black}{Next, applying Lemma \ref{lemma:VarTn_bd} with $f(\bY_u, \bY_v) = (\bY_u^\top\bY_v)^2$ gives,} 
        \begin{align*}
        \Var[\hat{\tilde Q}_{n}^{(1)}]\lesssim_{d}\frac{1}{n}\E\left[\max_{1\leq u\neq v\leq n}\left|\bY_u^\top \bY_v\right|^4 \right]\leq \frac{1}{n}\E\left[\max_{1\leq u\leq n}\left\|\bY_u\right\|_2^8 \right].
    \end{align*}
    Now the arguments from \eqref{eq:max_Y_4_bd} can be easily adapted to show $ \Var[\hat{\tilde Q}_{n}^{(1)}] = o(n^{-\frac{\delta}{8+\delta}})$. 
    \textcolor{black}{Combinatorial arguments similar to Lemma \ref{lemma:VarTn_bd} along with the Efron-Stein inequality \cite{efron1981jackknife}} show that $\Var[\tilde Q^{(1)}_n] = o(n^{-\frac{\delta}{8+\delta}})$, completing the proof of \eqref{eq:var_term_1_convg}. 
    
Next, we show that,
\begin{align}\label{eq:var_term_4_convg}
    \E_{H_0}\left[\left|Q_n^{(4)} - \hat Q_n^{(4)}\right|\right] = O\left(1/\sqrt{n}\right).
\end{align}
First, we apply triangle inequality to get, 
\begin{align}\label{eq:term2_traingle_inequality}
    \E_{H_0}\left[\left|Q_n^{(4)} - \hat Q_n^{(4)}\right|\right]
    & \lesssim \E\Bigg[\Bigg|\frac{1}{n}\sum_{u=1}^{n} \Big(\frac{\bar{d}_u}{K}-1\Big)^2 
    \E[\bY]^\top \Big(\E[\bY_u\bY_u^\top \mid \bX_u] - \bY_u\bY_u^\top \Big) \E[\bY] \Bigg|\Bigg]\nonumber\\
    & + \E\Bigg[\Bigg|\frac{1}{n}\sum_{u=1}^{n} \Big(\frac{\bar{d}_u}{K}-1\Big)^2 
    \Big(\E[\bY]^\top \bY_u\bY_u^\top \E[\bY] - \bar{\bY}^\top \bY_u\bY_u^\top \bar{\bY}\Big) \Bigg|\Bigg] . 
\end{align}
We begin by showing the following,
\begin{align}\label{eq:tau4_1}
    \E\Bigg[\Bigg|\frac{1}{n}\sum_{u=1}^{n} \Big(\frac{\bar{d}_u}{K}-1\Big)^2 
    \E[\bY]^\top \Big(\E[\bY_u\bY_u^\top \mid \bX_u] - \bY_u\bY_u^\top \Big) \E[\bY] \Bigg|\Bigg] = O\left(\frac{1}{\sqrt{n}} \right)  , 
\end{align}
which shows that one can replace the conditional expectation $\E\left[\bY_u\bY_u^\top\mid\bX_u\right]$ in $ Q^{(4)}_n $ with  $\bY_u\bY_u^\top$ upto negligible error. Towards that, define $a_u = \E[\bY]^\top \bY_u$ for all $1 \le u \le n$. Then,
\begin{align*}
    &\E\Bigg[\Bigg|\frac{1}{n}\sum_{u=1}^{n} \Big(\frac{\bar{d}_u}{K}-1\Big)^2 
    \E[\bY]^\top \Big(\E[\bY_u\bY_u^\top \mid \bX_u] - \bY_u\bY_u^\top \Big) \E[\bY] \Bigg|^2\Bigg] \\
    & = \E\Bigg[\Bigg(\frac{1}{n}\sum_{u=1}^{n} \Big(\frac{\bar{d}_u}{K}-1\Big)^2 
    \big(a_u^2 - \E[a_u^2 \mid \bX_u]\big) \Bigg)^2\Bigg] 
    = \frac{1}{n^2} \sum_{u=1}^{n} \E\Big[\Big(\frac{\bar{d}_u}{K}-1\Big)^4 
    \big(a_u^2 - \E[a_u^2 \mid \bX_u]\big)^2\Big]  ,  
\end{align*}
where the last equality follows from the independence of $(a_u, \bX_u), 1 \le i \le n$. From \cite{jaffe2020randomized} we know that $\bar{d}_u\lesssim_d K$ for all $1\leq u\leq n$. Hence, recalling the moment assumptions we get,
\begin{align}\label{eq:tau4_1_sq_bd}
    \E\Bigg[\Bigg|\frac{1}{n}\sum_{u=1}^{n} \Big(\frac{\bar{d}_u}{K}-1\Big)^2 
    \E[\bY]^\top \Big(\E[\bY_u\bY_u^\top \mid \bX_u] - \bY_u\bY_u^\top \Big) \E[\bY] \Bigg|^2\Bigg] = O\left(\frac{1}{n} \right).
\end{align}
Next, we show
\begin{align}\label{eq:tau4_2}
    \E\Bigg[\Bigg|\frac{1}{n}\sum_{u=1}^{n} \Big(\frac{\bar{d}_u}{K}-1\Big)^2 
    \Big(\E[\bY]^\top \bY_u\bY_u^\top \E[\bY] - \bar{\bY}^\top \bY_u\bY_u^\top \bar{\bY}\Big) \Bigg|\Bigg] = O\left(\frac{1}{\sqrt{n}} \right) , 
\end{align}
which shows that we can replace the sample average $\bar{\bY}$ in $\hat Q_n^{(4)}$ with the population mean $\E[\bY]$ upto negligible error. Notice that the proof of \eqref{eq:var_term_4_convg} is completed by combining \eqref{eq:tau4_1} and \eqref{eq:tau4_2} with the triangle inequality. Now, to prove \eqref{eq:tau4_2}, by the triangle inequality, it is enough to show, 
\begin{align*}
    \E\Bigg[\Bigg|\frac{1}{n}\sum_{u=1}^{n} \Big(\frac{\bar{d}_u}{K}-1\Big)^2 
    (\E[\bY] - \bar{\bY})^\top \bY_u\bY_u^\top \E[\bY] \Bigg|\Bigg] = O(1/\sqrt{n}) 
\end{align*}
and
\begin{align*}
    \E\Bigg[\Bigg|\frac{1}{n}\sum_{u=1}^{n} \Big(\frac{\bar{d}_u}{K}-1\Big)^2 
    (\E[\bY] - \bar{\bY})^\top \bY_u\bY_u^\top \bar{\bY} \Bigg|\Bigg] = O(1/\sqrt{n}).
\end{align*}
By Cauchy-Schwartz inequality and bounds on $\bar{d}_u$ from \cite{jaffe2020randomized} gives,
\begin{align}\label{eq:triangle_2_Qn4_1}
    \E\Bigg[\Bigg|\frac{1}{n}\sum_{u=1}^{n} \Big(\frac{\bar{d}_u}{K}-1\Big)^2 
    &(\E[\bY] - \bar{\bY})^\top \bY_u\bY_u^\top \bar{\bY} \Bigg|\Bigg]\\
    &\le \sqrt{\E\left[\|\bar{\bY} - \E[\bY]\|_2^2\right]} 
    \sqrt{\E\left[\left\|\frac{\bar{\bY}^\top}{n} \sum_{u=1}^{n} \left(\frac{\bar{d}_u}{K}-1\right)^2 \bY_u\bY_u^\top \right\|_2^2\right]} \nonumber\\
    &\lesssim \sqrt{\E\left[\|\bar{\bY} - \E[\bY]\|_2^2\right]} 
    \sqrt{\E\left\|\bar{\bY}\|_2^2 \left(\frac{1}{n}\sum_{u=1}^{n} \|\bY_u\|_2^2\right)^2\right]} \nonumber\\
    &\le \sqrt{\E\left[\|\bar{\bY} - \E[\bY]\|_2^2\right]} \sqrt{\E\left[\left(\frac{1}{n}\sum_{u=1}^{n} \|\bY_u\|_2^2\right)^3\right]} = O\left(\frac{1}{\sqrt{n}} \right)  ,  \nonumber  \\  
\end{align}
where the last step uses H\"{o}lder's inequality and the moment assumption. Similarly, we can show
\begin{align}\label{eq:triangle_2_Qn4_2}
    \E\Bigg[\Bigg|\frac{1}{n}\sum_{u=1}^{n} \Big(\frac{\bar{d}_u}{K}-1\Big)^2 
    (\E[\bY] - \bar{\bY})^\top \bY_u\bY_u^\top \E[\bY] \Bigg|\Bigg] = O\left(\frac{1}{\sqrt{n}} \right)  .  
\end{align}
The proof of \eqref{eq:var_term_4_convg} is now completed by combining the upper bound from \eqref{eq:term2_traingle_inequality} with the bounds from \eqref{eq:tau4_1_sq_bd}, \eqref{eq:triangle_2_Qn4_1} and \eqref{eq:triangle_2_Qn4_2}. 

Finally, we show that 
\begin{align}\label{eq:Qn5_bd_needed}
    \E\left[\left|Q_n^{(5)} - \hat Q_n^{(5)}\right|\right] = O\left( \frac{1}{n} \right)  .  
\end{align}
To that end by definition,
\begin{align}\label{eq:Qn5_diff_equal}
    \E\left[\left|Q_n^{(5)} - \hat Q_n^{(5)}\right|\right] = \E\Bigg[\Bigg|\Big(\|\E[\bY]\|_2^4 - \|\bar{\bY}\|_2^4\Big)\frac{1}{n}\sum_{u=1}^{n}\Big(\frac{\bar{d}_u}{k} - 1\Big)^2\Bigg|\Bigg].
\end{align}
Notice that,
\begin{align}
    &\E\Bigg[\Bigg|\Big(\|\E[\bY]\|_2^4 - \|\bar{\bY}\|_2^4\Big)\frac{1}{n}\sum_{u=1}^{n}\Big(\frac{\bar{d}_u}{k} - 1\Big)^2\Bigg|\Bigg] 
    \lesssim \E\Big[|\|\bar{\bY}\|_2^4 - \|\E[\bY]\|_2^4|\Big]\nonumber\\
    &\lesssim\E\left[\bigg|\|\bar \bY\| - \|\E[\bY]\|\bigg|\bigg|\|\bar\bY\|_2^3 + \|\bar \bY\|_2^2\|\E[\bY]\|_2 + \|\bar \bY\|_2\|\E[\bY]\|_2^2 + \|\E[\bY]\|_2^3\bigg|\right]\nonumber\\
    &\lesssim\sqrt{\E\left[\bigg|\|\bar \bY\| - \|\E[\bY]\|\bigg|^2\right]}\nonumber\\
    &\leq\sqrt{\E\left[\left\|\bar \bY - \E[\bY]\right\|_2^2\right]}\label{eq:Qn5_diff_bd_1}
\end{align}
where the first inequality follows by recalling that $\bar{d}_u \lesssim_d K$, for all $1\leq u\leq n$ \citep{jaffe2020randomized} and the penultimate inequality follows by Cauchy-Schwarz inequality and the moment assumption. Recall that $\bZ_u = \bY_u - \E[\bY]$, for all $1 \le u \le n$. Then notice that
\begin{align}\label{eq:bddfourthmoment}
    \E\left[\|\bar{\bZ}\|_2^2\right] = \E\left[\left\langle \frac{1}{n} \sum_{u=1}^{n} \bZ_i, \frac{1}{n} \sum_{u=1}^{n} \bZ_i \right\rangle \right] = \E\left[\frac{1}{n^2} \sum_{1 \leq u, v \leq n} \bZ_u^\top \bZ_v \right] = O(1/n),
\end{align}
where the last equality follows from the definition of $\bZ_u$ and the moment assumptions. The proof of \eqref{eq:Qn5_bd_needed} is now completed by combining the identity from \eqref{eq:Qn5_diff_equal} with the bounds from \eqref{eq:Qn5_diff_bd_1} and \eqref{eq:bddfourthmoment}.
\end{proof}

Finally, using arguments similar to proofs of \eqref{eq:var_term_4_convg} and \eqref{eq:Qn5_bd_needed} one can show that under Assumption \ref{assumption:normX_continuous} and the assumptions of Proposition \ref{ppn:varianceestimate} 
\begin{align}\label{eq:Qn23_order}
    \E_{H_0}\left[\left| Q^{(2)}_n  - \hat Q^{(2)}_n \right|\right] = O\left(n^{-\frac{1}{2}}\right) \quad \text{ and } \quad \E_{H_0}\left[\left| Q^{(3)}_n  - \hat Q^{(3)}_n \right|\right] = O\left(n^{-\frac{1}{2}}\right),
\end{align}
respectively. Recalling the decomposition in \eqref{eq:sigma_n_hat_decomp}, it now suffices to show that
\[
\left|\tilde Q_n\right|
= o_P\left(n^{-\frac{\delta}{16+2\delta}}\right)
\]
to complete the proof of Lemma \ref{lemma:L2_var_est}. We show this in the following result.

\begin{lemma}\label{lemma:tildeQn_o1}
    Suppose Assumption \ref{assumption:normX_continuous} holds and $\E[\|\bY\|_2^{8+\delta}]<\infty$ for some $\delta>0$. Then under $H_0$ from \eqref{eq:H0}, $|\tilde Q_n| = o_P(n^{-\frac{\delta}{16+2\delta}})$.
\end{lemma}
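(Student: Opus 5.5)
The plan is to make $\tilde Q_n$ explicit and show that it is, up to negligible errors, a finite linear combination of $K$-NN graph averages. Each average has conditional mean (given $\mathcal F(\sX_n)$, under $H_0$) that either vanishes or cancels exactly against another term.

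\textbf{Step 1: expand $\hat\sigma_n^2$.} Expand the two summands of $\hat\sigma_n^2$ in \eqref{eq:varianceH0}. For the first,
\begin{align*}
\big((\bY_u-\bar\bY_n)^\top(\bY_v-\bar\bY_n)\big)^2=\big(\bY_u^\top\bY_v-\bar\bY_n^\top\bY_u-\bar\bY_n^\top\bY_v+\|\bar\bY_n\|_2^2\big)^2 .
\end{align*}
For the second,
\begin{align*}
\big(\bar\bY_n^\top(\bY_u-\bar\bY_n)\big)^2=(\bar\bY_n^\top\bY_u)^2-2\|\bar\bY_n\|_2^2\,\bar\bY_n^\top\bY_u+\|\bar\bY_n\|_2^4 .
\end{align*}
Subtracting $\hat Q^{(1)}_n-\hat Q^{(2)}_n+\hat Q^{(3)}_n+\hat Q^{(4)}_n-\hat Q^{(5)}_n$ leaves $\tilde Q_n$ as a sum of boundedly many terms of the form
\begin{align*}
\frac{1}{nK^2}\sum_{u\ne v} w_{uv}\,\Psi(\bY_u,\bY_v,\bar\bY_n)\quad\text{or}\quad \frac1n\sum_u\Big(\frac{\bar d_u}{K}-1\Big)^2\Psi(\bY_u,\bar\bY_n),
\end{align*}
where $w_{uv}=\one\{\bX_u\to\bX_v\}+\one\{\bX_u\leftrightarrow\bX_v\}$. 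The monomials $\Psi$ have total degree at most $4$ in the $\bY$'s. Typical examples are $\bY_u^\top\bY_v\,\bar\bY_n^\top\bY_u$, $\bar\bY_n^\top\bY_u\,\bar\bY_n^\top\bY_v$, $\|\bar\bY_n\|_2^2\,\bY_u^\top\bY_v$, $\|\bar\bY_n\|_2^2\,\bar\bY_n^\top\bY_u$, together with the leftover "squared" pieces $(\bar\bY_n^\top\bY_u)^2$ and $\|\bar\bY_n\|^4$, which appear with corrected signs.

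\textbf{Step 2: replace $\bar\bY_n$ by $\E[\bY]$.} In every term, replace $\bar\bY_n$ by $\E[\bY]$. Exactly as in \eqref{eq:triangle_2_Qn4_1} and \eqref{eq:Qn5_diff_bd_1}, I would telescope the difference and use Cauchy--Schwarz, the degree bound $\bar d_u\lesssim_d K$ from \cite{jaffe2020randomized}, the moment assumption, and $\E\|\bar\bY_n-\E\bY\|_2^2=O(1/n)$ from \eqref{eq:bddfourthmoment}. This shows the replacement costs $O_{L_1}(n^{-1/2})$, which is $o(n^{-\delta/(16+2\delta)})$.

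\textbf{Step 3: conditional means and variances.} For the resulting statistics with $\E[\bY]$ in place of $\bar\bY_n$, I would compute conditional means given $\mathcal F(\sX_n)$ under $H_0$, using $\E[\bY_u\mid\bX_u]=\E[\bY]$ and independence across $u\ne v$. For instance:
\begin{itemize}
\item $\E[\bY_u^\top\bY_v\,\E[\bY]^\top\bY_u\mid\bX_u,\bX_v]=\E[\bY]^\top\E[\bY_u\bY_u^\top\mid\bX_u]\E[\bY]$;
\item $\E[\E[\bY]^\top\bY_u\,\E[\bY]^\top\bY_v\mid\cdot\,]=\|\E\bY\|_2^4$;
\item $\E[\|\E\bY\|^2\bY_u^\top\bY_v\mid\cdot\,]=\|\E\bY\|_2^4$.
\end{itemize}
Summing these with their signs should reproduce precisely the algebraic gap between $\E[(\bZ_u^\top\bZ_v)^2\mid\bX_u,\bX_v]$ (respectively $\E[\bY]^\top\Var(\bY_u\mid\bX_u)\E[\bY]$) and the combination $Q^{(1)}_n-Q^{(2)}_n+Q^{(3)}_n+Q^{(4)}_n-Q^{(5)}_n$. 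By \eqref{eq:sigman_decomp} and Lemma \ref{lm:varianceRn}, the conditional mean of the $\E[\bY]$-version of $\tilde Q_n$ is therefore identically zero.

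It remains to show that each centered term is small. Each term is a graph-weighted average of a fixed function $f(\bY_u,\bY_v)$ or $f(\bY_u)$, so Lemma \ref{lemma:VarTn_bd} bounds its variance by $n^{-1}\E\max_u\|\bY_u\|_2^8$. The truncation argument of \eqref{eq:max_Y_4_bd} with exponent $8+\delta$ makes this $o(n^{-\delta/(8+\delta)})$. The fluctuation of the conditional means is handled as in the $\Var[\tilde Q^{(1)}_n]$ step, via Efron--Stein. Chebyshev then gives $|\tilde Q_n|=o_P(n^{-\delta/(16+2\delta)})$.

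\textbf{Main obstacle.} I expect the bookkeeping in Step 3 to be the hard part: verifying that the cross terms' conditional expectations cancel exactly, so that no $O(1)$ bias remains. In particular, the $(\E[\bY]^\top\bY_u)^2$ pieces arising from the first summand must recombine correctly with $\hat Q^{(2)}_n$ and $\hat Q^{(4)}_n$. I would first check this cancellation in the scalar case $p=1$ before writing the general computation.
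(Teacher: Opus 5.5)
Your proposal is correct and follows essentially the same route as the paper. In both, you replace $\bar\bY_n$ by $\E[\bY]$ at cost $O_P(n^{-1/2})$, then check that the resulting summands (the paper's $T^{(1)}_{\bm\mu}(\bY_u,\bY_v)$ and $T^{(2)}_{\bm\mu}(\bY_u)$) have zero conditional mean given $\mathcal F(\sX_n)$ under $H_0$; your derivation of this via \eqref{eq:sigman_decomp} is equivalent to the paper's direct check. Both then conclude with the Efron--Stein bound of Lemma~\ref{lemma:VarTn_bd}, the truncation bound on $\E[\max_u\|\bY_u\|_2^8]$, and Chebyshev. One small point: the single-index term weighted by $(\bar d_u/K-1)^2$ is not literally of the form covered by Lemma~\ref{lemma:VarTn_bd}. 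Once you group it so that its conditional mean vanishes, conditional independence of the $\bY_u$ given $\sX_n$ gives variance $O(1/n)$ directly, which is how the paper handles it.
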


\begin{proof}[Proof of Lemma \ref{lemma:tildeQn_o1}]
To begin with, recall that by the CLT $\bar\bY_n = \E[\bY] + o_p(1/\sqrt{n})$. Hence, by a direct substitution along with moment bound $\E[\|\bY\|_2^{8+\delta}]<\infty$ and the bounds on in and out degrees of $G(\sX_n)$ from \cite{jaffe2020randomized} we can show,
\begin{align*}
    |\tilde Q_{n} - \tilde Q_{n}(\bm \mu)| = O_p(1/\sqrt{n})
\end{align*}
where $Q_{n}(\bm \mu)$ is defined analogous to $\tilde Q_n$ with $\bar\bY_n$ replaced by $\bm \mu = \E[\bY]$. To complete the proof it is now enough to show that $|\tilde Q_{n}(\bm \mu)| = o_P(n^{-\frac{2+\delta}{16+2\delta}})$. To that end note that we can write $\tilde Q_{n}(\bm \mu)$ as,
\begin{align*}
    \tilde Q_{n}(\bm \mu) = \frac{1}{nK^2}
    & \sum_{1 \leq u, v \leq n}T_{\bm\mu}^{(1)}(\bY_u, \bY_v)[\one\left\{\bX_u \rightarrow \bX_v\right\} + \one\{\bX_u\leftrightarrow\bX_j\}]\\
    & + \frac{1}{n}\sum_{u=1}^{n}\left(\frac{\bar d_u}{K}-1\right)^2T_{\bm\mu}^{(2)}(\bY_u) , 
\end{align*}
where 
\begin{align*}
    T_{\bm\mu}^{(1)}(\bY_u, \bY_v) = ((\bY_u - \bm\mu)^\top(\bY_v - \bm\mu))^2 - (\bY_u^\top\bY_v)^2 + \bm\mu^\top(\bY_u\bY_u^\top + \bY_v\bY_v^\top)\bm\mu - \|\bm\mu\|_2^4
\end{align*}
and,
\begin{align*}
    T_{\bm\mu}^{(2)}(\bY_u) = \bm\mu^\top(\bY_u - \bm\mu)(\bY_u-\bm\mu)^\top\bm\mu - \bm\mu^\top\bY_u\bY_u^\top\bm\mu + \|\bm\mu\|_2^4.
\end{align*}
Now for $u\neq v$, under $H_0$ note that $\E_{H_0}[T_{\bm\mu}^{(1)}(\bY_u, \bY_v)\mid \bX_u, \bX_v ] = 0$. \textcolor{black}{Moreover, by a direct computation we can easily verify that under the moment assumptions $\E[(T_{\bm\mu}^{(1)}(\bY_u,\bY_v))^2]<\infty$. Hence, we can now apply Lemma \ref{lemma:VarTn_bd} to conclude,}
\begin{align*}
    \Var\left[\frac{1}{nK^2} \sum_{1 \leq u, v \leq n}T_{\bm\mu}^{(1)}(\bY_u, \bY_v)\one\left\{\bX_u \rightarrow \bX_v\right\}\right] & \lesssim\frac{1}{n}\E\left[\max_{1\leq u\neq v\leq n}T_{\bm\mu}^{(1)}(\bY_u, \bY_v)^2\right]  \nonumber \\  
    & \lesssim\frac{1}{n} + \frac{1}{n}\E\left[\max_{u=1}^{n}\|\bY_u\|_2^6\right].
\end{align*}
Following the arguments from \eqref{eq:max_Y_4_bd} we can conclude that,
\begin{align*}
    \Var\left[\frac{1}{nK^2}\sum_{1 \leq u, v \leq n}T_{\bm\mu}^{(1)}(\bY_u, \bY_v)\one\left\{\bX_u \rightarrow \bX_v\right\}\right]\lesssim n^{-\frac{\delta}{6+\delta}}.
\end{align*}
Similarly we can show,
\begin{align*}
    \Var\left[\frac{1}{nK^2}\sum_{1 \leq u, v \leq n}T_{\bm\mu}^{(1)}(\bY_u, \bY_v)\one\{\bX_u\leftrightarrow\bX_j\}\right]\lesssim n^{-\frac{\delta}{6+\delta}}.
\end{align*}
Hence by Markov inequality,
\begin{align}\label{eq:tildeQ1_1st}
    \frac{1}{nK^2}
    & \sum_{1 \leq u, v \leq n}T_{\bm\mu}^{(1)}(\bY_u, \bY_v)[\one\left\{\bX_u \rightarrow \bX_v\right\} + \one\{\bX_u\leftrightarrow\bX_j\}] = o_P\left(n^{-\frac{\delta}{12+2\delta}}\right).
\end{align}
Next, to complete the proof of Lemma \ref{lemma:tildeQn_o1} note that $\E_{H_0}[T_{\bm\mu}^{(2)}(\bY_u)\mid\bX_u] = 0$. Hence by law of total variance,
\begin{align*}
    \Var\left[\frac{1}{n}\sum_{u=1}^{n}\left(\frac{\bar d_u}{K}-1\right)^2T_{\bm\mu}^{(2)}(\bY_u)\right] 
    & = \E\left[\Var\left[\frac{1}{n}\sum_{u=1}^{n}\left(\frac{\bar d_u}{K}-1\right)^2T_{\bm\mu}^{(2)}(\bY_u)\mid \sX_n\right]\right]\\
    & = \E\left[\frac{1}{n^2}\sum_{u=1}^{n}\left(\frac{\bar d_u}{K}-1\right)^4\Var\left[T_{\bm\mu}^{(2)}(\bY_u)\mid \bX_u\right]\right]  ,  
\end{align*}
where the last equality follows by independence of $\{(\bY_i, \bX_i): 1\leq i\leq n\}$. Using the bound $\bar d_u\lesssim K$ for all $1\leq u\leq n$ from \cite{jaffe2020randomized} we get,
\begin{align*}
    \Var\left[\frac{1}{n}\sum_{u=1}^{n}\left(\frac{\bar d_u}{K}-1\right)^2T_{\bm\mu}^{(2)}(\bY_u)\right] 
    &\lesssim \frac{1}{n^2}\sum_{u=1}^{n}\E\left[\Var\left[T_{\bm\mu}^{(2)}(\bY_u)\mid \bX_u\right]\right] = \frac{1}{n}\Var\left[T_{\bm\mu}^{(2)}(\bY_1)\right]\lesssim \frac{1}{n}  ,  
\end{align*}
where the last equality once again follows by recalling $\E_{H_0}[T_{\bm\mu}^{(2)}(\bY_u)\mid\bX_u] = 0$ and the last inequality follows by recalling the moment condition on $\bY_1$. Once again applying Markov inequality,
\begin{align}\label{eq:tildeQ1_2nd}
    \frac{1}{n}\sum_{u=1}^{n}\left(\frac{\bar d_u}{K}-1\right)^2T_{\bm\mu}^{(2)}(\bY_u) = o_P(1/\sqrt{n}).
\end{align}
The proof is now completed by combining \eqref{eq:tildeQ1_1st} and \eqref{eq:tildeQ1_2nd}.
\end{proof}

The proof of Lemma \ref{lemma:L2_var_est} is now completed by recalling the decompositions of $\sigma_n^2$ and $\hat\sigma_n^2$ from \eqref{eq:sigman_decomp} and \eqref{eq:sigma_n_hat_decomp} along with the convergences from Lemma \ref{lemma:tau1}, \eqref{eq:Qn23_order} and Lemma \ref{lemma:tildeQn_o1}.\hfill $\Box$

\subsubsection{Proof of Lemma \ref{lemma:varTnlimpos}}\label{appendix:proofof_lemma_varTnlimpos}

Recalling the definition of $\sigma_n = \Var_{\bm{H}_0}(R_n| \mathcal{F}(\sX_n))$ from Lemma \ref{lm:varianceRn}, we have
\begin{align}\label{eq:var_tn_xn_lb}
    \sigma_n^2 \geq \frac{1}{n K^2}\sum_{1 \leq u \ne v \leq n}\E\left[\left(\bZ_u^\top \bZ_v\right)^2\middle| \bX_u, \bX_v\right]\one\left\{\bX_u \rightarrow \bX_v\right\},
\end{align}
where $\bZ_i = \bY_u - \E[\bY]$ for all $1\le u \le n$. Notice that
\begin{align}\label{eq:var_term_1_trace}
    & \frac{1}{n K^2}\sum_{1 \leq u \ne v \leq n}\E\left[\left(\bZ_u^\top \bZ_v\right)^2\middle|\bX_u, \bX_v\right]\one\left\{\bX_u \rightarrow \bX_v\right\} \nonumber \\ 
    &= \frac{1}{n K^2}\sum_{1 \leq u \ne v \leq n}\E\left[\tr\left(\bZ_u\bZ_u^\top \bZ_v\bZ_v^\top\right)\middle|\bX_u,\bX_v\right]\one\left\{\bX_u \rightarrow \bX_v\right\} \nonumber\\
    &= \frac{1}{n K^2}\sum_{1 \leq u \ne v \leq n}\tr\left(\E\left[\bZ_u\bZ_u^\top\middle|\bX_u\right]\E\left[\bZ_v\bZ_v^\top\middle|\bX_v\right]\right)\one\left\{\bX_u \rightarrow \bX_v\right\}.
\end{align} 
For notational convenience, define $h(\bX_u) = \E\left[\bZ_u\bZ_u^\top \middle| \bX_u\right]$.  
\textcolor{black}{By the Efron Stein inequality \citep{efron1981jackknife} and following the combinatorial arguments from Lemma \ref{lemma:VarTn_bd},} we get
\begin{align*}
    \Var\left[\frac{1}{n K^2}\sum_{1 \leq u \ne v \leq n}\E\left[\left(\bZ_u^\top \bZ_v\right)^2\middle|\bX_u, \bX_v\right]\one\left\{\bX_u \rightarrow \bX_v\right\}\right]
    &\lesssim_{d} \frac{1}{n K^2}\E\left[\max_{1\le u \neq v \le n} \left|\tr\left(h(\bX_u)h(\bX_v)\right)\right|^2\right] \\
    &\le \frac{1}{n K^2}\E\left[\max_{1\le u \le n} \|h(\bX_u)\|_F^4\right].
\end{align*}
where $\|\cdot\|_F$ is the Frobenius norm. By the tower property of conditional expectation,
\begin{align*}
    \E\left[\max_{1 \leq u \leq n}\|h(\bX_u)\|_F^4\right] \le \E\left[\max_{1 \leq u \leq n}\E\left[\|\bZ_i\|_2^4\middle|\bX_u\right]^2\right] 
    \le \E\left[\max_{1 \leq u \leq n}\|\bZ_i\|_2^8\right] \lesssim \E\left[\max_{1 \leq u \leq n}\|\bY_u\|_2^8\right].
\end{align*}
Repeating arguments similar to \eqref{eq:max_Y_4_bd} and using the moment assumption, for all $\vep>0$,
\begin{align*}
    \frac{1}{n K^2}\E\left[\max_{1 \leq u \leq n}\|\bY_u\|_2^8\right] = o\left(n^{-\frac{\delta}{8+\delta}}\right),
\end{align*}
which gives
\begin{align}\label{eq:var_sigman_t1_convg}
    \Var\left[\frac{1}{n K^2}\sum_{1 \leq u \ne v \leq n}\E\left[\left(\bZ_u^\top \bZ_v\right)^2\middle|\bX_u, \bX_v\right]\one\left\{\bX_u \rightarrow \bX_v\right\}\right] = o\left(n^{-\frac{\delta}{8+\delta}}\right).
\end{align}
Now to complete the proof, it is enough to show that there exists $c>0$ such that
\begin{align*}
    \E\left[\frac{1}{n K^2}\sum_{1 \leq u \ne v \leq n}\E\left[\left(\bZ_u^\top \bZ_v\right)^2\middle|\bX_u, \bX_v\right]\one\left\{\bX_u \rightarrow \bX_v\right\}\right] \to c.
\end{align*}
Using arguments similar to \eqref{eq:unifoneneigh} and \eqref{eq:var_term_1_trace},
\begin{align*}
    &\left|\E\left[\frac{1}{n K^2}\sum_{1 \leq u \ne v \leq n}\E\left[\left(\bZ_u^\top \bZ_v\right)^2\middle|\bX_u, \bX_v\right]\one\left\{\bX_u \rightarrow \bX_v\right\}\right] - \frac{1}{K}\E\left[\|h(\bX_1)\|_F^2\right]\right| \\
    &\quad \le \frac{1}{K}\sqrt{\E\left[\|h(\bX_1)\|_F^2\right]} \sqrt{\E\left[\|h(\bX_1) - h(\bX_{N(1)})\|_F^2\right]},
\end{align*}
where $N(1)$ is a uniformly chosen neighbor of $1$ in $G(\sX_n)$.  Using the tower property, the moment assumptions, and Lemma \ref{lemma:fXN1_bdd_fX1}, we conclude that $\E\left[\|h(\bX_1) - h(\bX_{N(1)})\|_F^2\right] = o(1)$ and $\E\left[\|h(\bX_1)\|_F^2\right] < \infty$. Thus,
\begin{align}\label{eq:Var_t1_convg}
    \E\left[\frac{1}{n K^2}\sum_{1 \leq u \ne v \leq n}\E\left[\left(\bZ_u^\top \bZ_v\right)^2\middle|\bX_u, \bX_v\right]\one\left\{\bX_u \rightarrow \bX_v\right\}\right] \to \frac{1}{K}\E\left[\|h(\bX_1)\|_F^2\right].
\end{align}
Finally, $\E\left[\|h(\bX_1)\|_F^2\right] > 0$ because if it were $0$, we would have $\E[\bZ_1\bZ_1^\top | \bX_1] = 0$ almost surely, implying $\Var[\bY_1] = 0$, contradicting the assumption that $\bY$ is not almost surely constant.


\section{Proof of Corollary \ref{cor:test_consistency}}
\label{sec:consistencypf}
The proof of asymptotic validity of the test $\phi_n$ from \eqref{eq:conditional_mean_test} follows immediately from Theorem \ref{thm:null_dist}. To prove consistency we note that under $H_1$, by \eqref{eq:nn_consistency} and law of large numbers,
\begin{align*}
    T_n\pto \E\left[\left\|\E\left[\bY\mid\bX\right] - \E\left[\bY\right]\right\|_2^2\right] > 0.
\end{align*}
The proof of universal consistency is completed once we can show that $\hat\sigma_n^2 = O_P(1)$ (see \eqref{eq:varianceH0} for definition of $\hat\sigma_n$). Using the Cauchy-Schwarz inequality and bounds on in-degree from \cite{jaffe2020randomized},
\begin{align*}
    \hat\sigma_n^2
    & \lesssim\frac{1}{nK^2}\sum_{u=1}^{n}\sum_{v\in N_{G(\sX_n)}(u)}\left(\|\bY_u-\bar \bY_n\|_2^4 + \|\bY_v - \bar \bY_n\|_2^4\right) + \frac{1}{n}\sum_{u=1}^{n}\left|\bar{\bY}^\top (\bY_u - \bar{\bY}_n) (\bY_u - \bar{\bY}_n)^\top \bar{\bY}\right|\\
    & \lesssim \frac{1}{n}\sum_{u=1}^{n}\|\bY_u - \bar\bY_n\|_2^4 + \frac{1}{n}\sum_{u=1}^{n}\left(\bar{\bY}^\top (\bY_u - \bar{\bY}_n)\right)^2\\
    &\lesssim \|\bar \bY_n\|_2^4 + \frac{1}{n}\sum_{i=1}^{n}\|\bY_i\|_2^4 + \frac{\|\bar\bY_n\|_2^2}{n}\sum_{i=1}^{n}\|\bY_i\|_2^2 = O_p(1)
\end{align*}
where the final equality follows from the moment condition on $\bY$.\hfill $\Box$

\section{Proof of Theorem \ref{thm:variable_screening}}
\label{sec:vspf}


The proof proceeds by showing that $\hat V_n$ is close to $V$ with high probability, and that this event is contained in the event ${\hat S \text{ is sufficient}}$, (as in the proofs of \cite[Theorem 5.1]{huang2022kernel} and \cite[Theorem 6.1]{azadkia2021simple}). We begin by defining a few notations. Let $s_1,\ldots, s_d$ be the indices collected (in order) by Algorithm \ref{alg:greedy_screening}, if it is allowed to run without the stopping rule with $\hat V_n$. Let $S_t = \{s_1,\ldots, s_t\}$, for $1\leq t \leq \kappa$, and define $S_t = S_\kappa$, for $t \geq \kappa+1$ and $S_0 = \emptyset$. Define $\theta = \kappa - \frac{M}{\delta}$, $\vep_1 = \frac{\delta\theta}{12\kappa}$, and $\vep_2 =\frac{\theta}{3\kappa}$. With the above notations, consider the event,
\begin{align*}
    \sE = \left\{\left|\hat V_n(S) - V(S)\right|\leq \frac{\delta\theta}{12\kappa}, \text{ for all }S\subseteq[d]\text{ such that }|S|\leq \kappa\right\} . 
\end{align*}
The following lemma shows that if $\hat S$ is sufficient, then the event $\sE$ will happen.

\begin{lemma}\label{lemma:S_suff_sub_E}
    Under the assumptions of Theorem \ref{thm:variable_screening},
    \begin{align*}
         \sE\subseteq\left\{\hat S\text{ is sufficient}\right\}.
    \end{align*}
\end{lemma}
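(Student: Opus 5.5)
The plan is to argue deterministically on the event $\sE$. Write $\vep := \frac{\delta\theta}{12\kappa}$. Since $\kappa = \lfloor M/\delta + 1\rfloor$, we have $\theta = \kappa - M/\delta \in (0,1]$, so $\vep \le \delta/12$. The argument rests on three elementary facts.

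\begin{enumerate}
\item[(F1)] Sufficiency is inherited by supersets. If $S \subseteq S'$ and $\E[Y\mid \bX] = \E[Y\mid \bX_S]$, then by the tower property
\[
\E[Y\mid \bX_{S'}] = \E\bigl[\E[Y\mid \bX]\mid \bX_{S'}\bigr] = \E[Y\mid \bX_S] = \E[Y\mid \bX],
\]
so $S'$ is sufficient.
\item[(F2)] For every $S$ we have $0 \le V(S) \le \E[Y^2] \le M$, by Jensen's inequality and assumption (b).
\item[(F3)] On $\sE$, every subset appearing in the comparisons below has size at most $\kappa$, so the bound $|\hat V_n(S) - V(S)| \le \vep$ applies to it. This holds because these are subsets of the form $S_t$ or $S_t \cup \{s\}$ with $t < \kappa$.
\end{enumerate}

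The first step is a one-step progress bound. Suppose the current set $S_t$, with $t < \kappa$, is insufficient. By assumption (a) there exists $s$ with $V(S_t \cup \{s\}) \ge V(S_t) + \delta$. The greedy choice $s^\ast$ maximizes $\hat V_n$, so on $\sE$
\[
\hat V_n(S_t \cup \{s^\ast\}) \ge \hat V_n(S_t \cup \{s\}) \ge V(S_t) + \delta - \vep \ge \hat V_n(S_t) + \delta - 2\vep > \hat V_n(S_t).
\]
Hence the stopping rule in Algorithm~\ref{alg:greedy_screening} is not triggered at $S_t$. (At $t=0$ this is automatic, since $\hat V_n(\emptyset) = -\infty$.) Transferring back to $V$, we also get
\[
V(S_{t+1}) \ge \hat V_n(S_{t+1}) - \vep \ge V(S_t) + \delta - 3\vep .
\]
The absolute value in the definition of $\hat V_n$ plays no role, since $V \ge 0$ and we only use the two-sided closeness provided by $\sE$.

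The second step shows that a sufficient set is reached within $\kappa$ steps. Suppose instead that $S_0, \dots, S_{\kappa-1}$ are all insufficient. Iterating the progress bound from $V(S_0) \ge 0$ gives
\[
V(S_\kappa) \ge \kappa(\delta - 3\vep) = \kappa\delta - \tfrac{\delta\theta}{4} = M + \theta\delta - \tfrac{\theta\delta}{4} > M,
\]
which contradicts (F2). Therefore some $S_{t^\ast}$ with $t^\ast \le \kappa$ is sufficient.

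The third step identifies the output. The algorithm outputs the set $\hat S$ at which it stops, either because the stopping inequality fails or because $\hat S = [d]$. Consider the three possible cases.
\begin{enumerate}
\item[(i)] If $\hat S = S_t$ with $t < t^\ast$, then $S_t$ is insufficient, and the first step shows the algorithm would not have stopped there. So this case is impossible.
\item[(ii)] If $\hat S = S_t$ with $t \ge t^\ast$, then $\hat S \supseteq S_{t^\ast}$ and is sufficient by (F1).
\item[(iii)] If $\hat S = [d]$, it is trivially sufficient.
\end{enumerate}
In every possible case $\hat S$ is sufficient, which proves $\sE \subseteq \{\hat S \text{ is sufficient}\}$.

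The main delicate point is the bookkeeping around the stopping rule and the truncation $S_t = S_\kappa$ for $t > \kappa$. Specifically, every comparison made before sufficiency is reached must involve only sets of size at most $\kappa$, so that $\sE$ applies. After sufficiency is reached, no estimate is needed at all, because (F1) handles every later set.
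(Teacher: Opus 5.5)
Your proof is correct and follows essentially the same route as the paper: assumption (a) combined with the closeness on $\sE$ forces the greedy step to raise $V$ by nearly $\delta$ at every insufficient stage before $\kappa$, telescoping against $V(S_\kappa)\le \E[Y^2]\le M$ rules out $\kappa$ such stages, and monotonicity of sufficiency under supersets (which you verify via the tower property, whereas the paper uses it implicitly) takes care of the output. The difference is mainly organizational (a progress lemma instead of the paper's case split on the stopping time), though your lower bound on $V(S_{t+1})$ goes through the greedy witness $S_t\cup\{s\}$ and never invokes $\sE$ at $S_0=\emptyset$; this neatly avoids the paper's telescoping through $\hat V_n(S_1)-\hat V_n(S_0)$, which is awkward given that the algorithm sets $\hat V_n(\emptyset)=-\infty$.
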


The proof of Lemma \ref{lemma:S_suff_sub_E} is given in Appendix \ref{sec:sufficientEpf}. To complete the proof of Theorem \ref{thm:variable_screening} using this lemma, it is now enough to show that $\sE$ happens with high probability. To that end consider
\begin{align*}
    W_n(S) = \frac{1}{n}\sum_{u=1}^{n}\frac{1}{K}\sum_{v\in N_{G(\sX_S)}(u)}Y_uY_v \text{ for all }S\subseteq[d].
\end{align*}
Note that by definition $|W_n(S)| = \hat V_n(S)$. By \eqref{eq:EVn_conc_rate} with $\bX$ replaced with $\bX_{S}$ we get,
\begin{align}\label{eq:EVhat_conc}
    \left|\E\left[W_n(S)\right] - V(S)\right|\lesssim_\kappa \max\left\{\frac{(\log n)^2}{\sqrt{n}},\frac{(\log n)^{2}}{n^{\frac{1}{\kappa}}}\right\} , 
\end{align}
for all $S\subseteq[d]$ such that $|S|\leq \kappa$. The following lemma shows that, for any $S$ such that $|S|\leq \kappa$,  $W_n(S)$ concentrates about its mean. The proof is given in Appendix \ref{sec:algorithmpf}.

\begin{lemma}\label{lemma:conc_Vnhat}
    Under the assumptions of Theorem \ref{thm:variable_screening}, for any $S$ with $|S|\leq \kappa$,
    \begin{align*}
        \P\left(\left|W_n(S) - \E\left[W_n(S)\right]\right|>t\right)\lesssim e^{-cnt^2}
    \end{align*} 
    for all $t\in(0,1)$ with some constant $c>0$ depending on $\kappa, \delta, M, \beta, C_1,C_2,C_3$ and $d$.
\end{lemma}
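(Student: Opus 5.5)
Fix $S$ with $|S|\le\kappa$ and write $m_S(\bx_S)=\E[Y\mid\bX_S=\bx_S]$. The plan is to separate the randomness of the covariates from that of the noise. Decompose $Y_u = m(\bX_u) + \xi_u$ with $\xi_u = Y_u - m(\bX_u)$. By assumption (d), conditionally on $\sX_n$ (the full covariate sample) the $\xi_u$ are independent, mean zero and sub-Gaussian with variance proxy of order $\sigma^2$. The function $m$ is also uniformly bounded, say $|m|\le B$. Then
\begin{align*}
W_n(S) = \frac{1}{nK}\sum_{u=1}^n\sum_{v\in N_{G(\sX_S)}(u)} \bigl(m(\bX_u)+\xi_u\bigr)\bigl(m(\bX_v)+\xi_v\bigr).
\end{align*}
We split this into three pieces: a term $A_n$ depending only on $\sX_n$, a term $L_n$ linear in the $\xi$'s, and a term $Q_n = \frac{1}{nK}\sum_u\sum_{v\in N(u)}\xi_u\xi_v$ quadratic in the $\xi$'s. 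Conditionally on $\sX_n$, the terms $L_n$ and $Q_n$ have mean zero, since $u\neq v$ and the $\xi$'s are conditionally centered. Hence $\E[W_n(S)]=\E[A_n]$. It therefore suffices to prove a bound of the form $e^{-cnt^2}$ separately for $|A_n-\E A_n|>t/3$, for $|L_n|>t/3$ and for $|Q_n|>t/3$.

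\textbf{Step 1: the term $A_n$.} Write $A_n = \frac{1}{nK}\sum_u\sum_{v\in N(u)} m(\bX_u)m(\bX_v)$, a function of the i.i.d.\ vectors $\bX_1,\ldots,\bX_n$. Replacing one $\bX_w$ changes the $K$-NN graph on $\sX_S$ only at edges incident to $w$ or to the at most $c_\kappa K$ vertices whose neighbourhoods change. This uses the degree bound of \cite[Lemma 1]{jaffe2020randomized}, which applies because $|S|\le\kappa$ and Assumption \ref{assumption:normX_continuous} holds for $\bX_S$. Consequently $A_n$ has bounded differences of size at most $C_\kappa B^2/n$, and McDiarmid's inequality gives $\P(|A_n-\E A_n|>t)\le 2e^{-cnt^2}$.

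\textbf{Step 2: the term $L_n$.} Write $L_n = \frac1n\sum_w a_w\xi_w$, where $a_w = \frac1K\bigl(\sum_{v\in N(w)}m(\bX_v) + \sum_{u:w\in N(u)} m(\bX_u)\bigr)$. The in-degree bound gives $|a_w|\le C_\kappa B$. Conditionally on $\sX_n$, $L_n$ is therefore a sub-Gaussian sum with variance proxy at most $C\sigma^2B^2/n$. Hoeffding's inequality (Chernoff applied to the conditional MGF bound in (d)) gives $\P(|L_n|>t\mid\sX_n)\le 2e^{-cnt^2}$ uniformly in the covariates, and we then integrate over $\sX_n$.

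\textbf{Step 3: the term $Q_n$ (main obstacle).} Conditionally on $\sX_n$, $Q_n = \xi^\top \bm A\xi$ is a sub-Gaussian chaos. The matrix $\bm A$ has zero diagonal and entries $\frac{1}{2nK}\bigl(\one\{u\to v\}+\one\{v\to u\}\bigr)$. Its Frobenius norm satisfies $\|\bm A\|_F^2\lesssim 1/(nK)$, and, since row sums are bounded by $C_\kappa K$, its operator norm satisfies $\|\bm A\|_{op}\lesssim C_\kappa/n$. The Hanson--Wright inequality, conditionally on $\sX_n$, then gives
\begin{align*}
\P(|Q_n|>t\mid\sX_n)\le 2\exp\Bigl(-c\min\Bigl\{\frac{t^2}{\sigma^4\|\bm A\|_F^2},\frac{t}{\sigma^2\|\bm A\|_{op}}\Bigr\}\Bigr)\le 2\exp(-c\,n\min\{t^2,t\}).
\end{align*}
For $t\in(0,1)$ the minimum equals $t^2$, which yields $e^{-cnt^2}$. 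The care needed is in verifying that the conditional MGF assumption in (d) suffices for Hanson--Wright with independent but non-identically distributed entries, which holds in its standard form. We also check that the sub-Gaussian constants are uniform in $\sX_n$.

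Combining Steps 1--3 with a union bound gives the claim, with $c$ depending on $\kappa$, $\sigma$, $B$ and the degree constant $c_\kappa$. The constant $B$ is controlled through (b) and (d).
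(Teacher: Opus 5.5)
Your proof is correct and follows the paper's argument: you split $W_n(S)$ into a signal quadratic form controlled by McDiarmid's inequality, linear terms controlled by conditional sub-Gaussian tail bounds, and a noise chaos controlled by conditional Hanson--Wright with $\|\bm A\|_F^2\lesssim 1/(nK)$ and $\|\bm A\|_{\mathrm{op}}\lesssim 1/n$ from the Jaffe et al.\ degree bound. The only difference is that you center at the full regression function $m$ and condition on $\sX_n$, whereas the paper centers at $m_S$ and conditions on $\sX_S$; your choice makes the noise sub-Gaussian directly by (d) and so avoids the paper's extra Hoeffding-lemma step for $m(\bX_u)-m_S(\bX_{S,u})$.
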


Combining \eqref{eq:EVhat_conc} and Lemma \ref{lemma:conc_Vnhat} we now conclude,
\begin{align*}
    \P\left(\left|W_n(S) - V(S)\right|\gtrsim_\kappa \max\left\{\frac{(\log n)^2}{\sqrt{n}},\frac{(\log n)^{2}}{n^{\frac{1}{\kappa}}}\right\} + t\right)\lesssim  e^{-cnt^2}, \text{ for all }S\text{ such that }|S|\leq\kappa. 
\end{align*}
Now note that $|V(S)| = V(S)$ and recall $|W_n(S)| = \hat V_n(S)$. Then using the inequality,
\begin{align*}
    \left|\hat V_n(S) - V(S)\right| \leq \left|W_n(S) - V(S)\right|
\end{align*}
we can show,
\begin{align*}
    \P\left(\left|\hat V(S) - V(S)\right|\gtrsim_\kappa \max\left\{\frac{(\log n)^2}{\sqrt{n}},\frac{(\log n)^{2}}{n^{\frac{1}{\kappa}}}\right\} + t\right)\lesssim  e^{-cnt^2}, \text{ for all }S\text{ such that }|S|\leq\kappa. 
\end{align*}
Hence, by a union bound,
\begin{align*}
    \P\left(\bigcup_{|S|\leq \kappa}\left|\hat V_n(S) - V(S)\right|\gtrsim_\kappa \max\left\{\frac{(\log n)^2}{\sqrt{n}},\frac{(\log n)^{2}}{n^{\frac{1}{\kappa}}}\right\} + t\right)\lesssim d^\kappa e^{-cnt^2} . 
\end{align*}
Then choosing $t = \delta\theta/24\kappa$, $n$ large enough such that $\max\left\{\frac{(\log n)^2}{\sqrt{n}},\frac{(\log n)^{2}}{n^{\frac{1}{\kappa}}}\right\}\leq \frac{\delta\theta}{24\kappa}$, and Lemma \ref{lemma:S_suff_sub_E} gives, 
\begin{align*}
\P\left( \hat S\text{ is sufficient} \right) \geq \P(\mathscr{E})\geq 1-Cd^\kappa e^{-cn} , 
\end{align*}
where $C$ and $c$ are constants depending on $\kappa, \delta, M, \beta, C_1,C_2,C_3$ and $d$. This completes the proof of Theorem \ref{thm:variable_screening}.

\subsection{Proof of Lemma \ref{lemma:S_suff_sub_E}} 
\label{sec:sufficientEpf}

We consider two cases depending on whether the algorithm stops before or after $\kappa$.

\begin{itemize} 

\item[(1)] The algorithm stops at $t<\kappa$. If $t+1>d$, then it immediately follows that $\hat S = S_t = S_d$ is sufficient. Now, suppose $t + 1\leq d$, then by the stopping rule $\hat V_n(S_{t+1})<\hat V_n(S_t)$. Then for any $s \in [d]\setminus S_t$,
\begin{align}\label{eq:V_t_u_j_t_diff}
    V\left(S_t\cup\{s\}\right) - V(S_t)
    &\leq \hat V_n\left(S_t\cup\{j\}\right) - \hat V_n\left(S_t\right) + \frac{\delta\theta}{6\kappa}\nonumber\\
    &\leq \hat V_n\left(S_{t+1}\right) - \hat V_n\left(S_t\right) + \frac{\delta\theta}{6\kappa}\leq \frac{\delta\theta}{6\kappa}<\delta,
\end{align}
where the first inequality follows by recalling the definition of event $\sE$, the second inequality follows from the construction of $S_{t+1}$  in Algorithm \ref{alg:greedy_screening}. Recalling the definition of $\delta$, the final inequality implies that $S_t$ is sufficient. 

\item[$(2)$] Now, suppose the algorithm stops at $t\geq \kappa$. In this case, $S_\kappa\subseteq S_t$, hence, it is enough to show that $S_\kappa$ is sufficient. By arguments similar to \eqref{eq:V_t_u_j_t_diff} it follows that that if $\hat V_n\left(S_t\right) - \hat V_n\left(S_{t-1}\right)\leq \delta\left(1-\theta/3\kappa\right)$, for some $1\leq t\leq \kappa$, then $S_{t-1}$ is sufficient, which implies $S_\kappa$ is sufficient. Thus, to complete the proof suppose $\hat V_n\left(S_t\right) - \hat V_n\left(S_{t-1}\right)>\delta\left(1-\theta/3\kappa\right)$, for all $1\leq t\leq \kappa$. Then using the definition of the set $\sE$, 
\begin{align*}
    V(S_{t}) - V(S_{t-1}) \geq \hat V_n\left(S_t\right) - \hat V_n\left(S_{t-1}\right) - \frac{\delta\theta}{6\kappa} > \delta\left(1-\theta/3\kappa\right) - \frac{\delta\theta}{6\kappa} = \delta - \frac{\delta\theta}{2\kappa}.
\end{align*}
By a considering a telescoping sum,
\begin{align*}
    V(S_\kappa) = \sum_{t=1}^{\kappa} \left(V(S_{t}) - V(S_{t-1})\right) + V(S_0)\geq \kappa\left(\delta - \frac{\delta\theta}{2\kappa}\right)>M.
\end{align*}
However, note that $V(S_\kappa)\leq \E[Y^2]\leq M$, which is a contradiction. This completes the proof of Lemma \ref{lemma:S_suff_sub_E}. \hfill $\Box$
\end{itemize}

\subsection{Proof of Lemma \ref{lemma:conc_Vnhat}} 
\label{sec:algorithmpf} 

For $1\leq u \leq n$, define $\eta_{u} = Y_u - m_S(\bX_{S, u})$, where $m_S(\bx_S) = \E\left[Y\mid\bX_S = \bx_S\right]$. Moreover let $A_S$ be the adjacency matrix of $G(\sX_S)$, the $K$-NN constructed using $\sX_S$ and take $\bA_S = \frac{1}{n K} A_S$. To keep notations compact, we will use $c$ to denote universal constants, whose may change across the proof. Then, by definition, 
\begin{align}\label{eq:hatVn_decomp}
    W_n(S) = \bY_n^\top\bA_S\bY_n 
    & = \bm{\eta}_n^\top\bA_S\bm{\eta}_n + \bm{m}_S^\top\bA_S\bm{m}_S + \bm{m}_S^\top\bA_S\bm{\eta}_n + \bm{\eta}_n^\top\bA_S\bm{m}_S.
\end{align}
    where $\bm{\eta}_n = (\eta_1,\ldots, \eta_n)^\top, \bm{m}_S = \left(m_S(\bX_{S,1}),\ldots, m_S\left(\bX_{S,n}\right)\right)$. and $\bY_n = (Y_1,\ldots, Y_n)^\top$. 
    We begin with the analysis of the first term in \eqref{eq:hatVn_decomp}. To this end, fix $t \in(0, 1) $ and observe that, for $1 \leq u \leq n$, 
\begin{align}\label{eq:sub_gaussian_eta}
    \E\left[e^{t\eta_{u}} \mid \sX_{S}\right] 
    & = \E\left[\E\left[e^{t\left(Y_u - m(\bX_u)\right) + t\left(m(\bX_u) - m_S(\bX_{S, u})\right)}\mid \sX_n\right]\mid \sX_S\right]\nonumber\\
    & = \E\left[\E\left[e^{t\left(Y_u - m(\bX_u) \right) } \mid \sX_n\right]e^{t\left(m(\bX_u) - m_S(\bX_{S, u})\right)} \mid \sX_{S}\right]    \lesssim e^{\lambda^2t^2}  ,   
\end{align}  
for some $\lambda>0$. Observe that the last inequality in \eqref{eq:sub_gaussian_eta} follows from the assumption (d) in the statement of Theorem \ref{thm:variable_screening}, recalling the uniform bound on $m$, and applying Hoeffding's Lemma. Hence, we can now apply a conditional version of the Hanson-Wright inequality  \citep{rudelson2013hanson} to get, 
\begin{align*}
    \P\left(\left|\bm{\eta}_n^\top\bA_S\bm{\eta}_n\right|>t\mid \sX_{S}\right)\leq 2 e^{-c\min\left\{\frac{t^2}{\|\bA_S\|_F^2},\frac{t}{\|\bA_S\|_{\mathrm{op}}}\right\}} ,
\end{align*}
for some $c>0$ depending on the uniform bound on $m$ and $\lambda$, where $\|\cdot\|_{F}$ is the Frobenius norm of a matrix and $\|\cdot\|_{\mathrm{op}}$ is the operator norm. In the following, we bound the two matrix norms of $\bA_S$. By a direct computation, it is easy to show that $\|\bA_S\|_F^2\lesssim \frac{1}{nK}$. For the operator norm, we recall the following inequality:
\begin{align*}
    \|\bA_S\|_{\mathrm{op}}\leq \frac{1}{nK}\sqrt{\left(\max_{1\leq j\leq n}\sum_{i=1}^{n}|a_{S, ij}|\right)\left(\max_{1\leq i\leq n}\sum_{j=1}^{n}|a_{S, ij}|\right)}  ,  
\end{align*}
with $\bA_S = \frac{1}{nK}(a_{S, ij})_{1\leq i,j\leq n}$. For a $K$-NN graph, note that the first and second terms in the above bound correspond to the maximum in-degree and maximum out-degree of $G(\sX_S)$, respectively. Applying Lemma 1 of \cite{jaffe2020randomized}, we know that the maximum degree of a vertex in a $K$-NN graph is bounded above by $c_{|S|}K$, for some constant depending on the dimension $|S|$. Hence, we immediately conclude that $\|\bA_S\|_{\mathrm{op}}\lesssim_{d} \frac{1}{n}$. Then,
\begin{align}\label{eq:eta_quad_form_conc}
    \P\left(\left|\bm{\eta}_n^\top\bA_S\bm{\eta}_n\right|>t\right)\leq 2 e^{-c\min\left\{n Kt^2, nt\right\}} . 
\end{align} 
Next, we consider the third term in \eqref{eq:hatVn_decomp}. Since $m_S$ is uniformly bounded, $\|\bA_S^\top\bm{m}_S\|_2\lesssim 1/\sqrt{n}$. Then the conditional sub-gaussianity from \eqref{eq:sub_gaussian_eta} gives, 
\begin{align*}
    \P\left(\left|\bm{m}_S^\top\bA_S\bm{\eta}_n\right|>t\mid\sX_S\right)\leq 2e^{-cnt^2}.
\end{align*}
Similarly, for the fourth term in \eqref{eq:hatVn_decomp}, 
\begin{align*}
    \P\left(\left|\bm{\eta}_n^\top\bA_S\bm{m}_S\right|>t\mid\sX_S\right)\leq 2e^{-cnt^2}.
\end{align*}
Combining the above gives, 
\begin{align}\label{eq:eta_linear_term_conc}
    \P\left(\left|\bm{m}_S^\top\bA_S\bm{\eta}_n + \bm{\eta}_n^\top\bA_S\bm{m}_S\right|>t\right)\leq 2e^{-cnt^2}
\end{align}
Finally, we consider the second term in \eqref{eq:hatVn_decomp}. For this, note that, 
\begin{align*}
    \bm{m}_S^\top\bA_S\bm{m}_S = \frac{1}{n K}\sum_{u=1}^{n}\sum_{v\in N_{G(\sX_S)}(v)}m_S(\bX_{S, u})m_S(\bX_{S, v}) . 
\end{align*}
Now, from the arguments in \cite[Appendix B.2]{chatterjee2024kernel} and McDiarmid's bounded difference inequality, 
\begin{align}\label{eq:mS_quad_form_conc}
    \P\left(\left|\bm{m}_S^\top\bA_S\bm{m}_S - \E\left[\bm{m}_S^\top\bA_S\bm{m}_S\right]\right|>t\right)\leq 2e^{-cnt^2},
\end{align}
where the constant $c$ depends on $\kappa$. To complete the proof observe that $\E[W_n(S)] = \E[\bm{m}_S^\top\bA_S\bm{m}_S]$. Then combining \eqref{eq:eta_quad_form_conc}, \eqref{eq:eta_linear_term_conc} and \eqref{eq:mS_quad_form_conc} we conclude that for all $t\in (0,1)$,
\begin{align*}
    \P\left(\left|W_n(S) - \E\left[W_n(S)\right]\right|>t\right)\lesssim e^{-cnt^2}.
\end{align*}
for some constant $c>0$ depending on $\kappa$. \hfill $\Box$

\section{Technical Lemmas } 

In this section we collect the proofs of a few technical lemmas that are used in the proofs of the main results. \color{black}
The first result is about bounds on variance of general nearest neighbor based functionals. To set up notations recall the notations from Section \ref{sec:estimation_nn}. Suppose we have observation $\{(\bY_i, \bX_i): 1\leq i\leq n\}$ from the joint distribution $P_{\bY\bX}$ on $\R^p\times\R^d$. Fix $K\geq 1$, and let $G(\sX_n)$ to be the directed $K-$nearest neighbor ($K-$NN) graph constructed using $\sX_n = \{\bX_1,\ldots, \bX_n\}$. For a function $f:\R^p\times \R^p\ra\R$ define,
\begin{align}\label{eq:Tnf}
    T_{n,f} = \frac{1}{nK}\sum_{u=1}^{n}\sum_{v=1}^nf\left(\bY_u, \bY_v\right)\one\left\{\bX_u\ra\bX_v\right\},
\end{align}
where $\bX_u\ra\bX_v$ denotes a directed edge from $\bX_u$ to $\bX_v$ in $G(\sX_n)$. In the following lemma we provide a bound on $\Var(T_{n,f})$. 
\begin{lemma}\label{lemma:VarTn_bd}
    Suppose Assumption \ref{assumption:normX_continuous} holds and let $f:\R^p\times \R^p$ be such that $\E[f^2(\bY_u, \bY_v)]<\infty$. Then,
    \begin{align*}
        \Var\left(T_{n,f}\right)\lesssim_d\frac{1}{n}\E\left[\max_{1\leq u\neq v\leq n}f^2(\bY_u, \bY_v)\right].
    \end{align*}
\end{lemma}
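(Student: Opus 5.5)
We will prove the bound with the Efron--Stein inequality, exploiting the fact that the $K$-NN graph has bounded degree. Write $Z_u=(\bY_u,\bX_u)$, so that $T_{n,f}=F(Z_1,\ldots,Z_n)$. Let $Z_w'$ be an independent copy of $Z_w$, and let $T^{(w)}_{n,f}$ denote the statistic recomputed after replacing $Z_w$ by $Z_w'$. Efron--Stein then gives
\begin{align*}
\Var(T_{n,f})\le \frac12\sum_{w=1}^n \E\big[(T_{n,f}-T^{(w)}_{n,f})^2\big] = \frac n2\,\E\big[(T_{n,f}-T^{(1)}_{n,f})^2\big],
\end{align*}
where the equality uses exchangeability. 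It therefore suffices to show that $|T_{n,f}-T^{(1)}_{n,f}|\lesssim_d \frac{1}{n}\max_{u\ne v}|f(\bY_u,\bY_v)|$. Here the maximum runs over the augmented sample that includes $Z_1'$; its expectation is comparable to the stated one with $n+1$ in place of $n$, and this difference is harmless up to constants after a symmetrization.

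To bound the difference, we count the directed edges $\bX_u\to\bX_v$ whose presence or whose summand can change when $\bX_1$ is replaced by $\bX_1'$. These are of three kinds.
\begin{enumerate}[label=(\roman*)]
\item Edges incident to vertex $1$, in either the old or the new graph. By \cite[Lemma 1]{jaffe2020randomized}, the in-degree of any vertex in a $K$-NN graph on points with a.s.\ distinct pairwise distances (guaranteed by Assumption~\ref{assumption:normX_continuous}) is at most $c_dK$. The out-degree is exactly $K$. So there are at most $(c_d+1)K$ such edges in each graph.
\item Edges $\bX_u\to\bX_v$ with $u,v\neq1$ whose status changes. The $K$-NN list of $u$ can change only if $1$ lies in its old or new neighbor list, that is, only if $u$ is an in-neighbor of $1$ in the old or the new graph. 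There are at most $2c_dK$ such vertices $u$.
\item Each such affected vertex $u$ changes its out-list in at most a bounded number of entries. Removing or inserting a single point shifts the list by one, so the symmetric difference has size at most $2$, and at most $O(K)$ entries in the worst count.
\end{enumerate}
Altogether, the number of changed terms is $O_d(K^2)$; with a sharper count it is $O_d(K)$. Each changed term contributes at most $\frac{1}{nK}\max|f|$. Hence $|T_{n,f}-T^{(1)}_{n,f}|\lesssim_{d,K}\frac1n\max_{u\neq v}|f(\bY_u,\bY_v)|$. Squaring and multiplying by $n/2$ gives the claim. The dependence on $K$ is absorbed into the constant, or kept if one is careful; the statement suppresses it with $\lesssim_d$, so we treat $K$ as fixed.

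The main obstacle is the combinatorial counting in step (ii): one must confirm that replacing a single point perturbs only $O_d(K)$ out-lists, each by $O(1)$ entries. The key geometric input is the kissing-number-type bound on in-degrees from \cite{jaffe2020randomized}, applied to both the original and the modified point sets. Ties are excluded almost surely by Assumption~\ref{assumption:normX_continuous}.

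A secondary technical point is the maximum over the augmented index set. We handle it by bounding $\max$ over $n+1$ points, and by noting that $\E[\max_{u\neq v\le n+1}f^2]\le \frac{n+1}{n-1}\,\E[\max_{u\neq v\le n}f^2]\cdot O(1)$. This follows by averaging over which point is dropped: each pair is retained in a fraction $\frac{n-1}{n+1}$ of the subsets of size $n$.
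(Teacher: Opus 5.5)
Your proposal is correct and follows essentially the paper's argument: Efron--Stein applied to a one-point replacement, with the change in $T_{n,f}$ controlled by counting the altered edges. These are the edges at the replaced vertex, plus at most one other edge from each of its at most $2c_dK$ old or new in-neighbours, and the count uses the degree bound of \cite[Lemma 1]{jaffe2020randomized}. Commit to that $O_d(K)$ count, which your step (iii) already justifies: it gives a constant depending only on $d$, which is exactly what $\lesssim_d$ asserts, so there is no $K$-dependence to absorb. Also, the paper avoids your $(n+1)$-point maximum by bounding with the sum of the maxima over the original and the resampled samples, each of which is distributed as an $n$-sample; your averaging bound $\E[\max_{u\neq v\leq n+1}f^2(\bY_u,\bY_v)]\leq\frac{n+1}{n-1}\E[\max_{u\neq v\leq n}f^2(\bY_u,\bY_v)]$ is nonetheless valid.
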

\begin{proof}
    The proof proceeds by an application of the Efron-Stein inequality \citep{efron1981jackknife}. To that end let $\{(\bY_i^\prime,\bX_i^\prime): 1\leq i\leq n\}$ be an independent copy of $\{(\bY_i, \bX_i): 1\leq i\leq n\}$ and for $1\leq s\leq n$ define,
    \begin{align}\label{eq:XY_ns}
        (\bY_{u,s}, \bX_{u,s}) = 
        \begin{cases}
            (\bY_s^\prime, \bX_s^\prime) \text{ if } u = s\\
            (\bY_u, \bX_u) \text{ if } u\neq s.
        \end{cases}
    \end{align}
    Note that we can rewrite $T_{n,f}$ from \eqref{eq:Tnf} as,
    \begin{align*}
        T_{n,f} = \frac{1}{nK}\sum_{(u,v)\in E(G(\sX_n))}f\left(\bY_u, \bY_v\right)
    \end{align*}
    where $E(G(\sX_n)) = \{(u,v): \bX_u\ra\bX_v\}$ is the set of directed edges in $G(\sX_n)$. With the definition from \eqref{eq:XY_ns} we can now define,
    \begin{align*}
        T_{n,s,f} = \frac{1}{nK}\sum_{(u,v)\in E(G(\sX_{n,s}))}f(\bY_{u,s}, \bY_{v,s})
    \end{align*}
    where $G(\sX_{n,s})$ is the directed $K-$NN constructed using samples $\sX_{n,s} = \{\bX_{1,s},\ldots, \bX_{n,s}\}$ and $E(G(\sX_{n,s}))$ is defined analogous to $E(G(\sX_{n}))$. The Efron-Stein inequality now implies,
    \begin{align}\label{eq:efron_stein_f}
        \Var(T_{n,f})\leq \sum_{s=1}^{n}\E\left[\left(T_{n,f} - T_{n,s,f}\right)^2\right].
    \end{align}
    Fix $1\leq s\leq n$. To isolate the terms that cancel in the difference $T_{n,f} - T_{n,s,f}$ define,
    \begin{align*}
        \mathcal{C}_{n,s} = \left\{(u,v)\in E(G(\sX_n))\bigcap E(G(\sX_{n,s})): u\neq s, v\neq s\right\}
    \end{align*}
    to be the collection of edges common to $G(\sX_{n})$ and $G(\sX_{n,s})$ such that none of the end points are the index $s$. Let $\mathcal D_{n} = E(G(\sX_n))\setminus \mathcal C_{n,s}$ and $\mathcal D_{n,s} = E(G(\sX_{n,s}))\setminus \mathcal C_{n,s}$ to be the remaining edges in each graph. Then by definition,
    \begin{align*}
        T_{n,f} - T_{n,s,f} = \frac{1}{nK}\left[\sum_{(u,v)\in \mathcal D_{n}}f(\bY_u, \bY_v) - \sum_{(u,v)\in \mathcal D_{n,s}}f(\bY_{u,s}, \bY_{v,s})\right].
    \end{align*}
    Hence,
    \begin{align}\label{eq:Tnf_diff_bd_1}
        \left|T_{n,f} - T_{n,s,f}\right|\leq \frac{\max\{\left|\mathcal D_{n}\right|, \left|\mathcal D_{n,s}\right|\}}{nK}\left[\max_{1\leq u\neq v\leq n}\left|f(\bY_u, \bY_v)\right|+ \max_{1\leq u\neq v\leq n}\left|f(\bY_{u,s}, \bY_{v,s})\right|\right].
    \end{align}
    To complete the proof in the following we bound the counts $|\mathcal D_n|$ and $|\mathcal D_{n,s}|$. By definition,
    \begin{align*}
        |\mathcal D_{n}|\leq \left|\left\{(u,v)\in E(G(\sX_n)):s\in \{u,v\}\right\}\right| + \sum_{u\neq s}\left|\{v:(u,v)\in \mathcal D_n\}\right|.
    \end{align*}
    To further bound the second term consider $\mathcal A_{n,s} = \{u: (u,s)\in E(G(\sX_n))\bigcup E(G(\sX_{n,s}))\}$ to be the collection of all indices $u$ such that there is an edge to index $s$ in either $G(\sX_{n})$ or $G(\sX_{n,s})$. Now note that by construction if $u\not\in \mathcal A_{n,s}$, then all edges from $u$ belong to $\mathcal C_{n,s}$. However if $u\in \mathcal A_{n,s}$, then at most one edge from $u$ lies in $\mathcal D_n$ and $\mathcal D_{n,s}$. Hence,
    \begin{align*}
        |\mathcal D_{n}|\leq \left|\left\{(u,v)\in E(G(\sX_n)):s\in \{u,v\}\right\}\right| + \sum_{u\neq s}\one\{u\in \mathcal A_{n,s}\}\lesssim_d K,
    \end{align*}
    where the final bound follows by recalling that maximum degree of any vertex in a $K-$NN is bounded above by $c_dK$ for some constant $c_d$ dependent on the dimension $d$ (see Lemma 1 in \cite{jaffe2020randomized}). Similarly, we can show $|\mathcal D_{n,s}|\lesssim_d K$. Subsequently, recalling the bound on $T_{n,f} - T_{n,s,f}$ from \eqref{eq:Tnf_diff_bd_1} we get,
    \begin{align*}
        \left|T_{n,f} - T_{n,s,f}\right|\lesssim_d\frac{1}{n}\left[\max_{1\leq u\neq v\leq n}\left|f(\bY_u, \bY_v)\right|+ \max_{1\leq u\neq v\leq n}\left|f(\bY_{u,s}, \bY_{v,s})\right|\right].
    \end{align*}
    Substituting the upper bound in \eqref{eq:efron_stein_f} implies,
    \begin{align*}
        \Var(T_{n,f})\lesssim_d\frac{1}{n}\E\left[\max_{1\leq u\neq v\leq n}f^2(\bY_u, \bY_v)\right] ,  
    \end{align*}
    which completes the proof.
\end{proof}

The following lemma, which provides a bound on $\mathbb{E}[f(X_{N(1)})]$ for measurable functions $f$, where $N(1)$ is a uniformly sampled vertex in $G(\sX_n)$. This appears in \cite{deb2020measuring} for more general class of geometric graphs. We include it here for completeness.

\begin{lemma}\label{lemma:fXN1_bdd_fX1}
    Let $f:\R^d\ra[0,\infty)$ be any measurable function and suppose Assumption \ref{assumption:normX_continuous} holds. Then for $\sX_n = \{\bX_1,\ldots, \bX_n\}$ generated independently from $P_{\bX}$ let $G(\sX_n)$ be the directed $K-$nearest neighbor graph constructed using $\sX_n$. Let $N(1)$ be a vertex uniformly chosen from the neighbors of index $1$ in the graph $G(\sX_n)$. Then,
    \begin{align*}
        \E\left[f(X_{N(1)})\right]\lesssim_{d} \E\left[f(X_1)\right].
    \end{align*}
\end{lemma}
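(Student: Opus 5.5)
We need to prove Lemma \ref{lemma:fXN1_bdd_fX1}: $\E[f(\bX_{N(1)})]\lesssim_d \E[f(\bX_1)]$. The natural approach is to trade the random neighbour for an average over edges and then use exchangeability together with the bounded in-degree of $K$-NN graphs from \cite[Lemma 1]{jaffe2020randomized}.

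First, write the random neighbour as an average. By Assumption \ref{assumption:normX_continuous}, ties occur with probability zero, so almost surely vertex $1$ has exactly $K$ out-neighbours. Averaging over the uniform choice of $N(1)$ gives
\begin{align*}
\E\left[f(\bX_{N(1)})\right] = \E\left[\frac{1}{K}\sum_{v=1}^{n} f(\bX_v)\one\{\bX_1\ra\bX_v\}\right].
\end{align*}

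Second, symmetrize over the source vertex. The joint law of $(\sX_n, G(\sX_n))$ is invariant under relabelling of the indices. Hence the right-hand side equals
\begin{align*}
\frac{1}{nK}\,\E\left[\sum_{u=1}^n\sum_{v=1}^n f(\bX_v)\one\{\bX_u\ra\bX_v\}\right] = \frac{1}{nK}\,\E\left[\sum_{v=1}^n f(\bX_v)\,\bar d_v\right],
\end{align*}
where $\bar d_v$ is the in-degree of $v$, as in Theorem \ref{thm:null_dist}.

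Third, bound the in-degree deterministically. By \cite[Lemma 1]{jaffe2020randomized}, $\bar d_v\le c_d K$ almost surely, since in $\R^d$ a point can be among the $K$ nearest neighbours of at most $c_dK$ other points. Because $f\ge 0$, this yields
\begin{align*}
\E\left[f(\bX_{N(1)})\right]\le \frac{c_d K}{nK}\sum_{v=1}^n\E[f(\bX_v)] = c_d\,\E[f(\bX_1)].
\end{align*}
The constant depends only on $d$, as claimed.

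The only nontrivial input is the deterministic in-degree bound, a kissing-number/cone-covering argument. I will invoke it as stated rather than reprove it. Sketch for completeness: partition $\R^d$ into $c_d$ cones of angular width less than $\pi/3$ around $\bX_v$. In each cone, at most $K$ points can have $\bX_v$ among their $K$ nearest neighbours, because any other point in the cone is closer to the cone's farther points than $\bX_v$ is.

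Two measure-theoretic points need care. First, the almost-sure absence of ties, which is exactly Assumption \ref{assumption:normX_continuous}. Second, nonnegativity of $f$, which justifies Tonelli and the interchange of sums and expectations even when $\E[f(\bX_1)]=\infty$; in that case the inequality is trivial.
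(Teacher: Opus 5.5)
Your proposal is correct and follows essentially the same route as the paper's proof. You rewrite the uniform neighbour as $\frac{1}{K}\sum_{v} f(\bX_v)\one\{\bX_1\ra\bX_v\}$, symmetrize over the source vertex by exchangeability, and bound the in-degree by $c_dK$ via Lemma 1 of \cite{jaffe2020randomized}; your remarks on Tonelli for nonnegative $f$ and the cone-covering sketch are harmless extras.
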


\begin{proof}
    Note that by definition
    \begin{align*}
        \E\left[f(X_{N(1)})\right] 
        & = \E\left[\frac{1}{K}\sum_{v=1}^{n}f(X_v)\one\left\{\bX_1\ra\bX_v\right\}\right]\\
        & = \E\left[\frac{1}{nK}\sum_{u=1}^{n}\sum_{v=1}^{n}f(X_v)\one\{\bX_u\ra\bX_v\}\right]
    \end{align*}
    where the last equality follows by exchangeability. Now recalling that maximum degree of a vertex in $K-$NN is bounded above by $c_dK$ for some constant $c_d$ depending on the dimension $d$ we get,
    \begin{align*}
        \E\left[f(X_{N(1)})\right] \lesssim_d\E\left[\frac{1}{n}\sum_{v=1}^{n}f(X_v)\right] = \E[f(X_1)].
    \end{align*}
This completes the proof of the result. 
\end{proof}

\color{black}

\begin{lemma}\label{lemma:replace_lemma}
    Consider sequences of random variables $M_n, R_n,\sigma_n,t_n$ such that $M_n = R_n + o_P(1)$, $\sigma_n\geq t_n$, and $t_n = c + o_P(1)$, where $c>0$ is a constant. Suppose 
    \begin{align}\label{eq:R_n_kol_convg}
        \sup_{z\in \R}\left|\P\left(\frac{R_n}{\sigma_n}\leq z\right) - \Phi(z)\right|\ra 0 , 
    \end{align}
as $n\ra \infty$. Then,
    \begin{align*}
        \sup_{z\in \R}\left|\P\left(\frac{M_n}{\sigma_n}\leq z\right) - \Phi(z)\right|\ra 0 , 
    \end{align*} 
as $n\ra \infty$. 
\end{lemma}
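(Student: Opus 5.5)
We compare $M_n/\sigma_n$ with $R_n/\sigma_n$ directly. Set $\Delta_n := (M_n - R_n)/\sigma_n$. The aim is to show $\Delta_n = o_P(1)$ and then transfer the Kolmogorov-distance convergence in \eqref{eq:R_n_kol_convg} from $R_n/\sigma_n$ to $R_n/\sigma_n + \Delta_n$. The transfer uses the uniform continuity of $\Phi$.

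\textbf{Step 1: $\Delta_n = o_P(1)$.} Fix $\epsilon>0$ and consider the event $A_n := \{t_n \ge c/2\}$. Since $t_n = c + o_P(1)$ and $c>0$, we have $\P(A_n^c)\to 0$. On $A_n$, the bound $\sigma_n \ge t_n$ gives $\sigma_n \ge c/2 > 0$, so
\[
|\Delta_n| \le \frac{2}{c}\,|M_n - R_n|.
\]
Therefore
\[
\P(|\Delta_n|>\epsilon) \le \P\bigl(|M_n-R_n| > c\epsilon/2\bigr) + \P(A_n^c) \to 0.
\]
This is the only place where the lower bound on $\sigma_n$ enters. It prevents a small denominator from inflating the $o_P(1)$ error, and it is really the only nontrivial point of the argument.

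\textbf{Step 2: sandwich.} Fix $\epsilon>0$ and $z\in\R$. By inclusion of events,
\begin{align*}
\P\left(\frac{M_n}{\sigma_n}\le z\right) &\le \P\left(\frac{R_n}{\sigma_n}\le z+\epsilon\right) + \P(|\Delta_n|>\epsilon),\\
\P\left(\frac{M_n}{\sigma_n}\le z\right) &\ge \P\left(\frac{R_n}{\sigma_n}\le z-\epsilon\right) - \P(|\Delta_n|>\epsilon).
\end{align*}
Subtract $\Phi(z)$ from both sides and take the supremum over $z$. This yields
\[
\sup_{z}\left|\P\left(\frac{M_n}{\sigma_n}\le z\right)-\Phi(z)\right| \le \sup_{w}\left|\P\left(\frac{R_n}{\sigma_n}\le w\right)-\Phi(w)\right| + \sup_z|\Phi(z\pm\epsilon)-\Phi(z)| + \P(|\Delta_n|>\epsilon).
\]

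\textbf{Step 3: conclusion.} Bound the three terms on the right as follows.
\begin{itemize}
\item The first term tends to $0$ by \eqref{eq:R_n_kol_convg}.
\item The second term is at most $\epsilon/\sqrt{2\pi}$, since $\Phi$ is Lipschitz with constant $\phi(0)=1/\sqrt{2\pi}$.
\item The third term tends to $0$ by Step 1.
\end{itemize}
Hence the $\limsup$ of the left side is at most $\epsilon/\sqrt{2\pi}$. Since $\epsilon>0$ was arbitrary, the left side tends to $0$, which proves the lemma.
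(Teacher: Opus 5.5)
Your proof is correct and takes essentially the same route as the paper. Both arguments use $\sigma_n \ge t_n \ge c/2$ with probability tending to one to control the perturbation $(M_n-R_n)/\sigma_n$, sandwich $\P(M_n/\sigma_n\le z)$ between shifted probabilities for $R_n/\sigma_n$, and finish with the Lipschitz bound on $\Phi$. Your version is slightly cleaner because it isolates $\Delta_n=o_P(1)$ first and makes explicit that the error term $\P(|\Delta_n|>\epsilon)$ does not depend on $z$, which is what justifies taking the supremum.
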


\begin{proof} By definition, $M_n = R_n + s_n$, where $s_n = o_P(1)$. Fix $z\in \R$ and $\vep>0$. By assumption of the lemma note that $\P(\sigma_n\leq c/2)\ra 0$. Note that,
    \begin{align*}
        \P\left(\frac{M_n}{\sigma_n}\leq z\right)  \leq \P\left(\frac{R_n}{\sigma_n}\leq z + \frac{2\vep}{c}\right) + o(1)
        &\leq \Phi\left(z + \frac{2\vep}{c}\right) + \gamma_n + o(1)\\
        &\leq \Phi(z) + \frac{2\vep}{\sqrt{2\pi}c} + \gamma_n + o(1),
    \end{align*}
    where the first inequality follows from the convergence in \eqref{eq:R_n_kol_convg} with,
    \begin{align*}
        \gamma_n = \sup_{z\in \R}\left|\P\left(\frac{R_n}{\sigma_n}\leq z\right) - \Phi(z)\right|
    \end{align*}
    and the second inequality follows from the Lipschitz property of $\Phi$. Similarly, 
    \begin{align*}
        \P\left(\frac{M_n}{\sigma_n}\leq z\right)  \geq \P\left(\frac{R_n}{\sigma_n}\leq z - \frac{2\vep}{c}\right) + o(1)
        &\geq \Phi\left(z - \frac{2\vep}{c}\right) - \gamma_n + o(1)\\
        &\geq \Phi(z) - \frac{2\vep}{\sqrt{2\pi}c} - \gamma_n + o(1).
    \end{align*}
    Hence,
    \begin{align*}
        \sup_{z\in \R}\left|\P\left(\frac{M_n}{\sigma_n}\leq z\right) - \Phi(z)\right|\leq \frac{2\vep}{\sqrt{2\pi}c} + \gamma_n + o(1)\leq \frac{2\vep}{\sqrt{2\pi}c} + \gamma_n + o(1), 
    \end{align*}
    The proof is now completed by recalling that $\gamma_n = o(1)$ and $\vep>0$ is chosen arbitrarily.
\end{proof}

\begin{lemma}\label{lemma:ratio_Op} 
    Let $A_n, B_n$ be sequences of real valued random variables and $a, b \in \R$ be constants, with $b \neq 0$. Suppose that there exists deterministic sequences $\alpha_n, \beta_n\ra\infty$, as $n\ra\infty$, such that,
    \begin{align*}
        \alpha_n|A_n-a| = O_P(1) \quad \text{ and } \quad \beta_n|B_n-b| = O_P(1).
    \end{align*}
    Then,
    \begin{align*}
        \min\{\alpha_n, \beta_n\}\left|\frac{A_n}{B_n} - \frac{a}{b}\right| = O_P(1).
    \end{align*}
\end{lemma}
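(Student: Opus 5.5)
We prove Lemma~\ref{lemma:ratio_Op} by the standard algebraic decomposition of a ratio, followed by stochastic boundedness of the denominator away from zero. The identity we will use is
\begin{align*}
\frac{A_n}{B_n} - \frac{a}{b} = \frac{b(A_n - a) - a(B_n - b)}{b B_n},
\end{align*}
which holds on the event $\{B_n \neq 0\}$. Set $\gamma_n = \min\{\alpha_n,\beta_n\}$, so that $\gamma_n \le \alpha_n$ and $\gamma_n \le \beta_n$. This gives
\begin{align*}
\gamma_n\left|\frac{A_n}{B_n} - \frac{a}{b}\right| \le \frac{|b|\,\alpha_n|A_n-a| + |a|\,\beta_n|B_n-b|}{|b|\,|B_n|}.
\end{align*}
By hypothesis, the numerator is a sum of two $O_P(1)$ terms with deterministic coefficients. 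It is therefore $O_P(1)$.

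The only real step is to control the denominator. Since $\beta_n \to \infty$ and $\beta_n|B_n - b| = O_P(1)$, we get $|B_n - b| = O_P(1/\beta_n) = o_P(1)$, so $B_n \pto b$. Consequently the event $\mathcal{E}_n = \{|B_n| \ge |b|/2\}$ has $\P(\mathcal{E}_n) \to 1$, and on $\mathcal{E}_n$ the denominator is at least $b^2/2 > 0$. Outside $\mathcal{E}_n$, where $B_n$ may even vanish so that the ratio is undefined, the probability tends to zero; we assign the ratio any value there, as is customary for $O_P$ statements.

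To conclude, fix $\vep > 0$. Choose $M$ such that for all large $n$ the numerator exceeds $M$ with probability at most $\vep/2$, and such that $\P(\mathcal{E}_n^c) \le \vep/2$. Then
\begin{align*}
\P\left(\gamma_n\left|\frac{A_n}{B_n} - \frac{a}{b}\right| > \frac{2M}{b^2}\right) \le \vep,
\end{align*}
which is the claim. There is no substantive obstacle here. The only point needing care is that $B_n$ can be zero with vanishing probability, and restricting to $\mathcal{E}_n$ handles this.
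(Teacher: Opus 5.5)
Your proof is correct and follows the same route as the paper: the identity $\frac{A_n}{B_n}-\frac{a}{b}=\frac{b(A_n-a)-a(B_n-b)}{bB_n}$, with $\min\{\alpha_n,\beta_n\}$ bounded by $\alpha_n$ and $\beta_n$ in the two numerator terms, and $B_n \pto b$ keeping the denominator away from zero. Your version is slightly more careful than the paper's. You write $|b|$ where the paper's proof tacitly treats $b>0$ (the statement only assumes $b\neq 0$), and you explicitly handle the vanishing-probability event on which $B_n$ may be zero.
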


\begin{proof} The proof by follows observing the following: 
    \begin{align*}
        \left|\frac{A_n}{B_n} - \frac{a}{b}\right| \leq \frac{b \left|A_n - a \right| + |a| \left| B_n - b \right|}{b |B_n|} = O_P\left(\frac{1}{\min\{\alpha_n, \beta_n\}}\right) , 
    \end{align*}
    where the last inequality follows from the rate assumptions on $A_n$ and $B_n$ and noticing that $B_n\pto b >0$.
\end{proof}

\end{document}